\numberwithin{equation}{section}		
\numberwithin{figure}{section}			
\numberwithin{table}{section}				
\newtheorem{defi}{Definition}[section]
\newtheorem{lem}{Lemma}[section]
\newtheorem{thm}{Theorem}[section]
\newtheorem{remark}{Remark}[section]
\newtheorem{prop}{Proposition}[section]
\newcommand{\babs}[1]{\Big|{#1}\Big|}
\newcommand{\btwonorm}[1]{\Big|\Big|{#1}\Big|\Big|_2}
\newcommand{\vect}[1]{\boldsymbol{\mathbf{#1}}}
\title{Dynamic super-resolution in particle tracking problems \thanks{\footnotesize This work was supported in part by the Swiss National Science Foundation grant number
200021--200307.}}
	\author{Ping Liu\thanks{\footnotesize Department of Mathematics, ETH Z\"urich, R\"amistrasse 101, CH-8092 Z\"urich, Switzerland (ping.liu@sam.math.ethz.ch, habib.ammari@math.ethz.ch).}  \and Habib Ammari\footnotemark[2]}
\date{}
\begin{document}

	\maketitle	
	
\begin{abstract}
Particle tracking in a live cell environment is concerned with reconstructing the trajectories, locations, or velocities of the targeting particles, which holds the promise of revealing important new biological insights. The standard approach of particle tracking consists of two steps: first reconstructing statically the source locations in each time step, and second applying tracking techniques to obtain the trajectories and velocities. In contrast to the standard approach, the dynamic reconstruction seeks to simultaneously recover the source locations and velocities from all frames, which enjoys certain advantages. In this paper, we provide a rigorous mathematical analysis for the resolution limit of reconstructing source number, locations, and velocities by general dynamical reconstruction in particle tracking problems, by which we demonstrate the possibility of achieving  super-resolution for the dynamic reconstruction. We show that when the location-velocity pairs of the particles are separated beyond certain distances (the resolution limits), the number of particles and the location-velocity pair can be stably recovered. The resolution limits are related to the cut-off frequency of the imaging system, signal-to-noise ratio, and the sparsity of the source. By these estimates we also derive a stability result for a sparsity-promoting dynamic reconstruction. In addition, we further show that the reconstruction of velocities has a better resolution limit which improves constantly as the particles moving. This result is derived by a crucial observation that the inherent cut-off frequency for the velocity recovery can be viewed as the total observation time multiplies the cut-off frequency of the imaging system, which may lead to a better resolution limit as compared to the one for each diffraction-limited frame. It is anticipated that this crucial observation can inspire many new reconstruction algorithms that significantly improve the resolution of particle tracking in practice. In addition, we propose super-resolution algorithms for recovering the number and values of the velocities in the tracking problem and demonstrate theoretically or numerically their super-resolution capability. 
\end{abstract}

\section{Introduction}
The problem of particle tracking is of particular importance in many modern imaging experiments such as visualization of cell migration and subcellular dynamics \cite{meijering2009tracking, kwik2003membrane, zenisek2000transport, ram2008high}. It is also utilized in ultrafast ultrasound localization microscopy to super-resolve structures of vascular and velocities of blood flow \cite{errico2015ultrafast, demene2021transcranial}.  The conventional tracking methods in above applications generally consist of two main image processing steps: 1) particle detection, and 2) particle linking. The former refers to determining the locations of particles in each frame and the later refers to linking particles between consecutive frames. However, this static reconstruction strategy suffers from two major issues. The first key issue is that the particles are not well-separated in some of the frames which results in unstable recovery of the source locations or abandon of a large amount of frames where particles are closely-spaced. The second issue is the ambiguities and heavy computational burden in linking particles with high densities and velocities \cite{meijering2008time, schiffmann2006open, meijering2009tracking}. To fix these issues, in \cite{alberti2019dynamic}, the authors consider a model where the particles move with constant velocities (in a short period) and propose a new dynamic reconstruction method to simultaneously recover the source locations and velocities from measurements of all the frames. However, theoretically the algorithm still requires that the point sources should be separated beyond Rayleigh limit in each frame for a stable recovery, which hampered its application in the case when sources are closely-spaced in some of the frames. In this paper, in order to understand the super-resolution capability of the tracking problem, we aim to analyze the resolution limit for super-resolving the source number, location-velocity pairs, and velocities in the tracking problem . We also provide super-resolution algorithms for reconstructing the velocities of moving particles which are able to super-resolve velocities even when static reconstruction completely fails.

\subsection{Our model and contributions}
Let us first introduce our model and main contributions. We consider the same model for reconstructing dynamic point sources (or particles) as the one in \cite{alberti2019dynamic}. We remark that we will use bold symbols for vectors, matrices and some functions, and ordinary ones for scalar values in our models and discussions throughout the paper. To be specific, we consider a cluster of point sources that are moving constantly, represented by a time-varying measures $\mu_t$, where $t\in [0,\eta]$ with $\eta>0$ being the observation window. Since $\eta$ is expected to be small, the dynamics of each point can be linearly approximated. Thus each point is modeled as a particle moving with a constant velocity:
\[
\mu_t = \sum_{j=1}^n a_j \delta_{\vect y_j + \vect v_j \tau t}, \quad t=0,\cdots, T, 
\]
where $\vect y_j\in\mathbb R^d$ are the initial source locations and $\vect v_j\in \mathbb R^d$ are the velocities. The $\tau t, t=0,\cdots,T$ is the time step in $[0, \eta]$ at which we observe the samples. We remark that in the applications of measuring the velocity of blood flow \cite{errico2015ultrafast,demene2021transcranial} and particle tracking velocimetry \cite{dabiri2020particle}, we may have a larger observation window $\eta$ and our model is much more applicable. The measurement at each time step is a noisy Fourier data in a bounded domain, 
\begin{equation}\label{equ:modelsetting1}
\vect{Y}_t(\vect \omega) = \sum_{j=1}^{n} a_j e^{i(\vect y_j^\top +t\tau \vect v_j^\top)\vect \omega} +\vect{W}_t(\vect \omega), \ \vect \omega \in \mathbb R^d,  ||\vect \omega||_2 \leq \Omega, \ t=0, \cdots,T,
\end{equation}
where $\top$ denotes the transpose, $\Omega$ is the cut-off frequency of the imaging system, $\vect W_t$ is an additive noise with $|\vect{W}_t(\vect \omega)|<\sigma$ and $\sigma$ being the noise level. We denote 
\[
m_{\min}:= \min_{j=1, \cdots, n}|a_j|.
\]
For the tracking problem in $\mathbb R^d$, we are interested in reconstructing the amplitude, location, and velocity pair set 
\[
\Big\{(a_j, \vect y_j, \tau \vect v_j)\Big \}_{j=1}^n,
\]
from the measurements $\vect Y_t$'s. 

In \cite{alberti2019dynamic}, the authors proposed a fully dynamical method to reconstruct simultaneously the source locations and velocities. They cast the inverse problem as a total variation optimization problem satisfying the measurement constraints and showed that under certain conditions, the optimization is able to reconstruct the amplitudes, locations and velocities of source from noiseless measurements with infinite precision. However, for the noisy case, theoretically, to stably recover the source, the point sources are required to be separated beyond Rayleigh limit in each frame, which is definitely not super-resolution. Thus the superiority of the dynamic reconstruction over the static recovery in the tracking problem is still unclear. Since when the spikes are separated beyond Rayleigh limit, the static reconstruction is also able to recover stably all the $\mu_t$'s.

In this paper, in contrast to \cite{alberti2019dynamic}, we consider the locations and velocities of point sources to be closely-spaced in sub-Rayleigh regimes respectively, and explore the ability of super-resolution of the dynamic reconstruction in the tracking problem. More precisely, we analyze the resolution limit for recovering the number, locations, and velocities of the source from the measurement constraint. We prove that when 
\[
\min_{j\neq p} \btwonorm{(\vect y_j^\top,\tau \vect v_j^\top)-(\vect y_p^\top, \tau \vect v_p^\top)} \geq \frac{C_1(d, n)}{\Omega}\Big(\frac{\sigma}{m_{\min}}\Big)^{\frac{1}{2n-2}}, 
\]
where $C_1(d,n)$ is an explicit constant only related to space dimensionality $d$ and source number $n$ (so do the following constants $C_2(d,n), C_3(d,n), C_4(d,n)$), then we can recover the source number $n$ in the dynamic reconstruction. When 
\[
\min_{j\neq p} \btwonorm{(\vect y_j^\top,\tau \vect v_j^\top)-(\vect y_p^\top, \tau \vect v_p^\top)} \geq \frac{C_2(d, n)}{\Omega}\Big(\frac{\sigma}{m_{\min}}\Big)^{\frac{1}{2n-1}}, 
\]
then we can stably recover the source locations and velocities. The estimate demonstrates that the dynamic reconstruction could resolve the location-velocity pairs in sub-Rayleigh regimes for sufficiently large signal-to-noise ratio. It also indicates that the dynamic reconstruction could recover source locations and velocities even when static reconstruction fails in some of the frames, which demonstrates its superiority. 

Moreover, by above results, we also show that any algorithms targeting at the sparsest solution satisfying measurement constraint could achieve the resolution $\frac{C_2(d, n)}{\Omega}\Big(\frac{\sigma}{m_{\min}}\Big)^{\frac{1}{2n-1}}$, which demonstrates the favorable performance and the superiority of sparsity-promoting dynamic reconstruction as compared to the static reconstruction methods. This is consistent with the numerical results in \cite{alberti2019dynamic} for comparison between reconstructions by TV minimization and static method.

In some applications, the recovery of the velocities of the particles is more interesting than the location recovery. For instance, in the ultrafast ultrasound localization microscopy based non-invasive super-resolution vascular imaging \cite{errico2015ultrafast, demene2021transcranial}, the velocities of the blood flow with a large dynamic range can be extracted by reconstructing the velocities of microbubbles inside. We also note that the particle tracking is a common method in velocimetry \cite{dabiri2020particle} which seeks to measure accurate velocities of fluid. We thus further consider the reconstruction of individual velocities. By some projection methods, we demonstrate super-resolution for resolving the number of point sources and value of velocities. More precisely, when the minimum separation distance of velocities satisfies
\[
\min_{j\neq p}\btwonorm{\tau \vect v_j-\tau \vect v_p}\geq \frac{C_3(d,n)}{T\Omega}\Big(\frac{\sigma}{m_{\min}}\Big)^{\frac{1}{2n-2}},
\]
we can stably recover the number of point sources. When
\[
\min_{j\neq p}\btwonorm{\tau \vect v_j-\tau \vect v_p}\geq \frac{C_4(d,n)}{T\Omega}\Big(\frac{\sigma}{m_{\min}}\Big)^{\frac{1}{2n-1}},
\]
we can stably reconstruct all the velocities. The results are derived by a crucial observation that the inherent cut-off frequency for recovering velocities $\tau \vect v_j$'s can be viewed as $T\Omega$, by which we obtain better resolution limits. We hope this observation could inspire new algorithms for super-resolving velocities in the tracking problem. For fixed dimensionality $d$ and source number $n$, all the above theoretical results for the resolution limits are optimal in the sense that at the same separation order there are counter examples in the worst-case scenario.

We also propose super-resolution algorithms for resolving the source number and velocities respectively. They are demonstrated theoretically or numerically to achieve the same order of the resolution limits that derived above. We also demonstrate numerically their superiority over the static reconstruction. 


\subsection{Related works}
The main motivations of our work are ultrasound localization microscopy (ULM) \cite{couture2011microbubble, desailly2013sono,errico2015ultrafast,demene2021transcranial, couture2018ultrasound} and the dynamic reconstruction algorithm proposed in \cite{alberti2019dynamic}. The concept of ULM is drawing increasing research interest in recent years, due to its ability to solve the trade-off between spatial resolution, penetration depth, and acquisition time when incorporating with ultrasound contrast agents and ultrafast imaging \cite{errico2015ultrafast,demene2021transcranial}, by which the conventional ultrasound imaging and other non-invasive medical imaging methods for living organs are usually limited. Specifically, based on the diffraction theory, the resolution of conventional ultrasound imaging is limited by half of the wavelength of the ultrasound waves which is of the order of $300\mu m$. Thus the blood vessels separated below $300 \mu m$ is unable to be distinguished by the conventional ultrasound imaging. But the ultrasound localization microscopy focus on localizing some targeting particles (such as microbubbles) inside the blood vessels, giving rise to a particle tracking problem \cite{errico2015ultrafast} where both the locations and velocities are of interest. By this innovative method, tenfold increase in resolution is achieved as compared to conventional ultrasound imaging. Moreover, the ultrafast ultrasound imaging techniques also significantly reduce the acquisition time of ULM imaging \cite{errico2015ultrafast}. 

However, the tracking approach in \cite{errico2015ultrafast} relies on static reconstruction of point sources in each frame which suffers from some drawbacks. Among them the key issue is that a lot of data are discarded whenever static reconstruction cannot be performed as particles being too close. To solve these issues, in \cite{alberti2019dynamic} the authors proposed a new method for the tracking problem based on a fully dynamical inversion scheme, in which the locations and velocities of the point sources are reconstructed simultaneously. To be specific, assuming the targeting particles are moving at constant velocities and considering similar measurement constraint as (\ref{equ:modelsetting1}), they reconstruct source locations and velocities by a total variation optimization problem. They majorly demonstrate that, for the noiseless measurement, if in more than three time steps the point sources are well-separated to satisfy certain static dual certificates and the configuration does not admit any "ghost particles" defined there, then the TV optimization successfully recover the source locations and velocities. However, for recovering from the noisy measurement, the point sources are required to be separated by more than Rayleigh limit in each frame to ensure a stable recovery of the locations and velocities. Although the super-resolution ability of the algorithm isn't exhibited in the above theoretical result, but it is confirmed by the numerical experiments there. Motivated by this, in this paper we provide a rigorous analysis for the resolution limit of the dynamic reconstruction in the tracking problems and theoretically demonstrate the advantages of dynamic reconstruction over the static reconstruction methods.

On the other hand, the resolution analysis in this paper also follows the line of the authors' previous researches on exploring the super-resolution capability for different imaging configurations \cite{liu2021mathematicaloned, liu2021mathematicalhighd, liu2021theorylse, liu2022mathematical}. Specifically, to analyze the resolution for recovering multiple point sources from a single measurement, in \cite{liu2021mathematicaloned, liu2021mathematicalhighd, liu2021theorylse} the authors defined "computational resolution limits" which characterize the minimum required distance between point sources so that their number and locations can be stably resolved under certain noise level. Based on a new approximation theory in a so-called Vandermonde space, they derived bounds for the resolution limits of one- and multi- dimensional super-resolution problems \cite{liu2021mathematicaloned, liu2021theorylse, liu2021mathematicalhighd}. In particular, they showed that the computational resolution limit for number and location recovery should be respectively $\frac{C_{\mathrm{num}}(d,n)}{\Omega}(\frac{\sigma}{m_{\min}})^{\frac{1}{2n-2}}$ and  $\frac{C_{\mathrm{supp}}(d,n)}{\Omega}(\frac{\sigma}{m_{\min}})^{\frac{1}{2n-1}}$ where $C_{\mathrm{num}}(d,n), C_{\mathrm{supp}(d,n)}$ are certain constants depending on space dimensionality $d$ and source number $n$. In addition to the single measurement case,  the stability for sparsity-promoting super-resolution algorithms in multi-illumination imaging was also derived \cite{liu2022mathematical}. 

There were also many mathematical theories for estimating the stability of super-resolution from the perceptive of minimax error estimation. In  \cite{donoho1992superresolution}, Donoho considered a grid setting where a discrete measure is supported on a lattice (spacing by $\Delta$) and regularized by a so-called  "Rayleigh index" $b$. He demonstrated that the minimax error for the amplitude reconstruction is bounded from below and above by $SRF^{2b-1}\sigma$ and $SRF^{2b+1}\sigma$ respectively with $\sigma$ being the noise level and the super-resolution factor $SRF = 1/({\Omega \Delta})$. More recently, inspired by the huge success of new super-resolution modalities \cite{ref4,STED, ref7,PALM, ref8,ref9} and the popularity of researches for super-resolution algorithms \cite{candes2014towards, azais2015spike, duval2015exact, tang2014near, morgenshtern2016super, morgenshtern2020super, denoyelle2017support}, the study of super-resolution capability of imaging problems also becomes popular in applied mathematics. In \cite{demanet2015recoverability}, the authors considered $n$-sparse point sources supported on a grid and obtained sharper lower and upper bounds (both $SRF^{2n-1}\sigma$) for the minimax error of amplitude recovery than those in \cite{donoho1992superresolution}. The case of multi-clustered point sources was considered in \cite{li2021stable, batenkov2020conditioning} and similar minimax error estimations were derived.  In \cite{akinshin2015accuracy, batenkov2019super}, the authors considered the minimax error for recovering off-the-grid point sources. Based on an analysis of the "prony-type system", they derived bounds for both amplitude and location reconstructions of the point sources. More precisely, they showed that  for $\sigma \lessapprox (SRF)^{-2p+1}$ where $p$ is the number of point sources in a cluster, the minimax error for the amplitude and the location recoveries scale respectively as $(SRF)^{2p-1}\sigma$, $(SRF)^{2p-2} {\sigma}/{\Omega}$, while for the isolated non-clustered source, the corresponding minimax error for the amplitude and the location recoveries scale respectively as $\sigma$ and ${\sigma}/{\Omega}$. We also refer the readers to \cite{moitra2015super,chen2020algorithmic} for understanding the resolution limit from the perceptive of sample complexity.

\subsection{Organization of the paper}
The rest of the paper is organized in the following way. In Section 2, we present the main results about the stability of location-velocity pair reconstruction in the tracking problem. More precisely, in Section 2.1, we state the results for the stability of number recovery of the point sources, and in Section 2.2 those for the stability of the recovery of the location-velocity pairs. In Section 2.3, we present the stability estimate for a sparsity-promoting algorithm in the tracking problem. In Section 3, we demonstrate better resolutions for the number and value reconstruction of velocities. In Section 4 and Section 5, we propose projection-based super-resolution algorithms for the number and value recovery of the velocities respectively. Numerical experiments are also conducted to demonstrate their efficiency.  Section 6 is devoted to some concluding remarks and future works. In Section 7 and Section 8, we prove our main results in this paper.

\section{Main results about the recovery of location-velocity pairs}\label{sec:mainresultoflocationvelocity}
In this section, we present the results for the stability of the dynamic reconstruction in the tracking problem. Suppose that a series of images is generated by $n$ $d$-dimensional point sources located at $\vect y_j$'s with amplitudes $a_j$'s and velocities $\vect v_j$'s. The inverse problem we are concerned with is to reconstruct the number and location-velocity pairs of the point sources. To be more specific, we consider the set of parameters $\{(a_j, \vect y_j, \tau \vect v_j)\}_{j=1}^n$ and denote the vectors of locations and velocities as 
\[
\vect \alpha_{j} = \begin{pmatrix}\vect y_j\\ \tau \vect v_j\end{pmatrix}.
\]
We first consider the stability of reconstructing the number and value of the location-velocity pairs $\vect \alpha_j$'s. Because we are interested in the capability of super-resolution, we consider recovering point sources with close $\vect y_j$ and $\tau \vect v_j$'s. 
We define
\[
B_{\delta}^m(\vect x) := \Big\{ \mathbf y \ \Big|\ \mathbf y\in \mathbb R^m,\  ||\vect y- \vect x||_2<\delta \Big\},
\] 
and assume that $\vect \alpha_j \in B_{\delta}^m(\vect 0), j=1,\cdots,n,$ for certain $\delta >0$. On the other hand, in order to analyze the stability of the reconstruction, below we first need to define a $\sigma$-admissible parameter set of $\vect Y_t$'s. In the following sections, we shall consider recovering the number and value of location-velocity pairs from all the $\sigma$-admissible parameter sets. 

\begin{defi}{\label{def:sigmaadmissibleset}}
Given measurement $\mathbf Y_t$'s in (\ref{equ:modelsetting1}), we say that $\{(\hat a_j, \vect {\hat y}_j, \tau \vect {\hat v}_j)\}_{j=1}^k$ is a $\sigma$-admissible parameter set of $\vect Y_t$'s if
\[
\babs{\sum_{j=1}^k\hat a_j e^{i(\vect {\hat y}_j^\top+t\tau \vect {\hat v}_j^\top) \vect \omega} -\vect Y_t(\vect \omega)}< \sigma, \ ||\vect \omega||_2\leq \Omega, \ \text{for all} \ t=0,1,\cdots,T. 
\]
\end{defi}

\begin{defi}
 We say a parameter set is a $n$-sparse parameter set if it contains exactly $n$ different elements. 
\end{defi}

\subsection{Stability of number recovery in the tracking problem}\label{sec:trackstabilityofnumberrecovery}
In this subsection, we present our main results for recovering the number of point sources in the tracking problem.  Define
\begin{equation}\label{equ:defineofxi}
\xi(k)= \begin{cases}
\sum_{j=1}^{k}\frac{1}{j}, & \ k\geq 1,\\
0,& \ k=0.
\end{cases}
\end{equation}
We have the following theorem for the recovery of source number in the $d$-dimensional tracking problem. Its proof is given is Section \ref{sec:proofmainresults}. 
\begin{thm}\label{thm:trackhighdupperboundnumberlimit0}
	Let $n\geq 2$ and $T\geq \frac{n(n-1)}{2}$. Let the measurement $\mathbf Y_t\in \mathbb R^d, t=0, \cdots, T$, in (\ref{equ:modelsetting1}) be generated by a $n$-sparse parameter set $\{(a_j, \vect y_j, \tau \vect v_j)\}_{j=1}^n$ with $\vect \alpha_j:= \begin{pmatrix}
\vect y_j\\
\tau \vect v_j
\end{pmatrix} \in B_{\frac{\pi}{(n+1)\Omega}}^{2d}(\vect 0)$. Assume that the following separation condition is satisfied 
	\begin{equation}\label{equ:trackhighdupperboundnumberlimit0equ1}
	\min_{p\neq j, 1\leq p, j\leq n}\btwonorm{\vect \alpha_p- \vect \alpha_j}\geq \frac{8.8e\pi^2\sqrt{(\frac{n(n-1)}{2})^2+1}(\pi/2)^{d-1} \Big(\frac{n(n-1)}{\pi}\Big)^{\xi(d-1)}}{\Omega }\Big(\frac{\sigma}{m_{\min}}\Big)^{\frac{1}{2n-2}},
	\end{equation}
	where $\xi(d-1)$ is defined as in (\ref{equ:defineofxi}). Then there does not exist any $\sigma$-admissible parameter set of \,$\mathbf Y_t$'s with less than $n$ elements.
\end{thm}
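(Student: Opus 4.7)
The plan is to argue by contradiction. Suppose some $\sigma$-admissible parameter set $\{(\hat a_j,\hat{\vect y}_j,\tau\hat{\vect v}_j)\}_{j=1}^k$ with $k\leq n-1$ exists. Then by Definition 2.1 and the noise bound $|\vect W_t(\vect\omega)|<\sigma$, the triangle inequality gives
\begin{equation*}
\Big|\sum_{j=1}^{n} a_j e^{i(\vect y_j^\top+t\tau\vect v_j^\top)\vect\omega}-\sum_{j=1}^{k}\hat a_j e^{i(\hat{\vect y}_j^\top+t\tau\hat{\vect v}_j^\top)\vect\omega}\Big|<2\sigma
\end{equation*}
for all $\|\vect\omega\|_2\leq\Omega$ and $t=0,\ldots,T$. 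The combined system has at most $n+k\leq 2n-1$ nodes in $\mathbb R^{2d}$, namely the original $\vect\alpha_j=(\vect y_j^\top,\tau\vect v_j^\top)^\top$ (with amplitudes of modulus $\geq m_{\min}$) together with the $k$ hypothetical ones. I need to show that the hypothesis (2.1.1) makes this inequality unachievable.

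The key reduction is to probe the combined sum along a one-dimensional ray. Fixing a unit direction $\vect e\in\mathbb R^d$ and setting $\vect\omega=r\vect e$ for $r\in[-\Omega,\Omega]$, the identity becomes a 1D exponential sum in $r$ with nodes $\beta_{j,t}(\vect e)=\vect\alpha_j^\top(\vect e^\top,t\vect e^\top)^\top$ and analogous $\hat\beta_{j,t}(\vect e)$. At this point I can invoke the authors' one-dimensional number-recovery bound from \cite{liu2021mathematicaloned}: if the true $n$-sparse 1D node set is separated by more than $C_{\mathrm{num}}(1,n)\Omega^{-1}(\sigma/m_{\min})^{1/(2n-2)}$, no $(n-1)$-sparse sum can approximate it within $2\sigma$ on $[-\Omega,\Omega]$. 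The remaining task is to choose $\vect e$ and $t$ so that the 1D projected separation of the original nodes dominates a constant multiple of the full 2d-dimensional separation guaranteed by (2.1.1).

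The main obstacle, and the source of all the dimension- and sparsity-dependent constants in (2.1.1), lies in this geometric selection step. I would build $\vect e$ by an iterative descent that reduces the dimension of the ambient $(\vect y,\tau\vect v)$ space one coordinate at a time, where at each level an elementary trigonometric estimate loses at most the factor $2/\pi$; this explains the $(\pi/2)^{d-1}$ term and mirrors the multi-dimensional reduction of \cite{liu2021mathematicalhighd}. Simultaneously, a pigeonhole argument selects a time $t^\star\in\{0,\ldots,T\}$ at which all $\binom{n}{2}$ original pairs project to distinct 1D nodes: for a generic $\vect e$, each pair $(j,p)$ has at most one time at which $(\vect y_j-\vect y_p)^\top\vect e+t\tau(\vect v_j-\vect v_p)^\top\vect e$ can vanish, so the hypothesis $T\geq n(n-1)/2$ ensures such a $t^\star$ exists. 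The norm $\|(\vect e^\top,t^\star\vect e^\top)^\top\|_2\leq\sqrt{(n(n-1)/2)^2+1}$ converts the projected 2d-separation into a 1D separation, and the combinatorial factor $(n(n-1)/\pi)^{\xi(d-1)}$ absorbs the iterated choices across the $d-1$ descent steps.

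Putting the three ingredients together: under (2.1.1) the projected 1D separation at $(\vect e,t^\star)$ strictly exceeds $C_{\mathrm{num}}(1,n)\Omega^{-1}(\sigma/m_{\min})^{1/(2n-2)}$, so the 1D number-recovery theorem forces some $r_0\in[-\Omega,\Omega]$ at which the combined sum along $\vect\omega=r_0\vect e$ has modulus $\geq 2\sigma$, contradicting the displayed inequality above. I expect the hardest technical bookkeeping to be the explicit control of constants in the iterative projection argument, since every step propagates dimension-dependent losses that must be tracked to recover exactly the bound in (2.1.1).
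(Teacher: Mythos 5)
Your overall architecture (project the $2d$-dimensional configuration onto lower-dimensional probes indexed by the time step, pigeonhole over the $T+1$ time steps against the $\binom{n}{2}$ difference vectors, then invoke a static number-recovery theorem) matches the paper's strategy. The paper, however, stops the reduction at the $d$-dimensional subspaces $S_t^d=\{(\vect v^\top,t\vect v^\top)^\top\}$ and applies the $d$-dimensional static theorem (Theorem \ref{thm:statichighdupperboundnumberlimit0}) to the single good frame $t$, with cut-off $\sqrt{1+t^2}\Omega$; the factors $(\pi/2)^{d-1}(n(n-1)/\pi)^{\xi(d-1)}$ are inherited wholesale from that theorem rather than re-derived, and the ball hypothesis $\vect\alpha_j\in B^{2d}_{\pi/((n+1)\Omega)}(\vect 0)$ is used precisely to verify that the projected sources fit the static theorem's support condition. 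Your plan pushes all the way down to dimension one in a single combined choice of $(\vect e,t^\star)$, which forces you to re-do the spatial descent and the temporal selection simultaneously.

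The genuine gap is in your temporal pigeonhole. You argue that for generic $\vect e$ each pair $(j,p)$ has at most one $t$ at which $(\vect y_j-\vect y_p)^\top\vect e+t\tau(\vect v_j-\vect v_p)^\top\vect e$ vanishes, and conclude a $t^\star$ exists where all projections are nonzero. Nonvanishing is not enough: to feed the one-dimensional number-recovery bound you need each projected gap to exceed an explicit multiple of $\Omega^{-1}(\sigma/m_{\min})^{1/(2n-2)}$, and nothing in your argument prevents a projection from being nonzero yet far too small at every available time. What closes this hole in the paper is the quantitative exclusivity statement (Lemma \ref{lem:spaceangleestimate4}): if $\|\mathcal P_{S_t^d}(\vect u)\|_2<\tfrac{1}{2\pi(1+t^2)}\|\vect u\|_2$ then $\|\mathcal P_{S_j^d}(\vect u)\|_2\geq\tfrac{1}{2\pi(1+j^2)}\|\vect u\|_2$ for all $j<t$, proved via the angular gap $\arctan(t)-\arctan(t-1)>\tfrac{1}{1+t^2}$ between consecutive subspaces. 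This is what turns ``each difference vector ruins at most one time step'' into a statement with the correct lower bound attached, and it is also where the factors $2\pi$ and $\sqrt{(\tfrac{n(n-1)}{2})^2+1}$ in the constant of (\ref{equ:trackhighdupperboundnumberlimit0equ1}) actually originate. Without this lemma (or an equivalent quantitative substitute), your contradiction does not go through; the difficulty is not, as you suggest, mere bookkeeping of constants.
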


\begin{remark}
We remark that the condition $\vect \alpha_j\in B_{\frac{\pi}{(n+1)\Omega}}^{2d}(\vect 0)$ in the above theorem and similar conditions in the rest of the paper can be straightforwardly extended to $\vect \alpha_j\in B_{\frac{\pi}{(n+1)\Omega}}^{2d}(\vect x)$ for a non-zero vector $\vect x$. Also, they can be easily extended to $\vect \alpha_j\in B_{\delta}^{2d}(\vect 0)$ for a larger $\delta$ with a slight modification of the results.   
\end{remark}

Theorem \ref{thm:trackhighdupperboundnumberlimit0} reveals that, for fixed space dimensionality $d$ and source number $n$, when \\ 
$\min_{j\neq p} \btwonorm{\begin{pmatrix} \vect y_j\\ \tau \vect v_j\end{pmatrix} - \begin{pmatrix}\vect y_p\\ \tau \vect v_p\end{pmatrix}} \geq O\Big(\frac{1}{\Omega}(\frac{\sigma}{m_{\min}})^{\frac{1}{2n-2}}\Big)$, it is possible to reconstruct the exact source number in the tracking problem. We remark that since the  space dimensionality of interest is usually small ($d=1,2,3$), the amplification factor in (\ref{equ:trackhighdupperboundnumberlimit0equ1}) due to the space dimensionality is not large.

We next show that in the worst-case scenario, the order $O(\frac{1}{\Omega}(\frac{\sigma}{m_{\min}})^{\frac{1}{2n-2}})$ is optimal without further information on the velocities. This is shown by Proposition \ref{prop:trackhighdnumberlowerbound0}. We first present a result for the $d$-dimensional static super-resolution problem which helps to derive Propositions \ref{prop:trackhighdnumberlowerbound0} and \ref{prop:trackhighdnumberlowerbound1}. 
\begin{prop}\label{prop:staticnumberlowerboundthm0}
	For given $0<\sigma<m_{\min}$ and integer $n\geq 2$, there exist $ a_j\in \mathbb C, \vect y_j \in \mathbb R^d, j=1, \cdots, n$, and $\hat a_j \in \mathbb C, \vect {\hat y}_j \in \mathbb R^d, j=1, \cdots, n-1$ such that 
	\[
	\babs{\sum_{j=1}^{n-1} \hat a_j e^{i \vect {\hat y}_j^\top \vect \omega}- \sum_{j=1}^{n} a_j e^{i \vect y_j^\top \vect \omega}}<\sigma, \quad  ||\vect \omega||_2 \leq \Omega. 
	\]
Moreover,
	\[
	\min_{1\leq j\leq n}|a_j|= m_{\min}, \quad \min_{p\neq j}\btwonorm{\vect y_p-\vect y_j}= \frac{0.81e^{-\frac{3}{2}}}{\Omega}\Big(\frac{\sigma}{m_{\min}}\Big)^{\frac{1}{2n-2}}.
	\]	
\end{prop}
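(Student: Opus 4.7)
The plan is to reduce the construction to one dimension and then exhibit an explicit worst-case example via moment matching on a symmetric uniform grid. First, I would collinearize the atoms by placing every $\vect{y}_j$ and $\vect{\hat y}_j$ on the first coordinate axis. Then for any $\vect{\omega}$ with $\|\vect{\omega}\|_2 \leq \Omega$ the first component $\omega_1$ lies in $[-\Omega,\Omega]$, and the $d$-dimensional exponential sum collapses to a one-dimensional exponential sum in $\omega_1$. So it suffices to build the counterexample in $\mathbb{R}$.

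In one dimension I would place the $2n-1$ supports on the grid $z_k = (k-n)h$, $k=1,\ldots,2n-1$, with spacing $h>0$ to be fixed. Writing the signed measure $\mu := \sum_{j=1}^{n} a_j\delta_{y_j} - \sum_{j=1}^{n-1}\hat a_j\delta_{\hat y_j}$ in the form $\sum_{k=1}^{2n-1}c_k\delta_{z_k}$, I impose the $2n-2$ homogeneous conditions $\sum_k c_k z_k^\ell = 0$ for $\ell=0,1,\ldots,2n-3$. The restricted Vandermonde matrix has rank $2n-2$, so the null space is one-dimensional and is explicitly generated (by the Lagrange/divided-difference formula) by $c_k \propto (-1)^{k-1}\binom{2n-2}{k-1}$. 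The signs alternate, so the $n$ entries at $k=1,3,\ldots,2n-1$ take one sign and the $n-1$ entries at $k=2,4,\ldots,2n-2$ the opposite; grouping accordingly supplies the $a_j$'s and $\hat a_j$'s with the required cardinalities. I normalize the overall scale so that $\min_j |a_j| = m_{\min}$, which corresponds to the boundary nodes where $\binom{2n-2}{k-1}=1$.

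Next I would control the Fourier transform on $[-\Omega,\Omega]$ by the Taylor expansion at the origin. The vanishing-moment conditions give
\[
\widehat{\mu}(\omega) = \sum_{\ell\geq 2n-2}\frac{(i\omega)^\ell}{\ell!} M_\ell,\qquad M_\ell = \sum_{k=1}^{2n-1} c_k z_k^\ell,
\]
and $|M_\ell|\leq (2n-1)\max_k|c_k|\cdot((n-1)h)^\ell$. Summing the geometric tail yields
\[
|\widehat{\mu}(\omega)| \leq C_n\, m_{\min}(\Omega h)^{2n-2}\qquad\text{for }|\omega|\leq\Omega,
\]
with $C_n$ depending on $n$ only through the binomial ratio $\max_k|c_k|/\min_k|c_k| = \binom{2n-2}{n-1}$ and the factorials in the Taylor tail. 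Forcing $C_n m_{\min}(\Omega h)^{2n-2}=\sigma$ and solving for $h$ gives $h\leq (c_n/\Omega)(\sigma/m_{\min})^{1/(2n-2)}$; the minimum separation among the $y_j$'s is the fixed multiple $2h$ of the grid spacing, because the true atoms occupy alternate grid nodes.

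The principal obstacle is tracking the constant $0.81\,e^{-3/2}$ sharply. Once the alternating null vector is identified, everything reduces to an optimization of $C_n^{-1/(2n-2)}$ with $C_n\sim \binom{2n-2}{n-1}/(2n-2)!$ plus a geometric-tail correction; Stirling's formula turns the binomial into a factor of order $4^{n-1}$ and the factorial into $(2n-2)^{2n-2}e^{-(2n-2)}\sqrt{\,\cdot\,}$, and taking the $1/(2n-2)$ root produces a constant of order $e^{-3/2}$ times a small numerical factor. I expect the clean constant $0.81\,e^{-3/2}$ to come from the sharp one-dimensional lemma already developed by the authors in \cite{liu2021mathematicaloned,liu2021theorylse}, which the proof can invoke directly rather than re-derive.
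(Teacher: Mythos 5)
Your construction is correct, but it is genuinely different from what the paper does: the paper gives no self-contained proof at all and simply cites Proposition 2.4 of \cite{liu2021mathematicalhighd}, whose underlying construction (as one can infer from the proof of Proposition \ref{prop:trackhighdnumberlowerbound1} in this paper) places the $n$ true sources in a consecutive block $0,\Delta,\dots,(n-1)\Delta$ and the $n-1$ spurious ones in the adjacent block $-\Delta,\dots,-(n-1)\Delta$, with amplitudes obtained from an approximation-theoretic argument in a ``Vandermonde space'' rather than from exact moment matching. Your route instead interlaces the two families on a $(2n-1)$-point grid and takes the signed amplitudes from the one-dimensional null vector $c_k\propto(-1)^{k-1}\binom{2n-2}{k-1}$ of the truncated moment map, i.e.\ the $(2n-2)$-th finite difference. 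This is more elementary and, if you push it to its clean form, sharper: instead of the Taylor-tail estimate you can evaluate the Fourier transform exactly as $\widehat\mu(\omega)=m_{\min}e^{-inh\omega}(1-e^{ih\omega})^{2n-2}$, so $|\widehat\mu(\omega)|=m_{\min}\,|2\sin(h\omega/2)|^{2n-2}\le m_{\min}(h\Omega)^{2n-2}$, which makes the admissible spacing $h=\Omega^{-1}(\sigma/m_{\min})^{1/(2n-2)}$ and the true-atom separation $2h$ --- comfortably above the required $0.81e^{-3/2}\Omega^{-1}(\sigma/m_{\min})^{1/(2n-2)}$, so setting $h$ to half that prescribed value yields the stated equality with the Fourier bound still strictly below $\sigma$. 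Two small points to tidy up: the proposition asserts the separation with \emph{equality}, so you should fix $h$ at the prescribed value rather than at the largest admissible one (the bound only improves as $h$ decreases); and your closing appeal to the authors' sharp one-dimensional lemma is unnecessary --- and would be circular as a substitute for the estimate --- since the explicit identity above already closes the argument. What the paper's route buys is a tighter link to the lower-bound machinery of the cited works (the block arrangement reappears verbatim in Propositions \ref{prop:trackhighdnumberlowerbound1} and \ref{prop:trackhighdsupportlowerbound1}); what yours buys is a short, self-contained, checkable construction.
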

\begin{proof}
See Proposition 2.4 in \cite{liu2021mathematicalhighd}. 
\end{proof}

\begin{prop}\label{prop:trackhighdnumberlowerbound0}
	For given $0<\sigma<m_{\min}$ and integer $n\geq 2$, there exist a $n$-sparse parameter set $\{(a_j, \vect y_j, \tau \vect v_j)\}_{j=1}^n$, $\vect y_j\in \mathbb R^d, \vect v_j \in \mathbb R^d$, and a $(n-1)$-sparse parameter set $\{(\hat a_j, \vect {\hat y}_j, \tau \vect {\hat v}_j)\}_{j=1}^{n-1}$, $\vect {\hat y}_j\in \mathbb R^d, \vect {\hat v}_j \in \mathbb R^d$, such that 
	\[
	\babs{\sum_{j=1}^{n-1}\hat a_j e^{i(\vect {\hat y}_j^\top + t \tau \vect {\hat v}_j^\top)\vect \omega} -\sum_{j=1}^{n}a_j e^{i(\vect y_j^\top + t \tau \vect v_j^\top)\vect \omega}} < \sigma, \quad  ||\vect \omega||_2\leq \Omega, \  t=0, \cdots,T.
	\]
	Moreover,
	\[
	\min_{1\leq j\leq n}|a_j|= m_{\min}, \quad \min_{p\neq j}\btwonorm{\vect \alpha_p- \vect \alpha_j}= \frac{0.81e^{-\frac{3}{2}}}{\Omega}\Big(\frac{\sigma}{m_{\min}}\Big)^{\frac{1}{2n-2}},
	\]
	where $\vect \alpha_j= \begin{pmatrix} \vect y_j\\ 
	\tau \vect v_j\end{pmatrix}$. 
\end{prop}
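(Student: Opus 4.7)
The plan is to reduce this lower bound to its static counterpart (Proposition 2.1) by observing that when all particles are stationary the dynamic measurement model collapses to the static model. Concretely, I would invoke Proposition 2.1 to obtain amplitudes $a_j, \hat a_j$ and locations $\vect y_j, \vect{\hat y}_j$ in $\mathbb R^d$ satisfying the required $\sigma$-close approximation and the exact separation $\min_{p\neq j}\|\vect y_p-\vect y_j\|_2 = \frac{0.81 e^{-3/2}}{\Omega}(\sigma/m_{\min})^{1/(2n-2)}$, and then lift this configuration to the tracking setting by assigning zero velocities: $\vect v_j = \vect 0$ for $j=1,\dots,n$ and $\vect{\hat v}_j = \vect 0$ for $j=1,\dots,n-1$.

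With this choice, $\vect y_j+t\tau\vect v_j = \vect y_j$ and $\vect{\hat y}_j+t\tau\vect{\hat v}_j = \vect{\hat y}_j$ for every time step $t$, so the time-dependent exponentials $e^{i(\vect y_j^\top+t\tau\vect v_j^\top)\vect\omega}$ reduce to the time-independent $e^{i\vect y_j^\top\vect\omega}$ and analogously for the hat-variables. Thus for every $t=0,\dots,T$ and every $\vect\omega$ with $\|\vect\omega\|_2\leq\Omega$, the dynamic approximation bound to be verified becomes exactly the static inequality supplied by Proposition 2.1. The amplitude condition $\min_j|a_j|=m_{\min}$ is inherited verbatim.

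For the separation, the location-velocity vectors satisfy $\vect\alpha_j = \begin{pmatrix}\vect y_j\\ \vect 0\end{pmatrix}$, so $\|\vect\alpha_p-\vect\alpha_j\|_2 = \|\vect y_p-\vect y_j\|_2$, and the minimum over $p\neq j$ equals the value given by Proposition 2.1. This yields precisely the claimed bound $\min_{p\neq j}\|\vect\alpha_p-\vect\alpha_j\|_2 = \frac{0.81 e^{-3/2}}{\Omega}(\sigma/m_{\min})^{1/(2n-2)}$.

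I do not expect any real obstacle: the only conceptual point is the recognition that stationary particles form a valid instance of the tracking model, so any static counter-example automatically produces a dynamic one with the same separation order. This also conveys the correct intuition for the optimality statement, namely that in the absence of further information on the velocities the tracking problem cannot be easier than the static super-resolution problem of the same dimension.
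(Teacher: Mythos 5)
Your proposal is correct and follows essentially the same route as the paper: the paper also lifts the static counterexample of Proposition \ref{prop:staticnumberlowerboundthm0} by assigning all particles a common velocity (it takes $\vect{\hat v}_j = \vect v_j = \vect v$ rather than $\vect 0$, but since the common phase factor $e^{it\tau \vect v^\top \vect\omega}$ cancels in modulus this is the same argument). Nothing further is needed.
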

\begin{proof}
Let $\hat a_j, \vect {\hat y}_j, a_j, \vect y_j$'s be the ones in Proposition \ref{prop:staticnumberlowerboundthm0}. Then we have
\[
\babs{\sum_{j=1}^{n-1}\hat a_j e^{i\vect {\hat y}_j^\top \vect \omega} -\sum_{j=1}^{n}a_j e^{i \vect y_j^\top \vect \omega}} < \sigma, \quad ||\vect \omega||_2\leq \Omega,  \vect \omega \in \mathbb R^d.
\]  
When $\vect {\hat v}_j = \vect v_j = \vect v$, we also have 
\[
\babs{\sum_{j=1}^{n-1}\hat a_j e^{i(\vect {\hat y}_j^\top + t \tau \vect {\hat v}_j^\top)\vect \omega} -\sum_{j=1}^{n}a_j e^{i(\vect y_j^\top + t \tau \vect v_j^\top)\vect \omega}} < \sigma, \quad  ||\vect \omega||_2\leq \Omega, \ t=0, \cdots, T.
\]
The other parts of the proposition are easy to verify. 
\end{proof}

Proposition \ref{prop:trackhighdnumberlowerbound0} holds for the case when all $\tau \vect v_j$'s are equal or very close to each other. If the velocities are not close to each other, we next have Proposition \ref{prop:trackhighdnumberlowerbound1}, which shows that the order of the resolution of number recovery in Theorem \ref{thm:trackhighdupperboundnumberlimit0} is nearly optimal for the reconstruction with a short time period (i.e., when $T$ is small). For a tracking problem with long time period, we expect that the resolution for recovering the number of different velocities is of order $O(\frac{1}{T\Omega}(\frac{\sigma}{m_{\min}})^{\frac{1}{2n-2}})$. This will be demonstrated by results in Section \ref{sec:resolutionforvelocities}. 
\begin{prop}\label{prop:trackhighdnumberlowerbound1}
	For given $0<\sigma<m_{\min}$ and integer $n\geq 2$, there exist a $n$-sparse parameter set $\{(a_j, \vect y_j, \tau \vect v_j)\}_{j=1}^n$ with different $\vect v_j$'s, and a $(n-1)$-sparse parameter set $\{(\hat a_j, \vect {\hat y}_j, \tau \vect {\hat v}_j)\}_{j=1}^{n-1}$, such that 
	\[
	\babs{\sum_{j=1}^{n-1}\hat a_j e^{i(\vect {\hat y}_j^\top + t \tau \vect {\hat v}_j^\top)\vect \omega} -\sum_{j=1}^{n}a_j e^{i(\vect y_j^\top + t \tau \vect v_j^\top)\vect \omega}}< \sigma,\ ||\vect \omega||_2\leq \Omega, \ t=0, \cdots,T.
	\] 
	Moreover,
	\begin{equation}\label{equ:trackhighdnumberlowerbound1equ0}
	\min_{1\leq j\leq n}|a_j|= m_{\min}, \quad \min_{p\neq j}\btwonorm{\vect \alpha_p- \vect \alpha_j}= \frac{0.81\sqrt{2}e^{-\frac{3}{2}}}{(T+1)\Omega}\Big(\frac{\sigma}{m_{\min}}\Big)^{\frac{1}{2n-2}},
	\end{equation}
	where $\vect \alpha_j= \begin{pmatrix}
	\vect y_j \\
	\tau \vect v_j
	\end{pmatrix}$. 
\end{prop}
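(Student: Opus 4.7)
The plan is to lift the static counterexample of Proposition \ref{prop:staticnumberlowerboundthm0} diagonally into the $(\vect y, \tau\vect v)$-space so that the $T{+}1$ time frames conspire with the spatial Fourier variable to act as a single effective cutoff of size $(T{+}1)\Omega/\sqrt 2$. This way the static separation $\frac{0.81 e^{-3/2}}{\Omega'}(\sigma/m_{\min})^{1/(2n-2)}$ at inflated cutoff $\Omega' := (T{+}1)\Omega/\sqrt 2$ translates directly into the claimed tracking separation $\frac{0.81\sqrt 2\, e^{-3/2}}{(T{+}1)\Omega}(\sigma/m_{\min})^{1/(2n-2)}$.

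First, I apply Proposition \ref{prop:staticnumberlowerboundthm0} in dimension one with cutoff $\Omega'$ to obtain scalars $\{a_j, s_j\}_{j=1}^n$ and $\{\hat a_j, \hat s_j\}_{j=1}^{n-1}$ in $\mathbb R$ such that $|\sum_{j=1}^{n-1}\hat a_j e^{iu\hat s_j}-\sum_{j=1}^n a_j e^{iu s_j}|<\sigma$ for all $|u|\le\Omega'$, with $\min_j|a_j|=m_{\min}$ and pairwise separation of the $s_j$'s (and indeed of the full collection including the $\hat s_j$'s, to the extent needed) equal to $\frac{0.81\sqrt 2\,e^{-3/2}}{(T+1)\Omega}(\sigma/m_{\min})^{1/(2n-2)}$.

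Second, I fix any unit vector $\vect e\in\mathbb R^d$ and lift the 1D points onto the diagonal by setting $\vect y_j:=\frac{s_j}{\sqrt 2}\vect e$ and $\tau\vect v_j:=\frac{s_j}{\sqrt 2}\vect e$, and analogously $\vect{\hat y}_j,\tau\vect{\hat v}_j$ using $\hat s_j$. With this choice the tracking phase in (\ref{equ:modelsetting1}) collapses to $(\vect y_j^\top+t\tau\vect v_j^\top)\vect\omega=\frac{(1+t)s_j}{\sqrt 2}\,\vect e^\top\vect\omega$, so introducing $u:=(1+t)\vect e^\top\vect\omega/\sqrt 2$ the tracking discrepancy becomes exactly the static discrepancy $|\sum \hat a_j e^{iu\hat s_j}-\sum a_j e^{iu s_j}|$. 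Since $|\vect e^\top\vect\omega|\le\Omega$ for $\|\vect\omega\|_2\le\Omega$ and $0\le t\le T$, the variable $u$ stays in $[-\Omega',\Omega']$, so the static bound $<\sigma$ transfers verbatim to every time frame.

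Third, I verify the separation and distinctness claims: because $\vect\alpha_j=(s_j/\sqrt 2)\,(\vect e,\vect e)$ and $\|(\vect e,\vect e)\|_2=\sqrt 2$, we have $\|\vect\alpha_j-\vect\alpha_p\|_2=|s_j-s_p|$, which yields (\ref{equ:trackhighdnumberlowerbound1equ0}) on the nose. The positivity of these separations forces the $s_j$'s to be pairwise distinct, hence so are the $\vect v_j=(s_j/(\tau\sqrt 2))\vect e$, meeting the "different $\vect v_j$'s" requirement. I expect no genuine obstacle: the only conceptual point is choosing the diagonal lift so that the single effective frequency $u=(1+t)\vect e^\top\vect\omega/\sqrt 2$ exhausts the enlarged cutoff $\Omega'$ while the geometric separation in $\vect\alpha$-space collapses onto a one-dimensional scale, converting the factor $\sqrt 2/(T+1)$ in the required separation into the larger effective bandwidth that Proposition \ref{prop:staticnumberlowerboundthm0} exploits.
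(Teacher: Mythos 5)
Your proof is correct and follows essentially the same route as the paper: both lift the one-dimensional static counterexample of Proposition \ref{prop:staticnumberlowerboundthm0} onto the diagonal of the $(\vect y,\tau\vect v)$-space so that every time frame collapses to a single one-dimensional phase variable of effective bandwidth $(T+1)\Omega/\sqrt{2}$. The only difference is bookkeeping — you inflate the cutoff to $\Omega'=(T+1)\Omega/\sqrt 2$ and keep the $1$D points at that scale, whereas the paper keeps cutoff $\Omega$ and shrinks the lifted positions by $1/(T+1)$ — which is the same construction after a change of variables.
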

\begin{proof}
Let $\Delta = \frac{0.81e^{-\frac{3}{2}}}{\Omega}\Big(\frac{\sigma}{m_{\min}}\Big)^{\frac{1}{2n-2}}$. For the one-dimensional case, as in the proof in \cite{liu2021mathematicalhighd} or \cite{liu2021theorylse}, Proposition \ref{prop:staticnumberlowerboundthm0} holds when the $n-1$ and $n$ point sources are located at
\[
-(n-1)\Delta, \ -(n-2)\Delta,\  \cdots,\ 0,  \ \Delta,  \ (n-1)\Delta,
\]
with certain intensities $\hat a_j, a_j$'s. More specifically, let $\hat x_j= - j \Delta, x_j =(j-1)\Delta, j=1, \cdots, n$, we have 
\begin{equation}\label{equ:trackhighdnumberlowerbound1equ1}
\babs{\sum_{j=1}^{n-1}\hat a_j e^{i \hat x_j \omega} - \sum_{j=1}^{n-1}a_j e^{i x_j \omega}}<\sigma, \quad |\omega|\leq \Omega,
\end{equation}
with certain $\hat a_j, a_j$'s. We consider in the proposition the above $\hat a_j, a_j$'s and the following $\vect {\hat y}_j , \tau \vect {\hat v}_j, \vect y_j, \tau \vect v_j$'s,
\begin{align*}
(\vect {\hat y}_1^\top, \tau \vect {\hat v}_1^\top) = \frac{-\Delta}{T+1}\frac{1}{\sqrt{d}} (1,1, \cdots, 1),\ \cdots,\ (\vect {\hat y}_n, \tau \vect {\hat v}_n) = \frac{-(n-1)\Delta}{T+1}\frac{1}{\sqrt{d}}(1,1,\cdots,1),
\end{align*}
and 
\begin{align*}
&(\vect y_1^\top, \tau \vect v_1^\top) = (0,0, \cdots, 0),\quad (\vect y_2^\top, \tau \vect v_2^\top)=\frac{\Delta}{T+1} \frac{1}{\sqrt{d}} (1, 1, \cdots, 1),\quad \cdots,\\
&(\vect y_n^\top, \tau \vect v_n^\top) = \frac{(n-1)\Delta}{{T+1}} \frac{1}{\sqrt{d}} (1, 1, \cdots, 1).
\end{align*}
Note that (\ref{equ:trackhighdnumberlowerbound1equ0}) is satisfied. On the other hand, for any $||\vect \omega||_2\leq \Omega, \vect \omega \in \mathbb R^d, t=0, \cdots, T$, let $\vect \omega_t = \begin{pmatrix}
\vect \omega\\
t \vect \omega
\end{pmatrix}$, we have 
\begin{align*}
&\babs{\sum_{j=1}^{n-1}\hat a_j e^{i(\vect {\hat y}_j^\top + t \tau \vect {\hat v}_j^\top)\vect \omega} -\sum_{j=1}^{n}a_j e^{i(\vect y_j^\top + t \tau \vect v_j^\top)\vect \omega}}=  \babs{\sum_{j=1}^{n-1}\hat a_j e^{i(\vect {\hat y}_j^\top, \tau \vect {\hat v}_j^\top)\vect \omega_t} -\sum_{j=1}^{n}a_j e^{i(\vect y_j^\top, \tau \vect v_j^\top)\vect \omega_t}}\\
=& \babs{\sum_{j=1}^{n-1}\hat a_j e^{i(\vect {\hat y}_j^\top, \tau \vect {\hat v}^\top_j)(\vect u+\vect v)} -\sum_{j=1}^{n}a_j e^{i(\vect y_j^\top, \tau \vect v_j^\top)(\vect u+\vect v)}} \ \Big(\vect u+\vect v=\vect \omega_t, \vect u = \frac{(1+t)\omega}{2\sqrt{d}}(1,\cdots, 1)^\top, |\omega|\leq \Omega, \vect u^\top \vect v =0 \Big)\\
=& \babs{\sum_{j=1}^{n-1}\hat a_j e^{i(\vect {\hat y}_j^\top, \tau \vect {\hat v}_j^\top)\vect u} -\sum_{j=1}^{n}a_j e^{i(\vect y_j^\top, \tau \vect v_j^\top)\vect u}}\\
= &\babs{\sum_{j=1}^{n-1}\hat a_j e^{i\hat x_j \hat \omega} -\sum_{j=1}^{n}a_j e^{ix_j \hat \omega}}, 
\end{align*}
where $$\hat x_j = (\vect {\hat y}^\top_j, \tau \vect {\hat v}_j^\top)\frac{(1+T)}{2\sqrt{d}}(1,1, \cdots,1)^\top= -j \Delta,\ x_j =(\vect y_j^\top, \tau \vect v_j^\top)\frac{(1+T)}{2\sqrt{d}}(1, 1, \cdots, 1)^\top = (j-1)\Delta, $$ and $$\babs{\hat \omega} = \babs{\frac{t+1}{T+1}\omega}\leq \Omega.$$ Furthermore, by (\ref{equ:trackhighdnumberlowerbound1equ1}), we thus have 
\[
\babs{\sum_{j=1}^{n-1}\hat a_j e^{i(\vect {\hat y}_j^\top + t \tau \vect {\hat v}_j^\top)\vect \omega} -\sum_{j=1}^{n}a_j e^{i(\vect y_j^\top + t \tau \vect v_j^\top)\vect \omega}} <\sigma, \ ||\vect \omega||_2\leq \Omega, \ t=0, \cdots, T. 
\]
\end{proof}

\subsection{Stability of location and velocity recovery in the tracking problem}\label{sec:trackstabilityoflocalvelocityrecovery}
In this subsection, we present our main results for recovering the locations and velocities of point sources in the $d$-dimensional tracking problem. We have the following theorem whose proof is given in Section \ref{sec:proofmainresults}. 

\begin{thm}\label{thm:trackhighdupperboundsupportlimit0}
Let $n\geq 2$ and $T\geq \frac{n(n+1)}{2}$. Let the measurement $\mathbf Y_t, t=0, \cdots, T,$ in (\ref{equ:modelsetting1}) be generated by a $n$-sparse parameter set $\{(a_j, \vect y_j, \tau \vect v_j)\}_{j=1}^n$ with $\vect \alpha_j:= \begin{pmatrix}
\vect y_j\\
\tau \vect v_j
\end{pmatrix}
\in B_{\frac{(n-1)\pi}{n(n+2)\Omega}}^{2d}(\vect 0)$. Assume that
\begin{equation}
d_{\min}:=\min_{p\neq j}\Big|\Big|\vect \alpha_p-\vect \alpha_j\Big|\Big|_2\geq \frac{11.76e\pi^2\sqrt{(\frac{(n+1)n}{2})^2+1}4^{d-1}\Big((n+2)(n+1)/2\Big)^{\xi(d-1)}}{\Omega} \Big(\frac{\sigma}{m_{\min}}\Big)^{\frac{1}{2n-1}}, \end{equation}
where $\xi(d-1)$ is defined as in (\ref{equ:defineofxi}). If a $n$-sparse parameter set $\{(\hat a_j, \vect {\hat y}_j, \tau \vect{\hat v}_j)\}_{j=1}^n$ with $\vect {\hat \alpha}_j:= \begin{pmatrix}
\vect {\hat y}_j\\
\tau \vect {\hat v}_j
\end{pmatrix}
$ supported on $B_{\frac{(n-1)\pi}{n(n+2)\Omega}}^{2d}(\vect 0)$ is a $\sigma$-admissible parameter set of $\ \vect Y_t$'s, then after reordering the $\vect {\hat \alpha}_j$'s, we have 
\[
\btwonorm{\vect {\hat \alpha}_j - \vect {\alpha}_j} < \frac{d_{\min}}{2}.
\]
Moreover, we have 
\begin{equation}\label{equ:trackhighdsupportlimithm0equ1}
\Big|\Big|\vect {\hat \alpha}_j - \vect {\alpha}_j\Big|\Big|_2\leq \frac{C(d, n)}{\Omega}SRF^{2n-2}\frac{\sigma}{m_{\min}}, \quad 1\leq j\leq n,
\end{equation}
where $SRF$ is the super-resolution factor defined by $\frac{\pi}{d_{\min}\Omega}$ and 
\[
C(d, n)=\sqrt{6\pi}(2\pi)^{2n-2}\Big((\frac{n(n+1)}{2})^2+1\Big)^{\frac{2n-1}{2}}\big(4^{d-1}((n+2)(n+1)/2)^{\xi(d-1)}\big)^{2n-1}n2^{4n-2}e^{2n}\pi^{-\frac{1}{2}}.
\]
\end{thm}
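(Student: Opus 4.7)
The plan is to treat the problem as a $(2d)$-dimensional super-resolution problem for the location–velocity pairs $\vect\alpha_j=(\vect y_j^\top,\tau\vect v_j^\top)^\top$. The key reformulation is to introduce extended frequency vectors $\vect\omega_t:=(\vect\omega^\top, t\vect\omega^\top)^\top\in\mathbb{R}^{2d}$, so that the measurement equation becomes $\vect Y_t(\vect\omega)=\sum_{j=1}^{n} a_j e^{i\vect\alpha_j^\top\vect\omega_t}+\vect W_t(\vect\omega)$. The inverse problem then becomes one of recovering $2d$-dimensional spikes probed by the restricted frequency set $\{(\vect\omega^\top, t\vect\omega^\top)^\top:\|\vect\omega\|_2\le\Omega,\ t=0,\ldots,T\}$, and I would apply the high-dimensional Vandermonde-space techniques from \cite{liu2021mathematicalhighd}, adapted to this restricted geometry.

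First, I would show that any $n$-sparse $\sigma$-admissible parameter set $\{\vect{\hat\alpha}_j\}_{j=1}^n$ can be matched one-to-one with $\{\vect\alpha_j\}_{j=1}^n$ in such a way that each pair lies at distance below $d_{\min}/2$. If some $\vect{\hat\alpha}_j$ lay farther than $d_{\min}/2$ from every $\vect\alpha_k$, a pigeonhole argument would force two of the $\vect{\hat\alpha}_j$'s into a common cluster of radius $d_{\min}/2$ around one true spike; collapsing these two into a single effective spike would yield an $(n-1)$-sparse $\sigma'$-admissible parameter set for some $\sigma'$ comparable to $\sigma$, contradicting the number-recovery bound in Theorem \ref{thm:trackhighdupperboundnumberlimit0}. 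The exponent $1/(2n-1)$ in the separation hypothesis (compared with $1/(2n-2)$ in Theorem \ref{thm:trackhighdupperboundnumberlimit0}) is precisely what is needed to drive this contradiction through.

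For the quantitative bound (\ref{equ:trackhighdsupportlimithm0equ1}), I would project onto carefully chosen unit directions $\vect q\in\mathbb{R}^{2d}$. For each matched pair $(\vect\alpha_j,\vect{\hat\alpha}_j)$, I would pick $\vect q$ aligned with $\vect{\hat\alpha}_j-\vect\alpha_j$, find a direction $\vect\omega$ and time $t\in\{0,\ldots,T\}$ such that $(\vect\omega^\top, t\vect\omega^\top)^\top$ is (up to normalization) parallel to $\vect q$, and then vary the scalar magnitude along this line to obtain one-dimensional samples of $\sum_j a_j e^{is\vect q^\top\vect\alpha_j}$ on an interval $[-\Omega',\Omega']$. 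Applying the one-dimensional Vandermonde-space approximation theory in the sub-Rayleigh regime then yields the bound with the $SRF^{2n-2}$ amplification factor, and $C(d,n)$ emerges from tracking all of the geometric constants through the reduction.

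The main technical obstacle is that the admissible frequencies do not fill the full $2d$-dimensional ball of radius $\Omega$; they are confined to the union of the $T+1$ hyperplanes $\{(\vect\omega^\top, t\vect\omega^\top)^\top:\|\vect\omega\|_2\le\Omega\}$. Any one-dimensional projection direction $\vect q=(\vect q_1^\top,\vect q_2^\top)^\top$ that can be probed by the data must satisfy $\vect q_2=t\vect q_1$ for some integer $t\in\{0,\ldots,T\}$. Realizing arbitrary directions in $\mathbb{R}^{2d}$ forces $t$ to range up to roughly $n(n+1)/2$, which explains the hypothesis $T\ge n(n+1)/2$ and produces the factor $\sqrt{(n(n+1)/2)^2+1}$ coming from the norm of $(\vect\omega^\top, t\vect\omega^\top)^\top$; meanwhile, sweeping over enough projection directions to cover all spike separations in $\mathbb{R}^{2d}$ introduces the dimensional factor $4^{d-1}((n+2)(n+1)/2)^{\xi(d-1)}$. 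Balancing these geometric losses against the one-dimensional Vandermonde stability estimate while preserving the sharp $SRF^{2n-2}$ scaling is the most delicate part of the argument.
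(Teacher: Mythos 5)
Your overall framework is the right one --- the paper does exactly the reformulation you describe, writing the data as $\sum_j a_j e^{i\vect\alpha_j^\top\vect\omega_t}$ with $\vect\omega_t=(\vect\omega^\top,t\vect\omega^\top)^\top$ and treating the problem as $2d$-dimensional super-resolution probed only on the union of the $d$-dimensional subspaces $S_t^d=\{(\vect v^\top,t\vect v^\top)^\top\}$. But your quantitative step contains a genuine gap. You propose to pick, for each matched pair, a unit direction $\vect q$ aligned with $\vect{\hat\alpha}_j-\vect\alpha_j$ and then find $(\vect\omega,t)$ with $(\vect\omega^\top,t\vect\omega^\top)^\top$ parallel to $\vect q$. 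This is impossible for generic $\vect q$: a direction $(\vect q_1^\top,\vect q_2^\top)^\top$ is realizable only if $\vect q_2=t\vect q_1$ for some integer $t\in\{0,\dots,T\}$, so the accessible directions form a measure-zero union of $d$-dimensional subspaces of $\mathbb{R}^{2d}$, and no choice of $T$ fixes this. Increasing $t$ does not "realize arbitrary directions"; the error vector $\vect{\hat\alpha}_j-\vect\alpha_j$ will in general lie in none of the $S_t^d$. The paper's resolution is precisely the missing idea: one never probes the error direction itself, but instead (i) shows by a counting argument over $t=0,\dots,n(n+1)/2$ that at least $n+1$ of the subspaces $S_t^d$ see all pairwise separations $\vect\alpha_p-\vect\alpha_q$ at size $\gtrsim d_{\min}/\sqrt{t^2+1}$ (Lemma \ref{lem:spaceangleestimate4}; this is where $T\ge n(n+1)/2$ and the factor $\sqrt{(n(n+1)/2)^2+1}$ actually enter), (ii) applies the static $d$-dimensional stability theorem inside each good subspace, and (iii) uses the pigeonhole principle to find, for each $\vect\alpha_j$, \emph{two} subspaces $S_{t_1}^d,S_{t_2}^d$ assigning it the same $\vect{\hat\alpha}_{p_j}$, whose guaranteed mutual angle (Lemmas \ref{lem:spaceangleestimate3} and \ref{lem:projectlengthestimate2}) lets the two projected errors control the full $2d$-dimensional error. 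Without an analogue of this two-subspace angular lower bound, your reduction cannot produce the bound on $\|\vect{\hat\alpha}_j-\vect\alpha_j\|_2$.

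A secondary discrepancy: you obtain the matching $\|\vect{\hat\alpha}_j-\vect\alpha_j\|_2<d_{\min}/2$ by collapsing two nearby recovered spikes into one and contradicting Theorem \ref{thm:trackhighdupperboundnumberlimit0}. The paper does not argue this way, and your version is delicate because the recovered amplitudes $\hat a_j$ are not a priori bounded, so replacing two close spikes by one is not automatically a $\sigma'$-admissible set with $\sigma'$ comparable to $\sigma$. In the paper the matching is a corollary of the quantitative estimate: one first proves the error bound via the projection argument and then checks numerically that the resulting bound is smaller than $d_{\min}/2$, which forces the association to be a bijection.
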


Theorem \ref{thm:trackhighdupperboundsupportlimit0} demonstrates that for fixed dimensionality $d$ and source number $n$, when $\min_{j\neq p} \btwonorm{\begin{pmatrix} \vect y_j\\ \tau \vect v_j\end{pmatrix} - \begin{pmatrix}\vect y_p\\ \tau \vect v_p\end{pmatrix}} \geq O\Big(\frac{1}{\Omega}(\frac{\sigma}{m_{\min}})^{\frac{1}{2n-1}}\Big)$, it is possible to stably reconstruct both the source locations and velocities in the tracking problem. It also indicates that even when in some of the frames, the point sources are spaced so close that are unable to be stably resolved, but the dynamic reconstruction can still stably resolve the source locations and velocities. It demonstrates the superiority of dynamic reconstruction over the static reconstruction method.  In addition, by the following results, we shall see that the order $O(\frac{1}{\Omega}(\frac{\sigma}{m_{\min}})^{\frac{1}{2n-1}})$ is optimal in the worst-case scenario without further information on the velocities. We first recall the following result on the location recovery in the static super-resolution problem. 

\begin{prop}\label{prop:staticsupportlowerboundthm0}
	For given $0<\sigma<m_{\min}$ and integer $n\geq 2$, let 
	\begin{equation}\label{supportlowerboundequ0}
	\Delta=\frac{0.49e^{-\frac{3}{2}}}{\Omega}\ \Big(\frac{\sigma}{m_{\min}}\Big)^{\frac{1}{2n-1}}.
	\end{equation}
	Then there exist $a_j\in \mathbb C, j=1, \cdots,n$ and $\vect y_j$'s at $\{(-\Delta, 0, \cdots, 0)^\top, (-2\Delta, 0, \cdots, 0)^\top, (-n\Delta, 0, \cdots, 0)^\top\}$ and another $\hat a_j\in \mathbb C, j=1, \cdots, n$ and ${\vect{\hat y}_j}$'s at  $\{(0,0,\cdots, 0)^\top,(\Delta, 0, \cdots, 0)^\top, ((n-1)\Delta, 0, \cdots, 0)^\top\}$ such that
	\begin{equation}\label{equ:staticsupportlowerboundthm0eq0}
	\babs{\sum_{j=1}^{n} \hat a_j e^{i \vect {\hat y}_j^\top \vect \omega}- \sum_{j=1}^{n} a_j e^{i \vect y_j^\top \vect \omega}}<\sigma, \quad  ||\vect \omega||_2\leq \Omega, 
	\end{equation}
	and either $\min_{1\leq j\leq n}|a_j|= m_{\min}$ or $\min_{1\leq j\leq n}|\hat a_j|= m_{\min}$.   
\end{prop}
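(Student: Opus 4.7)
The strategy is a direct reduction to the one-dimensional analogue of the statement, which has already been established in the authors' earlier work (see e.g.\ \cite{liu2021mathematicaloned, liu2021theorylse}), followed by an embedding of the extremal configurations along the first coordinate axis of $\mathbb{R}^d$. First I would invoke the 1D version of the result: for the same $\Delta$ as in \eqref{supportlowerboundequ0}, there exist amplitudes $a_1,\dots,a_n,\hat a_1,\dots,\hat a_n\in\mathbb{C}$ with either $\min_j|a_j|=m_{\min}$ or $\min_j|\hat a_j|=m_{\min}$ such that
\[
\Bigl|\sum_{j=1}^{n}\hat a_j e^{i(j-1)\Delta\,\omega}-\sum_{j=1}^{n}a_j e^{-ij\Delta\,\omega}\Bigr|<\sigma,\quad |\omega|\leq\Omega.
\]

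Next I would lift these two configurations to $\mathbb{R}^d$ by placing all $2n$ point sources on the first coordinate axis, i.e.\ $\vect y_j:=(-j\Delta,0,\dots,0)^\top$ and $\vect{\hat y}_j:=((j-1)\Delta,0,\dots,0)^\top$, which matches the point sets in the statement. For any $\vect\omega\in\mathbb{R}^d$ with $\|\vect\omega\|_2\leq\Omega$, write $\omega_1$ for its first coordinate; then $\vect y_j^\top\vect\omega=-j\Delta\,\omega_1$, $\vect{\hat y}_j^\top\vect\omega=(j-1)\Delta\,\omega_1$, and crucially $|\omega_1|\leq\|\vect\omega\|_2\leq\Omega$. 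Substituting $\omega=\omega_1$ into the 1D bound above immediately delivers \eqref{equ:staticsupportlowerboundthm0eq0}, and the amplitude condition is inherited verbatim from the 1D construction.

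The actual content of the proposition lives in the cited 1D statement, whose proof proceeds by a Vandermonde/Prony-type argument on the symmetric $2n$-point grid $\{-n\Delta,\dots,-\Delta,0,\Delta,\dots,(n-1)\Delta\}$: one exhibits a nonzero coefficient vector $\vect c\in\mathbb{C}^{2n}$ on this grid whose associated trigonometric polynomial is uniformly bounded by $\sigma$ on $[-\Omega,\Omega]$ while $\|\vect c\|_\infty\geq m_{\min}$, then splits $\vect c$ into its negative- and nonnegative-index parts to obtain the two competing $n$-sparse measures; the constant $0.49e^{-3/2}$ comes precisely from that estimate. The only conceptual obstacle is this 1D Vandermonde bound, which however is not new and can be quoted directly. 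The embedding step introduces no loss whatsoever, because $|\omega_1|\leq\|\vect\omega\|_2$ means the $d$-dimensional Fourier constraint is strictly stronger than the 1D one, so the constant transfers without modification.
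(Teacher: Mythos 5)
Your proposal is correct and matches the paper's route: the paper simply cites Proposition 2.8 of \cite{liu2021mathematicalhighd}, and that result is itself obtained exactly as you describe, by embedding the one-dimensional extremal pair of $n$-point configurations along the first coordinate axis and using $|\omega_1|\leq ||\vect \omega||_2\leq \Omega$ so that the constant transfers without loss (the same embedding device the present paper uses explicitly in Propositions \ref{prop:trackhighdnumberlowerbound1} and \ref{prop:trackhighdsupportlowerbound1}). Since the genuine content is the quoted one-dimensional construction, which you correctly identify and attribute, there is nothing to add.
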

\begin{proof}
See Proposition 2.8 in \cite{liu2021mathematicalhighd}.
\end{proof}

As a direct consequence of Proposition \ref{prop:staticsupportlowerboundthm0}, we have the following result.
\begin{prop}\label{prop:trackhighdsupportlowerbound0}
	For given $0<\sigma<m_{\min}$ and integer $n\geq 2$, let 
	\begin{equation}\label{equ:trackhighdsupportlowerbound0equ0}
	\Delta=\frac{0.49e^{-\frac{3}{2}}}{\Omega}\ \Big(\frac{\sigma}{m_{\min}}\Big)^{\frac{1}{2n-1}}.
	\end{equation}
	Then there exist $n$-sparse parameter set $\{(a_j, \vect y_j, \tau \vect v_j)\}_{j=1}^n$'s with $\vect v_j = \vect v,j=1,\cdots,n$ and $$\vect y_1 = (-\Delta, 0, \cdots, 0)^\top, \vect y_2 = (-2\Delta, 0, \cdots, 0)^\top, \cdots , \vect y_n = (-n\Delta, \cdots, 0)^\top,$$ and $n$-sparse parameter set $\{(\hat a_j, \vect {\hat y}_j, \tau \vect {\hat v}_j)\}_{j=1}^n$'s with $\vect {\hat v}_j = \vect v,j=1,\cdots,n$ and $$\vect {\hat y}_1 = (0, \cdots, 0)^\top, \vect {\hat y}_2=(\Delta, 0, \cdots, 0)^\top, \cdots, \vect {\hat y}_n=((n-1)\Delta, \cdots, 0)^\top$$ such that
	\[
	\babs{\sum_{j=1}^{n}\hat a_j e^{i(\vect{\hat y}_j^\top + t \tau \vect {\hat v}_j^\top)\vect \omega} -\sum_{j=1}^{n}a_j e^{i(\vect y_j^\top + t \tau \vect v_j^\top)\vect \omega}}< \sigma, \ ||\vect \omega||_2\leq \Omega, \ t=0, \cdots,T,
	\]
	and either $\min_{1\leq j\leq n}|a_j|= m_{\min}$ or $\min_{1\leq j\leq n}|\hat a_j|= m_{\min}$.   
\end{prop}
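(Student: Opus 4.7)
The plan is to reduce the dynamic statement to the static one (Proposition \ref{prop:staticsupportlowerboundthm0}) by exploiting the fact that all particles in both candidate configurations share the same velocity $\vect v$. Concretely, take the amplitudes $a_j, \hat a_j$ and the first-coordinate locations $\vect y_j, \vect{\hat y}_j$ exactly as produced by Proposition \ref{prop:staticsupportlowerboundthm0} with $\Delta$ as in (\ref{equ:trackhighdsupportlowerbound0equ0}), and set $\vect v_j=\vect{\hat v}_j=\vect v$ for a single arbitrary $\vect v\in\mathbb R^d$ (for instance $\vect v=\vect 0$).

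The main step is then the identity, for any $t\in\{0,\dots,T\}$ and any $\vect\omega\in\mathbb R^d$,
\[
\sum_{j=1}^{n}\hat a_j e^{i(\vect{\hat y}_j^\top+t\tau\vect{\hat v}_j^\top)\vect\omega}-\sum_{j=1}^{n}a_j e^{i(\vect y_j^\top+t\tau\vect v_j^\top)\vect\omega}=e^{it\tau\vect v^\top\vect\omega}\Bigl(\sum_{j=1}^{n}\hat a_j e^{i\vect{\hat y}_j^\top\vect\omega}-\sum_{j=1}^{n}a_j e^{i\vect y_j^\top\vect\omega}\Bigr),
\]
which holds because the common velocity factors out. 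Since $|e^{it\tau\vect v^\top\vect\omega}|=1$, the modulus of the left-hand side equals the modulus of the static mismatch at the single frequency $\vect\omega$, which by Proposition \ref{prop:staticsupportlowerboundthm0} is strictly less than $\sigma$ for all $\|\vect\omega\|_2\le\Omega$. This gives the required $\sigma$-admissibility across every time step $t=0,\dots,T$ simultaneously.

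Finally, the amplitude condition $\min_j|a_j|=m_{\min}$ or $\min_j|\hat a_j|=m_{\min}$ is inherited directly from the static construction, and the location list stated in the proposition matches the list from Proposition \ref{prop:staticsupportlowerboundthm0} (the $\vect y_j$'s and $\vect{\hat y}_j$'s lie along the first coordinate axis with uniform spacing $\Delta$).

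There is no real obstacle here: the argument is a one-line factorization, and the only thing to be careful about is that both configurations use the same constant velocity $\vect v$, so that the time-dependent phase $e^{it\tau\vect v^\top\vect\omega}$ is common to all $2n$ terms and can be pulled out of the difference. This is precisely why the proposition only rules out super-resolution beyond the static bound when the velocities are allowed to coincide; separating velocities is what eventually buys the improved $T\Omega$ cut-off in Section \ref{sec:resolutionforvelocities}.
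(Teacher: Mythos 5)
Your proof is correct and is essentially the paper's own argument: the paper states Proposition \ref{prop:trackhighdsupportlowerbound0} as a direct consequence of Proposition \ref{prop:staticsupportlowerboundthm0}, relying on exactly the observation you make explicit, namely that a common velocity $\vect v$ contributes only the unimodular factor $e^{it\tau\vect v^\top\vect\omega}$, which can be pulled out of the difference without changing its modulus (this is the same reduction used in the proof of Proposition \ref{prop:trackhighdnumberlowerbound0}). Nothing is missing.
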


This result holds for the case when all $\tau \vect v_j$'s are equal or very close to each other with respect to $T$. Thus, theoretically the resolution of $\vect y_j$'s in the worst-case is expected to be of order $O\Big(\frac{1}{\Omega}(\frac{\sigma}{m_{\min}})^{\frac{1}{2n-1}}\Big)$.  

If the velocities are not close to each other, we next have Proposition \ref{prop:trackhighdsupportlowerbound1}, which shows that the order of the resolution in Theorem \ref{thm:trackhighdupperboundsupportlimit0} is nearly optimal for the reconstruction problem with short time period (i.e., when $T$ is small). For a tracking problem with long time period, we expect the resolution for the velocities to be of order $O(\frac{1}{T\Omega}(\frac{\sigma}{m_{\min}})^{\frac{1}{2n-1}})$. This will also be demonstrated by results in Section \ref{sec:resolutionforvelocities}. 
\begin{prop}\label{prop:trackhighdsupportlowerbound1}
	For given $0<\sigma<m_{\min}$ and integer $n\geq 2$, let 
	\begin{equation}\label{equ:trackhighdsupportlowerbound1equ0}
	\Delta=\frac{0.49e^{-\frac{3}{2}}}{(T+1)\Omega}\ \Big(\frac{\sigma}{m_{\min}}\Big)^{\frac{1}{2n-1}}.
	\end{equation}
	Then there exist $n$-sparse parameter set $\{(\hat a_j, \vect {\hat y}_j, \tau \vect {\hat v}_j)\}_{j=1}^n$'s with $(\vect {\hat y}_1^\top, \vect {\hat v}_1^\top) = - \frac{\Delta}{\sqrt{d}}(1,\cdots, 1),\ \cdots,$ 
	$(\vect {\hat y}_n^\top, \vect {\hat v}_n^\top) = -\frac{n\Delta}{\sqrt{d}}(1,\cdots, 1)$, and $n$-sparse parameter set $\{(a_j, \vect y_j, \tau \vect v_j)\}_{j=1}^n$'s with $(\vect y_1^\top, \vect v_1^\top) = (0,0, \cdots, 0),$\\ 
	$(\vect y_2^\top, \vect v_2^\top) = \frac{\Delta}{\sqrt{d}}(1, \cdots, 1), \cdots, (\vect y_n^\top, \vect v_n^\top) = \frac{(n-1)\Delta}{\sqrt{d}}(1, \cdots, 1)$ such that
	\[
	\babs{\sum_{j=1}^{n}\hat a_j e^{i(\vect {\hat y}_j^\top + t \tau \vect {\hat v}_j^\top)\vect \omega} -\sum_{j=1}^{n}a_j e^{i(\vect y_j^\top + t \tau \vect v_j^\top)\vect \omega}}< \sigma, \ ||\vect \omega||_2\leq \Omega, \ t=0, \cdots,T,
	\]
	and either $\min_{1\leq j\leq n}|a_j|= m_{\min}$ or $\min_{1\leq j\leq n}|\hat a_j|= m_{\min}$.   
\end{prop}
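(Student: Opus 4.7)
The plan is to lift the one-dimensional static ambiguity of Proposition \ref{prop:staticsupportlowerboundthm0} into the dynamic $2d$-dimensional setting by placing every location-velocity vector $(\vect y_j^\top, \tau \vect v_j^\top)$ along a single common direction in $\mathbb{R}^{2d}$. The direction will be chosen so that, for each $t \in \{0,\ldots,T\}$ and each $\vect\omega$ with $\|\vect\omega\|_2\leq\Omega$, the dynamic phase $(\vect y_j^\top + t\tau \vect v_j^\top)\vect\omega$ collapses into a scalar multiple of some $\omega^*\in[-\Omega,\Omega]$, thereby reducing the dynamic admissibility constraint to the static one. This is essentially the same trick used in the proof of Proposition \ref{prop:trackhighdnumberlowerbound1}, only with the static input Proposition \ref{prop:staticnumberlowerboundthm0} replaced by Proposition \ref{prop:staticsupportlowerboundthm0}, which is why the exponent changes from $1/(2n-2)$ to $1/(2n-1)$ and the constant $0.81$ becomes $0.49$.

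First, I would set $\Delta_{\mathrm{st}} := (T+1)\Delta = \frac{0.49\,e^{-3/2}}{\Omega}(\sigma/m_{\min})^{1/(2n-1)}$, which is exactly the step size appearing in Proposition \ref{prop:staticsupportlowerboundthm0}. That proposition then furnishes complex amplitudes $a_j,\hat a_j$ with either $\min_j|a_j|=m_{\min}$ or $\min_j|\hat a_j|=m_{\min}$ such that, by taking $\vect\omega = (\omega^*,0,\ldots,0)^\top$ in its $d$-dimensional statement,
\[
\babs{\sum_{j=1}^n \hat a_j e^{-ij\Delta_{\mathrm{st}}\omega^*} - \sum_{j=1}^n a_j e^{i(j-1)\Delta_{\mathrm{st}}\omega^*}} < \sigma \qquad \text{for every } |\omega^*|\leq \Omega.
\]

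Next, I would adopt the lifted configuration prescribed in the proposition, writing $(\vect{\hat y}_j^\top, \tau\vect{\hat v}_j^\top) = -j\Delta\,\vect e^\top$ and $(\vect y_j^\top, \tau\vect v_j^\top) = (j-1)\Delta\,\vect e^\top$, with $\vect e := \frac{1}{\sqrt d}(1,\ldots,1)^\top \in \mathbb{R}^{2d}$. For arbitrary $\vect\omega\in\mathbb{R}^d$ with $\|\vect\omega\|_2\leq\Omega$ and $t\in\{0,\ldots,T\}$, a direct computation yields
\[
(\vect{\hat y}_j^\top + t\tau\vect{\hat v}_j^\top)\vect\omega = -j\Delta\cdot \vect e^\top\begin{pmatrix}\vect\omega \\ t\vect\omega\end{pmatrix} = -j\Delta\cdot\frac{1+t}{\sqrt d}\sum_{k=1}^d \omega_k = -j\Delta_{\mathrm{st}}\,\omega^*,
\]
where $\omega^* := \frac{1+t}{(T+1)\sqrt d}\sum_{k=1}^d \omega_k$ satisfies $|\omega^*|\leq \frac{1+t}{T+1}\|\vect\omega\|_2\leq \Omega$ by Cauchy--Schwarz. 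The identical computation gives $(\vect y_j^\top + t\tau\vect v_j^\top)\vect\omega = (j-1)\Delta_{\mathrm{st}}\omega^*$, so substituting into the static inequality produces the desired uniform bound for all $t$ and $\vect\omega$.

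There is no substantive obstacle here: once the phase factoring $(\vect y_j^\top + t\tau\vect v_j^\top)\vect\omega = [(j-1)\Delta_{\mathrm{st}}]\cdot\omega^*$ is in hand, the conclusion follows by substitution. The only calibration worth emphasizing is that the factor $\frac{1}{T+1}$ has to be absorbed into $\Delta$ from the outset (which is precisely how $\Delta$ is defined in the statement), so that the rescaled frequency $\omega^*$ stays within $[-\Omega,\Omega]$ uniformly in $t\leq T$; without this absorption the reduction to the static problem would fail.
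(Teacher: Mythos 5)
Your proposal is correct and follows essentially the same route as the paper's proof: both reduce to the one-dimensional static lower bound of Proposition \ref{prop:staticsupportlowerboundthm0} with step size $(T+1)\Delta$ and then lift along the common direction $\frac{1}{\sqrt{d}}(1,\ldots,1)^\top$ in $\mathbb{R}^{2d}$, so that the phase $(\vect y_j^\top + t\tau\vect v_j^\top)\vect\omega$ collapses to $(j-1)(T+1)\Delta\,\omega^*$ with $|\omega^*|\leq\Omega$. The only cosmetic difference is that you compute the inner product directly while the paper phrases the same reduction through an orthogonal decomposition $\vect\omega_t=\vect u+\vect v$.
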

\begin{proof}
Let $ \hat \Delta = \frac{0.49e^{-\frac{3}{2}}}{\Omega}\Big(\frac{\sigma}{m_{\min}}\Big)^{\frac{1}{2n-1}}$. Let $\hat x_j= - j \hat \Delta, x_j =(j-1) \hat \Delta, j=1, \cdots, n$, by Proposition \ref{prop:staticsupportlowerboundthm0} for the one-dimensional case we have 
\begin{equation}\label{equ:trackhighdsupportlowerbound1equ1}
\babs{\sum_{j=1}^{n}\hat a_j e^{i \hat x_j \omega} - \sum_{j=1}^{n-1}a_j e^{i x_j \omega}}<\sigma, \quad |\omega|\leq \Omega,
\end{equation}
for certain $\hat a_j, a_j$'s. We consider the above $\hat a_j, a_j$'s and the following $\vect {\hat y}_j , \tau \vect {\hat v}_j, \vect y_j, \tau \vect v_j$'s,
\[
(\vect {\hat y}_1^\top, \tau \vect {\hat v}_1^\top)= \frac{-\Delta}{\sqrt{d}} (1,\cdots, 1),\quad \cdots, \quad (\vect{\hat  y}_n^\top, \tau \vect {\hat v}_n^\top) = \frac{-n\Delta}{\sqrt{d}}(1, \cdots, 1),
\]  
and 
\[
(\vect y_1^\top, \tau \vect v_1^\top) = (0,\cdots, 0),\ (\vect y_2^\top, \tau \vect v_2^\top)=\frac{\Delta}{\sqrt{d}}(1, \cdots, 1),\ \cdots,\ (\vect y_n^\top, \tau \vect v_n^\top) = \frac{(n-1)\Delta}{\sqrt{d}}(1, \cdots, 1).
\]
For any $||\vect \omega||_2\leq \Omega, t=0, \cdots, T$, we obtain that
\begin{align*}
&\babs{\sum_{j=1}^{n}\hat a_j e^{i(\vect {\hat y}_j^\top + t \tau \vect {\hat v}_j^\top)\vect \omega} -\sum_{j=1}^{n}a_j e^{i(\vect y_j^\top + t \tau \vect v_j^\top)\vect \omega}}=  \babs{\sum_{j=1}^{n}\hat a_j e^{i(\vect {\hat y}_j^\top, \tau \vect {\hat v}_j^\top)\vect \omega_t} -\sum_{j=1}^{n}a_j e^{i(\vect y_j^\top, \tau \vect v_j^\top)\vect \omega_t}}\\
=& \babs{\sum_{j=1}^{n}\hat a_j e^{i(\vect {\hat y}_j^\top, \tau \vect {\hat v}_j^\top)(\vect u+\vect v)} -\sum_{j=1}^{n}a_j e^{i(\vect y_j^\top, \tau \vect v_j^\top)(\vect u+\vect v)}} \ \Big(\vect u+\vect v=\vect \omega_t, \vect u = \frac{(1+t)\omega}{2\sqrt{d}}(1,\cdots, 1)^\top, |\omega|\leq \Omega, \vect u^\top \cdot \vect v =0 \Big)\\
=& \babs{\sum_{j=1}^{n}\hat a_j e^{i(\vect {\hat y}_j^\top, \tau \vect {\hat v}_j^\top)\vect u} -\sum_{j=1}^{n}a_j e^{i(\vect y_j^\top, \tau \vect v_j^\top)\vect u}}\\
= &\babs{\sum_{j=1}^{n}\hat a_j e^{i\hat x_j \hat \omega} -\sum_{j=1}^{n}a_j e^{ix_j \hat \omega}}, 
\end{align*}
where $\hat x_j = (\vect {\hat y}_j^\top, \tau \vect {\hat v}_j^\top)\frac{(1+T)}{2\sqrt{d}}(1, \cdots, 1)^\top= -j \hat \Delta, x_j =(\vect {\hat y}_j^\top, \tau \vect {\hat v}_j^\top)\frac{(1+T)}{2\sqrt{d}}(1, \cdots, 1)^\top = (j-1)\hat \Delta$ and $\babs{\hat \omega} = \babs{\frac{t+1}{T+1}\omega}\leq \Omega$. Furthermore,  by (\ref{equ:trackhighdsupportlowerbound1equ1}), we thus have 
\[
\babs{\sum_{j=1}^{n}\hat a_j e^{i(\vect {\hat y}_j^\top + t \tau \vect {\hat v}_j^\top)\vect \omega} -\sum_{j=1}^{n}a_j e^{i(\vect y_j^\top + t \tau \vect v_j^\top)\vect \omega}} <\sigma, \quad ||\vect \omega||_2\leq \Omega, \ t=0, \cdots, T. 
\]
\end{proof}

\subsection{Stability for sparsity-promoting dynamic reconstruction}
Sparsity-based modeling and optimization is a common way in super-resolution that seeks to accelerate the resolving process or reduce the number of measurements. In \cite{alberti2019dynamic}, a sparsity-promoting algorithm is proposed to super-resolve source locations and velocities in the tracking problem, where the authors' aim is to find an admissible $2d$-dimensional source with minimum total variation norm. There, the theoretical stability result for the optimization requires the point sources to be separated by more than a Rayleigh length in each time step $t$, which is inadequate for demonstrating its super-resolution capability.

As a corollary of the results in Subsections \ref{sec:trackstabilityofnumberrecovery} and \ref{sec:trackstabilityoflocalvelocityrecovery}, we can derive a stability result for a sparsity-promoting dynamic reconstruction in the tracking problem, which demonstrates the possibility of achieving super-resolution for the sparsity-promoting dynamic reconstructions. We consider a sparsity-promoting dynamic tracking problem which seeks to find the sparest solution satisfying the measurement constraints. The optimization problem is  
\begin{equation}\label{equ:sparsityoptimization}
\min_{\{(\hat a_j, \vect {\hat y}_j, \tau \vect {\hat v}_j)\}, (\vect {\hat y}_j^\top,\tau \vect {\hat v}_j^\top)^\top \in \mathcal O} \# \{(\hat a_j, \vect {\hat y}_j, \tau \vect {\hat v}_j)\} \quad \text{subject to} \  |\mathcal F[\rho_t](\vect \omega) - \vect Y_k(\vect \omega)|< \sigma, \ t=0,\cdots,T, \ ||\vect \omega||_2\leq \Omega,
\end{equation}
where $\rho_t = \sum_{j=1}^n \hat a_j \delta_{\vect {\hat y}_j+t\tau \vect {\hat v}_j}$, and $\mathcal F[\rho_t](\vect \omega) = \sum_{j=1}^n \hat a_j e^{i(\vect {\hat y}_j^\top+t\tau \vect {\hat v}_j^\top)\vect \omega}$. We have the following theorem for the stability of the minimization problem (\ref{equ:sparsityoptimization}).

\begin{thm}\label{thm:stabilityofsparsitybaserecovery}
Let $n\geq 2$, $T\geq \frac{n(n+1)}{2}$, and $\sigma\leq m_{\min}$. Let the measurement $\mathbf Y_t, t=0, \cdots, T$, in (\ref{equ:modelsetting1}) be generated by a $n$-sparse parameter set $\{(a_j, \vect y_j, \tau \vect v_j)\}_{j=1}^n$ with $\vect \alpha_j:= \begin{pmatrix} \vect y_j\\ \tau \vect v_j \end{pmatrix} \in B_{\frac{(n-1)\pi}{n(n+2)\Omega}}^{2d}(\vect 0)$. Assume that
\begin{equation}\label{equ:stabilityofsparsitybaserecoveryequ0}
d_{\min}:=\min_{p\neq j}\Big|\Big|\vect \alpha_p-\vect \alpha_j\Big|\Big|_2\geq \frac{11.76e\pi^2\sqrt{(\frac{(n+1)n}{2})^2+1}4^{k-1}\Big((n+2)(n+1)/2\Big)^{\xi(k-1)}}{\Omega} \Big(\frac{\sigma}{m_{\min}}\Big)^{\frac{1}{2n-1}}. 
\end{equation}
Let $\mathcal O$ in (\ref{equ:sparsityoptimization}) be $B_{\frac{(n-1)\pi}{n(n+2)\Omega}}^{2d}(\vect 0)$, then the solution to (\ref{equ:sparsityoptimization}) contains exactly $n$ elements. If a $n$-sparse parameter set $\{(\hat a_j, \vect {\hat y}_j, \tau \vect {\hat v}_j)\}_{j=1}^n$ is the solution, after reordering the $\vect {\hat \alpha}_j$'s, we have 
\[
\btwonorm{\vect {\hat \alpha}_j - \vect {\alpha}_j}< \frac{d_{\min}}{2}.
\]
Moreover, we have 
\begin{equation}\label{equ:stabilityofsparsitybaserecoveryequ1}
\Big|\Big|\vect {\hat \alpha}_j - \vect {\alpha}_j\Big|\Big|_2\leq \frac{C(d, n)}{\Omega}SRF^{2n-2}\frac{\sigma}{m_{\min}}, \quad 1\leq j\leq n,
\end{equation}
where $SRF:=\frac{\pi}{d_{\min}\Omega}$ and 
\[
C(d, n)=\sqrt{6\pi}(2\pi)^{2n-2}\Big((\frac{n(n+1)}{2})^2+1\Big)^{\frac{2n-1}{2}}\big(4^{d-1}((n+2)(n+1)/2)^{\xi(d-1)}\big)^{(2n-1)}n2^{4n-2}e^{2n}\pi^{-\frac{1}{2}}.
\]
\end{thm}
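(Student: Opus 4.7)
The plan is to obtain Theorem \ref{thm:stabilityofsparsitybaserecovery} as a corollary of Theorems \ref{thm:trackhighdupperboundnumberlimit0} and \ref{thm:trackhighdupperboundsupportlimit0}: first identify the ground-truth parameter set as a feasible point of (\ref{equ:sparsityoptimization}) whose cardinality lower bounds are controlled by the number-stability theorem, then apply the support-stability theorem to any $n$-sparse minimiser.

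First I would observe that, because $|\vect W_t(\vect \omega)|<\sigma$ for every $\vect \omega$ with $\|\vect \omega\|_2\leq \Omega$ and every $t=0,\ldots,T$, the ground-truth parameter set $\{(a_j,\vect y_j,\tau \vect v_j)\}_{j=1}^n$ satisfies the measurement constraint in (\ref{equ:sparsityoptimization}), and by hypothesis its support lies in $\mathcal O=B_{(n-1)\pi/(n(n+2)\Omega)}^{2d}(\vect 0)$. It is therefore feasible, and any minimiser of (\ref{equ:sparsityoptimization}) has cardinality at most $n$.

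Next I would verify that the hypotheses of Theorem \ref{thm:trackhighdupperboundnumberlimit0} are met under those of the present theorem. The time-sampling requirement $T\geq n(n+1)/2$ already implies $T\geq n(n-1)/2$. The ball $B_{(n-1)\pi/(n(n+2)\Omega)}^{2d}(\vect 0)$ is contained in $B_{\pi/((n+1)\Omega)}^{2d}(\vect 0)$, as one checks from $\tfrac{1}{n+1}-\tfrac{n-1}{n(n+2)}=\tfrac{2n+1}{n(n+1)(n+2)}>0$, so the localisation assumption on $\vect\alpha_j$ transfers. Because $\sigma\leq m_{\min}$ one has $(\sigma/m_{\min})^{1/(2n-1)}\geq(\sigma/m_{\min})^{1/(2n-2)}$, and a term-by-term comparison of the prefactors ($11.76>8.8$, $4^{d-1}>(\pi/2)^{d-1}$, the square-root factors, and the $\xi(d-1)$-power factors) shows that the separation bound (\ref{equ:stabilityofsparsitybaserecoveryequ0}) dominates (\ref{equ:trackhighdupperboundnumberlimit0equ1}). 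Theorem \ref{thm:trackhighdupperboundnumberlimit0} then rules out any $\sigma$-admissible parameter set of cardinality strictly less than $n$, so combined with the preceding paragraph the optimal value of (\ref{equ:sparsityoptimization}) is exactly $n$.

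Finally, let $\{(\hat a_j,\vect{\hat y}_j,\tau \vect{\hat v}_j)\}_{j=1}^n$ be any $n$-sparse minimiser; by construction it is $\sigma$-admissible and supported in $B_{(n-1)\pi/(n(n+2)\Omega)}^{2d}(\vect 0)$. Theorem \ref{thm:trackhighdupperboundsupportlimit0} applies verbatim and yields, after a suitable reordering, both $\|\vect{\hat\alpha}_j-\vect\alpha_j\|_2<d_{\min}/2$ and the quantitative bound (\ref{equ:stabilityofsparsitybaserecoveryequ1}) with the constant $C(d,n)$ displayed in the theorem. The only genuinely technical step in the whole argument is the monotonicity comparison between the separation thresholds of Theorems \ref{thm:trackhighdupperboundnumberlimit0} and \ref{thm:trackhighdupperboundsupportlimit0}; every other step is a direct invocation, which is precisely why the statement is naturally presented as a consequence of the two underlying stability theorems.
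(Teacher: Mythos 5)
Your proposal is correct and follows exactly the same route as the paper's own (much terser) proof: feasibility of the ground truth bounds the optimal cardinality above by $n$, Theorem \ref{thm:trackhighdupperboundnumberlimit0} (whose separation hypothesis is implied by (\ref{equ:stabilityofsparsitybaserecoveryequ0}) since $\sigma\leq m_{\min}$) bounds it below by $n$, and Theorem \ref{thm:trackhighdupperboundsupportlimit0} applied to the $n$-sparse minimiser gives the stability bounds. Your explicit verifications of the ball containment and the term-by-term comparison of the separation thresholds are details the paper leaves implicit, but the argument is the same.
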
  
\begin{proof}
Note that by $\sigma \leq m_{\min}$, (\ref{equ:stabilityofsparsitybaserecoveryequ0}) is greater than (\ref{equ:trackhighdupperboundnumberlimit0equ1}). Thus by Theorem \ref{thm:trackhighdupperboundnumberlimit0}, the solution to (\ref{equ:sparsityoptimization}) contains $n$ elements. Then by Theorem \ref{thm:trackhighdupperboundsupportlimit0}, we prove the desired result.
\end{proof}

Theorem \ref{thm:stabilityofsparsitybaserecovery} reveals that sparsity promoting over admissible solutions satisfying the measurements of all the frames could resolve the location-velocity pairs to the resolution limit order. It provides an insight that theoretically sparsity-promoting dynamic reconstruction algorithms would have favorable performances on the tracking problem. Especially, under the separation condition \ref{equ:stabilityofsparsitybaserecoveryequ0}, any tractable sparsity-promoting algorithms (such as TV minimization) rendering the sparsest solution could stably reconstruct all the location-velocity pairs. Theorem \ref{thm:stabilityofsparsitybaserecovery} also indicates that even when the static reconstruction fails in some of the frames, the sparsity-promoting dynamic reconstruction could also resolve the location-velocity pairs, which is consistent with the numerical results in \cite{alberti2019dynamic}.

\section{Main results for velocity recovery}\label{sec:resolutionforvelocities}
In this section, we present the results for the stability of velocity recovery in the tracking problems. We shall first introduce better resolution estimates for recovering the number and values of velocities and then present a stability result for a sparsity-promoting velocity reconstruction algorithm. 

\subsection{Better resolution estimates for recovering the number and values of velocities}
In this subsection, we shall derive better resolution estimates for recovering the number of sources and the values of velocities in the tracking problem. We first have the following theorem for the reconstruction of the number of sources. 

\begin{thm}\label{thm:trackvelocitiesnumberlimit0}
	Let $n\geq 2$ and $T\geq 2n-2$. Let the measurement $\mathbf Y_t, t=0, \cdots, T,$ in (\ref{equ:modelsetting1}) be generated by a $n$-sparse parameter set $\{(a_j, \vect y_j, \tau \vect v_j)\}_{j=1}^n$ with $\tau \vect v_j \in B_{\frac{(n-1)\pi}{T\Omega}}^{d}(0)$. Assume that
	\begin{equation}\label{equ:trackhighdvelocitiesnumberlimit0equ0}
		\min_{p\neq j}\btwonorm{\tau \vect v_p - \tau \vect v_j}\geq \frac{8.8\pi e(\pi/2)^{d-1} \Big(n(n-1)/\pi\Big)^{\xi(d-1)}}{T\Omega }\Big(\frac{\sigma}{m_{\min}}\Big)^{\frac{1}{2n-2}}.
	\end{equation}
	Then there does not exist any $\sigma$-admissible parameter set of \,$\mathbf Y_t$'s with less than $n$ elements.
\end{thm}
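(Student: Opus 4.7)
The plan is to argue by contradiction and reduce to a one-dimensional problem with effective cut-off $T$, mirroring the strategy used for Theorem \ref{thm:trackhighdupperboundnumberlimit0} but exploiting that only the velocities need to be recovered. Suppose, toward a contradiction, that there exists a $\sigma$-admissible parameter set $\{(\hat a_j,\vect{\hat y}_j,\tau\vect{\hat v}_j)\}_{j=1}^{k}$ with $k\leq n-1$. Combining it with the generating set $\{(a_j,\vect y_j,\tau\vect v_j)\}_{j=1}^n$ produces a signed combination whose Fourier sum is bounded by $2\sigma$ for every $\vect\omega$ with $\|\vect\omega\|_2\leq\Omega$ and every $t=0,\ldots,T$.

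Next I would freeze a direction $\vect\omega\in\mathbb R^d$ with $\|\vect\omega\|_2\leq\Omega$ and view $t\mapsto \vect Y_t(\vect\omega)$ as a one-dimensional trigonometric sum at the integer frequencies $t=0,1,\ldots,T$:
\[
\vect Y_t(\vect\omega)=\sum_{j=1}^{n}\big(a_je^{i\vect y_j^\top\vect\omega}\big)\, e^{it\,(\tau\vect v_j^\top\vect\omega)}+\vect W_t(\vect\omega).
\]
This is precisely a one-dimensional super-resolution problem with nodes $u_j(\vect\omega):=\tau\vect v_j^\top\vect\omega$, amplitudes $c_j(\vect\omega):=a_je^{i\vect y_j^\top\vect\omega}$ of modulus at least $m_{\min}$, and $T+1$ samples. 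The hypothesis $\tau\vect v_j\in B_{(n-1)\pi/(T\Omega)}^d(0)$ guarantees $|u_j(\vect\omega)|\leq (n-1)\pi/T$, so the nodes lie in the admissible window for the one-dimensional Vandermonde estimates of \cite{liu2021mathematicaloned}. The crucial observation is that in this reduction the role of the cut-off frequency is played by $T$, while $\vect\omega$ may scale the projected separations by up to $\Omega$; taken together this exhibits the effective cut-off $T\Omega$ advertised in the introduction.

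To convert the $d$-dimensional separation $\min_{p\neq j}\|\tau\vect v_p-\tau\vect v_j\|_2$ into a usable lower bound on $\min_{p\neq j}|u_p(\vect\omega)-u_j(\vect\omega)|$ for a single $\vect\omega$ that works for all pairs simultaneously, I would invoke the multi-direction projection machinery developed for Proposition/Theorem arguments in \cite{liu2021mathematicalhighd}: there one chooses a finite family of directions (or iterates a covering of the sphere) so that at least one direction keeps all the $\binom{n}{2}$ projected pairwise gaps within a controlled factor of the Euclidean gaps. This covering is exactly what produces the dimensional amplification $(\pi/2)^{d-1}\bigl(n(n-1)/\pi\bigr)^{\xi(d-1)}$ that appears in \eqref{equ:trackhighdvelocitiesnumberlimit0equ0}; the same factor arises in Theorem \ref{thm:trackhighdupperboundnumberlimit0}, which gives me confidence the same inductive/projection argument transplants.

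Once such a direction $\vect\omega^\ast$ is fixed, I would apply the one-dimensional number-recovery stability result of \cite{liu2021mathematicaloned} to the projected problem: with cut-off $T$, minimum amplitude $\geq m_{\min}$, and minimum projected gap bounded below by a constant times $\Omega\min_{p\neq j}\|\tau\vect v_p-\tau\vect v_j\|_2$ divided by the dimensional factor, the assumption that fewer than $n$ nodes can fit the data within error $2\sigma$ forces $\min_{p\neq j}\|\tau\vect v_p-\tau\vect v_j\|_2$ to be strictly less than the right-hand side of \eqref{equ:trackhighdvelocitiesnumberlimit0equ0}, contradicting the hypothesis. The main obstacle, as in the analogue for location-velocity pairs, is the projection/covering step: I have to verify that the admissibility constraints on both the original and candidate velocities (bounded window plus small cardinality) are preserved under projection, so that the one-dimensional theorem applies with the same constants. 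Because the amplitudes $c_j(\vect\omega)$ depend on $\vect\omega$ only through a unit-modulus phase, the amplitude hypothesis transfers for free, which is the reason the factor $\sqrt{(n(n-1)/2)^2+1}$ from Theorem \ref{thm:trackhighdupperboundnumberlimit0} is absent here and only the velocity-projection factors remain.
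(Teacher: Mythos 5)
Your proposal follows essentially the same route as the paper: the paper's proof also uses a covering/projection lemma (Lemma \ref{lem:spaceangleestimate5}) to select a single unit direction $\vect v$ along which all $\binom{n}{2}$ pairwise velocity differences retain a $\frac{1}{(\pi/2)^{d-1}}(\pi/(n(n-1)))^{\xi(d-1)}$ fraction of their length, then reads the data $t\mapsto \vect Y_t(\Omega\vect v)$ as a one-dimensional problem with nodes $\tau\vect v_j^\top\vect v$, unit-modulus-rotated amplitudes, and effective cut-off $T\Omega$, and concludes by the one-dimensional number-detection theorem (Theorem \ref{thm:onednumberlimit1}). The only cosmetic difference is that you phrase the final step as a contradiction, while the paper invokes the one-dimensional non-existence statement directly.
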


Theorem \ref{thm:trackvelocitiesnumberlimit0} reveals that when the minimum difference between velocities is greater than $O(\frac{1}{T\Omega}(\frac{\sigma}{m_{\min}})^{\frac{1}{2n-2}})$, then the number of point sources can be exactly reconstructed in the tracking problem. By Proposition \ref{prop:trackhighdnumberlowerbound1}, we note that this order is optimal. In the following theorem, we further develop a better estimate for the resolution of the velocity reconstruction. 

\medskip
\begin{thm}\label{thm:trackvelocitiessupplimit0}
	Let $n\geq 2$ and $T\geq 2n-1$. Let the measurement $\mathbf Y_t, t=0, \cdots, T,$ in (\ref{equ:modelsetting1}) be generated by a $n$-sparse parameter set $\{(a_j, \vect y_j, \tau \vect v_j)\}_{j=1}^n$ with $\tau \vect v_j \in B_{\frac{(n-1)\pi}{T\Omega}}^{d}(\vect 0)$. Assume that
	\begin{equation}\label{equ:trackhighdvelocitiessupplimit0equ0}
		d_{\min}:=\min_{p\neq j}\btwonorm{\tau \vect v_p-\tau \vect v_j}\geq \frac{11.76e \pi 4^{d-1}\Big((n+2)(n+1)/2\Big)^{\xi(d-1)}}{T\Omega} (\frac{\sigma}{m_{\min}})^{\frac{1}{2n-1}}. \end{equation}
	If a $n$-sparse parameter set $\{(\hat a_j, \vect {\hat y}_j, \tau \vect {\hat v}_j)\}_{j=1}^n$ with $\tau \vect {\hat v}_j$'s supported on $B_{\frac{(n-1)\pi}{T\Omega}}^{d}(\vect 0)$ is a $\sigma$-admissible parameter set of $\vect Y_t$'s, then after reordering the $\vect {\hat v}_j$'s, we have 
	\[
	\btwonorm{\tau \vect {\hat v}_j - \tau \vect v_j} < \frac{d_{\min}}{2}.
	\]
	Moreover, we have 
	\begin{equation}\label{equ:trackhighdvelocitiessupplimit0equ1}
		\btwonorm{\tau \vect {\hat v}_j - \tau \vect {v}_j}\leq \frac{C(d, n)}{T\Omega}SRF^{2n-2}\frac{\sigma}{m_{\min}}, \quad 1\leq j\leq n,
	\end{equation}
	where $SRF$ is the super-resolution factor defined by $\frac{\pi}{d_{\min}T\Omega}$ and 
	\[
	C(d, n) = \big(4^{d-1}((n+2)(n+1)/2)^{\xi(d-1)}\big)^{2n-1}n2^{6n-3}e^{2n}\pi^{-\frac{1}{2}}.
	\]
\end{thm}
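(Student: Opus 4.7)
The plan is to closely follow the proof of Theorem \ref{thm:trackhighdupperboundsupportlimit0}, but to exploit the scaling observation that upgrades the effective cut-off for velocity recovery from $\Omega$ to $T\Omega$. The starting point is that for any fixed $\omega$ with $\|\omega\|_2\leq\Omega$, the sequence $\{\vect Y_t(\omega)\}_{t=0}^{T}$ has the Prony-type structure
\[
\vect Y_t(\omega) = \sum_{j=1}^n c_j(\omega)\,e^{it(\tau\vect v_j^\top\omega)} + \vect W_t(\omega),\qquad c_j(\omega):=a_j e^{i\vect y_j^\top\omega},
\]
so the effective amplitudes still satisfy $|c_j(\omega)|=|a_j|\geq m_{\min}$, while the 1D nodes $\tau\vect v_j^\top\omega$ are sampled $T+1$ times at the integer instants $t=0,\ldots,T$. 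This is a 1D super-resolution problem with effective cut-off proportional to $T$, independent of the unknown $\vect y_j$'s.

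First, for any competing $n$-sparse $\sigma$-admissible set $\{(\hat a_j,\vect{\hat y}_j,\tau\vect{\hat v}_j)\}_{j=1}^n$ the triangle inequality gives
\[
\babs{\sum_{j=1}^n \hat c_j(\omega)\,e^{it(\tau\vect{\hat v}_j^\top\omega)}-\sum_{j=1}^n c_j(\omega)\,e^{it(\tau\vect v_j^\top\omega)}}<2\sigma
\]
for every $t\in\{0,\ldots,T\}$ and $\|\omega\|_2\leq\Omega$. Since both $\tau\vect v_j$ and $\tau\vect{\hat v}_j$ lie in $B^d_{(n-1)\pi/(T\Omega)}(\vect 0)$, the 1D nodes $\tau\vect v_j^\top\omega$ and $\tau\vect{\hat v}_j^\top\omega$ sit in $[-(n-1)\pi/T,(n-1)\pi/T]$, so the 1D Vandermonde stability results of \cite{liu2021mathematicalhighd, liu2021theorylse} apply with cut-off $T$ and yield, after an appropriate reordering $\pi$, the estimate $|\tau\vect{\hat v}_{\pi(j)}^\top\omega-\tau\vect v_j^\top\omega|\lesssim (1/T)(\sigma/m_{\min})^{1/(2n-1)}$.

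To upgrade this 1D control into a $d$-dimensional bound on $\btwonorm{\tau\vect{\hat v}_j-\tau\vect v_j}$, I run the previous step along a collection of directions $\omega=\Omega\theta_k$, $k=1,\ldots,d$, chosen exactly as in the proof of Theorem \ref{thm:trackhighdupperboundsupportlimit0}. The worst-case direction losses are absorbed into the dimensional factors $4^{d-1}$ and $\big((n+2)(n+1)/2\big)^{\xi(d-1)}$ appearing in the statement, and the multidirectional-to-full-vector reduction is the projection-based argument from \cite{liu2021mathematicalhighd}. Combined with the matching/permutation step of Theorem \ref{thm:trackhighdupperboundsupportlimit0}, this yields both $\btwonorm{\tau\vect{\hat v}_j-\tau\vect v_{\pi(j)}}<d_{\min}/2$ and the quantitative bound \eqref{equ:trackhighdvelocitiessupplimit0equ1}.

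The main obstacle I expect is making the 1D Prony step uniform in $\omega$: the effective amplitudes $c_j(\omega)=a_je^{i\vect y_j^\top\omega}$ depend in an unknown way on the $\vect y_j$'s, and the theorem makes no assumption at all on the $\vect y_j$'s (not even boundedness). Crucially, however, the Vandermonde stability bounds of \cite{liu2021mathematicalhighd} only require a lower bound on the amplitude moduli, and $|c_j(\omega)|=|a_j|\geq m_{\min}$ uniformly in $\omega$, so the required uniformity does hold. Once this is secured, the constant $C(d,n)$ stated in the theorem is recovered by re-running the same combinatorial tracking as in Theorem \ref{thm:trackhighdupperboundsupportlimit0}, specialized to the $d$-dimensional ambient space of $\tau\vect v$ (rather than the $2d$-dimensional space of $(\vect y,\tau\vect v)$), which explains the absence of the extra $\sqrt{((n+1)n/2)^2+1}$ factor and the clean replacement of $\Omega$ by $T\Omega$ in the final resolution limit.
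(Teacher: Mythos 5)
Your strategy is essentially the paper's: the proof does hinge on the observation that $\{\vect Y_t(\Omega\vect q)\}_{t=0}^T$ is a one-dimensional Prony system in the nodes $\tau\vect v_j^\top\vect q$ with effective cut-off $T\Omega$ and amplitudes $a_je^{i\vect y_j^\top\vect q\Omega}$ of modulus $|a_j|\geq m_{\min}$ (so no assumption on the $\vect y_j$'s is needed, exactly as you argue), followed by a projection-based lifting of the one-dimensional estimates back to $\mathbb R^d$. The one-dimensional input is Theorem \ref{thm:onedsupplimit1}, which is a restatement of the Vandermonde stability results you cite.

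The one concrete gap is your claim that it suffices to run the one-dimensional step along $d$ directions $\omega=\Omega\theta_k$, $k=1,\ldots,d$. That is not enough, for two reasons. First, a fixed small family of directions cannot guarantee that every difference $\vect u_{pq}=\tau\vect v_p-\tau\vect v_q$ (there are up to $n(n-1)/2$ of them) keeps a projection of size comparable to $d_{\min}$ on each chosen line; one bad direction destroys the separation hypothesis of the one-dimensional theorem there. Second, and more importantly, the per-direction recovery only gives a permutation $\tau_j$ matching $\{\vect{\hat v}_p\}$ to $\{\vect v_p\}$ that may differ from direction to direction, and with only $d$ directions you cannot force consistency. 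The paper resolves both issues by constructing, at each stage of a nested hierarchy of subspaces (Lemma \ref{lem:spaceangleestimate7}), $n+1$ admissible directions with pairwise angle bounded away from zero: the count $n+1$ versus $n$ candidates $\vect{\hat v}_{p'}$ lets the pigeonhole principle produce two directions assigning the \emph{same} $\vect{\hat v}_{p'}$ to a given $\vect v_p$, and Lemma \ref{lem:spaceangleestimate6} then lifts the two one-dimensional bounds one dimension up; iterating $d-1$ times yields the full estimate and the factor $\big(4^{d-1}((n+2)(n+1)/2)^{\xi(d-1)}\big)^{2n-1}$ in $C(d,n)$. So the number of directions is governed by $n$, not by $d$, and the "matching/permutation step of Theorem \ref{thm:trackhighdupperboundsupportlimit0}" you invoke is precisely this $n+1$-fold pigeonhole, which your choice of $d$ directions does not support. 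With that correction your outline coincides with the paper's proof.
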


Theorem \ref{thm:trackvelocitiessupplimit0} reveals that when the minimum difference between the velocities is greater than $O(\frac{1}{T\Omega}(\frac{\sigma}{m_{\min}})^{\frac{1}{2n-2}})$, then the velocities can be stably reconstructed in the tracking problem. By Proposition \ref{prop:trackhighdsupportlowerbound1}, we note that this order is optimal. 

In Sections \ref{section:velocitynumberalgorithm} and \ref{section:velocitysupportalgorithm}, we will propose super-resolution algorithms for reconstructing respectively the number of sources and the values of the velocities, which are demonstrated theoretically or numerically to lead to the optimal resolution orders that are shown above. 

We remark that both Theorems \ref{thm:trackvelocitiesnumberlimit0} and \ref{thm:trackvelocitiessupplimit0} are derived by a crucial observation that the measurements at point $\vect \omega = \Omega \vect v$ are   
\begin{equation}\label{equ:onedmeasurementforvelocityrecover}
\vect Y_t(\vect v\Omega) = \sum_{j=1}^n a_je^{i(\vect y_j^\top+t\tau \vect v_j^\top)\vect v \Omega} + \vect W_t(\vect v \Omega), \quad t=0, \cdots, T. 
\end{equation}
Let $b_j = a_j e^{i\vect y_j^\top \vect v \Omega}$ and $y_j =\tau  \vect v_j^\top \vect v$ and $\vect W(t) = \vect W_t(\vect v \Omega)$, the measurement can be written as 
\[
\vect Y(t) = \sum_{j=1}^n b_j e^{iy_j \Omega t} + \vect W(t) \quad  t=0, \cdots, T.
\]	
Thus the inherent cut-off frequency for recovering $\tau \vect v_j$'s can be viewed as $T\Omega$, which results in a much better resolution limit $\frac{\pi}{T\Omega}$ compared to the one $\frac{\pi}{\Omega}$ for each frame. Because recovering objects separated larger than Rayleigh limit is stable even for imaging system with low signal-to-noise ratio, our observation indicates that, by some dynamic reconstruction algorithms, we are very likely able to achieve better resolution for velocities than that of the static reconstruction. This will be confirmed by numerical algorithms in Section \ref{section:velocitysupportalgorithm}. For other examples, by applying the algorithms in \cite{candes2014towards, tang2014near} which are provable to resolve point sources with a separation distance of Rayleigh limit order to several measurements like (\ref{equ:onedmeasurementforvelocityrecover}),  we are able to recover the velocities with resolution of order $O(\frac{\pi}{T\Omega})$. We anticipate the above observation can inspire many new reconstruction algorithms that enhance significantly the resolution of the tracking problem in practice.

\subsection{Stability for a sparsity-promoting velocity reconstruction}
As a corollary of the results in the above subsection, we can derive a stability result for velocity recovery by the sparsity-promoting dynamic reconstruction algorithm (\ref{equ:sparsityoptimization}), by which we demonstrate that the sparsity-promoting dynamic reconstruction attains the optimal super-resolution capability for the velocity recovery. Specifically, we have the following theorem. 

\begin{thm}\label{thm:stabilityofsparsitybasevelocityrecovery}
	Let $n\geq 2$, $T\geq 2n-1$, and $\sigma\leq m_{\min}$. Let the measurement $\mathbf Y_t, t=0, \cdots, T$, in (\ref{equ:modelsetting1}) be generated by a $n$-sparse parameter set $\{(a_j, \vect y_j, \tau \vect v_j)\}_{j=1}^n$ with $\tau \vect v_j \in B_{\frac{(n-1)\pi}{T\Omega}}^{d}(\vect 0)$. Assume that
	\begin{equation}\label{equ:stabilityofsparsitybasevelocityrecoveryequ0}
		d_{\min}:=\min_{p\neq j}\btwonorm{\tau \vect v_p-\tau \vect v_j}\geq \frac{11.76e \pi 4^{d-1}\Big((n+2)(n+1)/2\Big)^{\xi(d-1)}}{T\Omega} (\frac{\sigma}{m_{\min}})^{\frac{1}{2n-1}}. 
	\end{equation}
	Let $\mathcal O$ in (\ref{equ:sparsityoptimization}) be $B_{\frac{(n-1)\pi}{T\Omega}}^{d}(\vect 0)$, then the solution to (\ref{equ:sparsityoptimization}) contains exactly $n$ elements. If a $n$-sparse parameter set $\{(\hat a_j, \vect {\hat y}_j, \tau \vect {\hat v}_j)\}_{j=1}^n$ is the solution, after reordering the $\vect {\hat \alpha}_j$'s, we have 
	\[
	\btwonorm{\tau \vect {\hat v}_j - \tau \vect v_j} < \frac{d_{\min}}{2}.
	\]
	Moreover, we have 
	\begin{equation}\label{equ:stabilityofsparsitybasevelocityrecoveryequ1}
		\btwonorm{\tau \vect {\hat v}_j - \tau \vect {v}_j}\leq \frac{C(d, n)}{T\Omega}SRF^{2n-2}\frac{\sigma}{m_{\min}}, \quad 1\leq j\leq n,
	\end{equation}
	where $SRF:=\frac{\pi}{d_{\min}T\Omega}$ and 
	\[
	C(d, n) = \big(4^{d-1}((n+2)(n+1)/2)^{\xi(d-1)}\big)^{2n-1}n2^{6n-3}e^{2n}\pi^{-\frac{1}{2}}.
	\]
\end{thm}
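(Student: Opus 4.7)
The plan is to deduce Theorem \ref{thm:stabilityofsparsitybasevelocityrecovery} as a short corollary of Theorems \ref{thm:trackvelocitiesnumberlimit0} and \ref{thm:trackvelocitiessupplimit0}, in exact analogy with the way Theorem \ref{thm:stabilityofsparsitybaserecovery} was obtained from Theorems \ref{thm:trackhighdupperboundnumberlimit0} and \ref{thm:trackhighdupperboundsupportlimit0}. The ground truth $\{(a_j,\vect y_j,\tau \vect v_j)\}_{j=1}^n$ is itself an $n$-sparse $\sigma$-admissible parameter set whose velocities lie in the prescribed ball $B_{(n-1)\pi/(T\Omega)}^{d}(\vect 0)$, so it is feasible for \eqref{equ:sparsityoptimization}. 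Therefore the minimum cardinality is at most $n$, and the whole argument reduces to showing that no $\sigma$-admissible parameter set (with velocities in this ball) with strictly fewer than $n$ elements exists under the hypotheses of the theorem.

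To establish the latter I would simply check that the separation assumption \eqref{equ:stabilityofsparsitybasevelocityrecoveryequ0} is stronger than the number-recovery separation \eqref{equ:trackhighdvelocitiesnumberlimit0equ0} of Theorem \ref{thm:trackvelocitiesnumberlimit0}. Since $\sigma \leq m_{\min}$ we have $(\sigma/m_{\min})^{1/(2n-1)} \geq (\sigma/m_{\min})^{1/(2n-2)}$, and a direct comparison of the dimensional constants $11.76\, e\pi \cdot 4^{d-1}((n+2)(n+1)/2)^{\xi(d-1)}$ and $8.8\, \pi e (\pi/2)^{d-1}(n(n-1)/\pi)^{\xi(d-1)}$ shows that the former dominates the latter. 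The localization ball and the lower bound on $T$ required by Theorem \ref{thm:trackvelocitiesnumberlimit0} are both implied by our hypotheses ($T \geq 2n-1 \geq 2n-2$ and the same ball already appears), so Theorem \ref{thm:trackvelocitiesnumberlimit0} applies and rules out every $\sigma$-admissible parameter set of fewer than $n$ elements. Combined with feasibility of the ground truth, this forces the minimizer of \eqref{equ:sparsityoptimization} to contain exactly $n$ elements.

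With the cardinality pinned down, I would then feed any minimizer $\{(\hat a_j, \vect {\hat y}_j, \tau \vect {\hat v}_j)\}_{j=1}^n$ into Theorem \ref{thm:trackvelocitiessupplimit0}: it is an $n$-sparse $\sigma$-admissible parameter set with $\tau \vect {\hat v}_j$'s supported in $B_{(n-1)\pi/(T\Omega)}^{d}(\vect 0)$, and the very separation condition \eqref{equ:stabilityofsparsitybasevelocityrecoveryequ0} is identical to \eqref{equ:trackhighdvelocitiessupplimit0equ0}. The conclusion of Theorem \ref{thm:trackvelocitiessupplimit0} then produces a reordering of the velocities satisfying $\btwonorm{\tau \vect {\hat v}_j - \tau \vect v_j} < d_{\min}/2$ together with the quantitative bound \eqref{equ:stabilityofsparsitybasevelocityrecoveryequ1}.

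No genuinely new analytic estimate is needed; the work of separating $n$-sparse admissibles from $(n-1)$-sparse ones, and of converting admissibility into quantitative velocity stability, has already been done in the two theorems above. The only thing that requires care is the constant and exponent comparison in the second paragraph, and this is precisely the place where the assumption $\sigma \leq m_{\min}$ is consumed; I anticipate this book-keeping to be the sole, and entirely mechanical, obstacle in the proof.
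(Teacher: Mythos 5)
Your proposal is correct and follows exactly the route the paper intends: the paper gives no separate proof of this theorem but proves the analogous Theorem \ref{thm:stabilityofsparsitybaserecovery} by the identical two-line argument (the separation condition together with $\sigma\leq m_{\min}$ dominates the number-recovery threshold, so Theorem \ref{thm:trackvelocitiesnumberlimit0} pins the minimizer's cardinality at $n$, and Theorem \ref{thm:trackvelocitiessupplimit0} then gives the stability bounds). Your constant and exponent comparison is the only bookkeeping required and it checks out.
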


\section{Projection-based velocity number detection algorithm}\label{section:velocitynumberalgorithm}
By the projection methods we use in Section \ref{sec:proofresultsofvelocity}, we propose in this section and next section new algorithms for super-resolving the velocities. The strength of our algorithms is that they are able to resolve the number of sources and reconstruct the values of the velocities at the resolution limit orders, i.e., $O(\frac{1}{T\Omega} (\frac{\sigma}{m_{\min}})^{\frac{1}{2n-2}})$ and $O(\frac{1}{T\Omega} (\frac{\sigma}{m_{\min}})^{\frac{1}{2n-1}})$, respectively. Therefore, as the number of time steps increases, the resolution improves as well.

In this section, we propose a projection-based sweeping singular-value-thresholding number detection algorithm for the tracking problem in two dimensions. The extension to higher dimensions is straightforward. 

\subsection{One-dimensional sweeping singular-value-thresholding number detection algorithm}
In this subsection, we review the sweeping singular-value-thresholding number detection algorithm in  dimension one \cite{liu2021theorylse} for the static super-resolution problem. We refer to  \cite{akaike1998information, akaike1974new, wax1985detection, schwarz1978estimating, rissanen1978modeling, wax1989detection, lawley1956tests, chen1991detection, he2010detecting, han2013improved} and the references therein for other interesting algorithms in one dimension. 

We consider a model that is tuned for our velocity recovery problem. Especially, we consider the measurement $\vect Y$ generated by $\mu =\sum_{j=1}^n a_j \delta_{y_j}, y_j \in\mathbb R$ as
\begin{equation}\label{equ:modelsetting2}
\vect Y(t) = \sum_{j=1}^n a_j e^{i y_j \Omega t} + \vect W(t), \quad t=0, \cdots, T,
\end{equation}
with $|\vect W(t)|<\sigma$ and $\min_{j}|a_j|=m_{\min}$. We choose measurements at the sample points $qr, q=0,\cdots,2s$ with $s\geq n$ and $r,g$ being integer satisfying $T+1=2sr+g, 0\leq g<2s$:
\[
\mathbf Y(qr)= \sum_{j=1}^{n}a_j e^{i y_j \Omega qr} +\mathbf W(qr), \quad 0\leq q \leq 2s.
\]
We then form the following Hankel matrix:
\begin{equation}\label{equ:hankelmatrix1}
\mathbf H(s)=\left(\begin{array}{cccc}
\mathbf Y(0)&\mathbf Y(r)&\cdots& \mathbf Y(sr)\\
\mathbf Y(r)&\mathbf Y(2r)&\cdots&\mathbf Y((s+1)r)\\
\cdots&\cdots&\ddots&\cdots\\
\mathbf Y(sr)&\mathbf Y((s+1)r)&\cdots&\mathbf Y(2sr)
\end{array}
\right),
\end{equation}
and consider the singular value decomposition of $\mathbf H(s)$ 
\[\mathbf H(s)=\hat U\hat \Sigma \hat U^*,\]
where $\hat\Sigma =\text{diag}(\hat \sigma_1,\cdots, \hat \sigma_n, \hat \sigma_{n+1},\cdots,\hat\sigma_{s+1})$ with the singular values $\hat \sigma_j$, $1\leq j \leq s+1$, ordered in a decreasing manner. We then determine the source number by thresholding on these singular values with a properly chosen threshold based on Theorem \ref{thm:onedMUSICthm1} below. To derive Theorem \ref{thm:onedMUSICthm1}, we first introduce the notation
\begin{equation}\label{equ:defiofphi}
    \phi_s(x) = (1, x, \cdots, x^s)^\top,
\end{equation}
\begin{equation}\label{equ:defiofzeta}
    \zeta(k)= \left\{
\begin{array}{cc}
(\frac{k-1}{2}!)^2,& \text{$k$ is odd,}\\
(\frac{k}{2})!(\frac{k-2}{2})!,& \text{$k$ is even,}
\end{array} 
\right. 
\end{equation}
and the following auxiliary lemma. 
\begin{lem}\label{norminversevandermonde2}
	Let $k \geq 2$ and $-\frac{\pi}{2}\leq \theta_1<\theta_2<\cdots<\theta_{k}  \leq \frac{\pi}{2}$. Let
	$\theta_{\min}=\min_{p\neq j}|\theta_p-\theta_j|$ and $V_{k}(k-1)=\big(\phi_{k-1}(e^{i \theta_1}),\cdots,\phi_{k-1}(e^{i\theta_k})\big)$ with $\phi_{k-1}(x)$ defined as in (\ref{equ:defiofphi}). Then
	\[
	||V_k(k-1)^{-1}||_{\infty}\leq \frac{\pi^{k-1}}{\zeta(k) \theta_{\min}^{k-1}},
	\] 
	where $\zeta(k)$ is defined in (\ref{equ:defiofzeta}).
\end{lem}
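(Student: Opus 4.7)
\medskip

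\noindent\textbf{Proof proposal for Lemma \ref{norminversevandermonde2}.}

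The plan is to express the inverse of $V_k(k-1)$ explicitly via Lagrange interpolation. Writing $x_j := e^{i\theta_j}$, the standard identity for the inverse of a Vandermonde matrix says that the entries of the $j$-th row of $V_k(k-1)^{-1}$ are precisely the coefficients of the Lagrange basis polynomial
\[
\ell_j(x) \;=\; \prod_{p\neq j}\frac{x-x_p}{x_j-x_p}.
\]
Consequently, the $j$-th row sum of absolute values equals $\sum_{i=0}^{k-1}|c_{i,j}|$, where $c_{i,j}$ is the coefficient of $x^i$ in $\ell_j$, and
\[
\|V_k(k-1)^{-1}\|_\infty \;=\; \max_{1\le j\le k}\frac{\sum_{i=0}^{k-1}\bigl|\,\text{coeff. of }x^i\text{ in }\prod_{p\neq j}(x-x_p)\bigr|}{\bigl|\prod_{p\neq j}(x_j-x_p)\bigr|}.
\]

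The first step is to bound the numerator. Since $|x_p|=1$, the triangle inequality gives that each coefficient of $\prod_{p\neq j}(x-x_p)$ is at most the corresponding elementary symmetric polynomial of $(|x_1|,\ldots,\widehat{|x_j|},\ldots,|x_k|)$, whose sum over all degrees is $\prod_{p\neq j}(1+|x_p|)=2^{k-1}$. Hence the numerator is bounded by $2^{k-1}$ uniformly in $j$.

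The second (and more delicate) step is to lower bound the denominator. Using $|e^{i\theta_j}-e^{i\theta_p}|=2|\sin((\theta_j-\theta_p)/2)|$ together with the inequality $|\sin u|\geq \tfrac{2}{\pi}|u|$ valid for $|u|\leq \pi/2$ (which applies because $|\theta_j-\theta_p|\leq \pi$ on $[-\pi/2,\pi/2]$), we obtain
\[
|x_j-x_p| \;\geq\; \tfrac{2}{\pi}\,|\theta_j-\theta_p|.
\]
After sorting so that $\theta_1<\cdots<\theta_k$, the gap hypothesis implies $|\theta_j-\theta_p|\geq |j-p|\,\theta_{\min}$, and therefore
\[
\prod_{p\neq j}|x_j-x_p|\;\geq\;\Bigl(\tfrac{2}{\pi}\Bigr)^{\!k-1}\theta_{\min}^{k-1}\,(j-1)!\,(k-j)!.
\]
Minimizing $(j-1)!(k-j)!$ over $j\in\{1,\ldots,k\}$ is elementary: the factorial product is convex in $j$ so the minimum is attained at $j\approx k/2$, yielding exactly the quantity $\zeta(k)$ defined in \eqref{equ:defiofzeta} (with separate cases for $k$ odd and $k$ even).

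Combining these two bounds,
\[
\|V_k(k-1)^{-1}\|_\infty \;\leq\; \frac{2^{k-1}}{(2/\pi)^{k-1}\zeta(k)\theta_{\min}^{k-1}} \;=\; \frac{\pi^{k-1}}{\zeta(k)\,\theta_{\min}^{k-1}},
\]
which is the claimed bound. The main potential pitfall is the combinatorial identification of the worst-case index $j$; everything else is just bookkeeping once the Lagrange representation is in place.
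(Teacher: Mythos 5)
Your proof is correct: the Lagrange-interpolation representation of the inverse Vandermonde matrix, the bound $\sum_i |c_{i,j}|\le \prod_{p\neq j}(1+|x_p|)=2^{k-1}$ on the numerator, the Jordan-inequality estimate $|e^{i\theta_j}-e^{i\theta_p}|=2|\sin((\theta_j-\theta_p)/2)|\ge \tfrac{2}{\pi}|\theta_j-\theta_p|$, and the identification $\min_{1\le j\le k}(j-1)!(k-j)!=\zeta(k)$ (the product is unimodal in $j$, minimized at $j\approx k/2$) all check out and combine to give exactly the stated bound. The paper itself gives no proof here, only a citation to Lemma 3 of \cite{liu2021theorylse}, and that reference establishes the bound by essentially this same classical Gautschi-type argument, so your route coincides with the intended one.
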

\begin{proof}
See Lemma 3 in \cite{liu2021theorylse}.
\end{proof}


\begin{thm}\label{thm:onedMUSICthm1}
	Let $s\geq n$ and $\mu=\sum_{j=1}^{n}a_j \delta_{y_j}$ with $y_j\in [\frac{-(n-1)\pi}{T\Omega}, \frac{(n-1)\pi}{T\Omega}], 1\leq j\leq n$. We have 
	\begin{equation}\label{MUSICthm1equ-1}
	\hat \sigma_j\leq  (s+1)\sigma,\quad j=n+1,\cdots,s+1.
	\end{equation}
	Moreover, if the following separation condition is satisfied
	\begin{equation}\label{MUSICthm1equ0}
	\min_{p\neq j}\babs{y_p-y_j}>\frac{2\pi (s+1)}{T\Omega}\Big(\frac{2n(s+1)}{\zeta(n)^2}\frac{\sigma}{m_{\min}}\Big)^{\frac{1}{2n-2}},
	\end{equation}
	where $\zeta(n)$ is defined as in (\ref{equ:defiofzeta}), then
	\begin{equation}\label{MUSICthm1equ2}
	\hat\sigma_{n}>(s+1)\sigma.
	\end{equation}
\end{thm}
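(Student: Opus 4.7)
My plan is a standard MUSIC-style argument: decompose $\vect H(s)$ into a low-rank noise-free part plus a bounded noise part, apply Weyl's singular value inequality to each, and control $\sigma_n$ of the noise-free part through its Vandermonde factorization together with Lemma~\ref{norminversevandermonde2}. First, write $\vect H(s)=\vect H_0(s)+\vect H_W(s)$, where $\vect H_0(s)$ comes from the clean samples and $\vect H_W(s)$ collects the noise entries $\vect W((p+q)r)$. A direct computation gives
\[
\vect H_0(s)=\Phi_s\, D\,\Phi_s^\top,\qquad \Phi_s=(\phi_s(z_1),\ldots,\phi_s(z_n)),\ z_j=e^{iy_j\Omega r},\ D=\mathrm{diag}(a_1,\ldots,a_n),
\]
so $\mathrm{rank}(\vect H_0(s))=n$. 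Every entry of $\vect H_W(s)$ has modulus at most $\sigma$, so the Frobenius bound yields $\|\vect H_W(s)\|_2\le (s+1)\sigma$, and Weyl then gives $\hat\sigma_j\le\sigma_j(\vect H_0(s))+(s+1)\sigma=(s+1)\sigma$ for $j\ge n+1$, which is (\ref{MUSICthm1equ-1}).

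For (\ref{MUSICthm1equ2}), Weyl also gives $\hat\sigma_n\ge\sigma_n(\vect H_0(s))-(s+1)\sigma$, so it is enough to prove $\sigma_n(\vect H_0(s))>2(s+1)\sigma$ under (\ref{MUSICthm1equ0}). Using the thin SVD $\Phi_s=U\Sigma V^*$ I would rewrite $\vect H_0(s)=U\Sigma(V^*D\overline V)\Sigma U^\top$; since $\overline V$ is also unitary, the nonzero singular values of this product coincide with those of the $n\times n$ matrix $\Sigma(V^*D\overline V)\Sigma$, and submultiplicativity of $\sigma_n$ for $n\times n$ matrices gives $\sigma_n(\vect H_0(s))\ge\sigma_n(\Phi_s)^2 m_{\min}$. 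To bound $\sigma_n(\Phi_s)$ below, the hypotheses $y_j\in[-\tfrac{(n-1)\pi}{T\Omega},\tfrac{(n-1)\pi}{T\Omega}]$ and $r\le(T+1)/(2s)$ place the angles $\theta_j=y_j\Omega r$ inside $[-\pi/2,\pi/2]$, and the top $n$ rows of $\Phi_s$ form the square Vandermonde $V_n(n-1)$ of Lemma~\ref{norminversevandermonde2}. Deleting rows cannot increase singular values, so together with the trivial bound $\|A\|_2\le\sqrt n\,\|A\|_\infty$ for $n\times n$ matrices,
\[
\sigma_n(\Phi_s)\ \ge\ \sigma_n(V_n(n-1))\ \ge\ \frac{1}{\sqrt n\,\|V_n(n-1)^{-1}\|_\infty}\ \ge\ \frac{\zeta(n)\,\theta_{\min}^{n-1}}{\sqrt n\,\pi^{n-1}},
\]
with $\theta_{\min}=\Omega r\min_{p\ne j}|y_p-y_j|$. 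Combined with the elementary estimate $r\Omega\ge T\Omega/(2(s+1))$ (coming from $T+1=2sr+g$ with $0\le g<2s$), the separation hypothesis (\ref{MUSICthm1equ0}) then forces $m_{\min}\sigma_n(\Phi_s)^2>2(s+1)\sigma$, which is exactly what is needed.

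The main technical point is the step $\sigma_n(\vect H_0(s))\ge\sigma_n(\Phi_s)^2 m_{\min}$: because the factorization uses $\Phi_s^\top$ rather than $\Phi_s^*$, one really does need the observation that the conjugate $\overline V$ of a unitary $V$ stays unitary, so that $\sigma_n(V^*D\overline V)=\sigma_n(D)=m_{\min}$. The remainder is constant bookkeeping, ensuring the prefactor $2\pi(s+1)/(T\Omega)$ in (\ref{MUSICthm1equ0}) is large enough to absorb the factor $\pi/(r\Omega)$ from the Vandermonde inverse bound together with the $2(s+1)$ lost to Weyl.
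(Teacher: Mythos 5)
Your proposal follows essentially the same route as the paper's proof: the same decomposition $\vect H(s)=DAD^\top+\Delta$, the same Frobenius-norm and Weyl bounds giving both (\ref{MUSICthm1equ-1}) and the reduction of (\ref{MUSICthm1equ2}) to $\sigma_n(DAD^\top)>2(s+1)\sigma$, and the same lower bound $\sigma_n(DAD^\top)\ge m_{\min}\sigma_{\min}(D)^2$ obtained from Lemma \ref{norminversevandermonde2} applied to the square Vandermonde submatrix, with the identical bookkeeping $\theta_{\min}=r\Omega\min_{p\neq j}|y_p-y_j|$ and $r\Omega\ge T\Omega/(2(s+1))$. The only cosmetic difference is that you justify the rank-$n$ singular-value bound via the thin SVD of $\Phi_s$ and unitarity of $\overline V$, whereas the paper restricts the quadratic form to $\ker^{\perp}(D^\top)$; these are equivalent.
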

\begin{proof} \textbf{Step 1.} Note that $\vect H(s)$ has a decomposition that $\vect H(s) = DAD^\top +\Delta$ where $A=\text{diag}(a_1, \cdots, a_n)$, $D=\big(\phi_{s}(e^{i y_1 \Omega r}), \cdots, \phi_{s}(e^{i y_n \Omega r})\big)$ with $\phi_{s}(\omega)$ being defined in (\ref{equ:defiofphi}) and $\Delta$ is the matrix from the noise $\vect W$. We first consider the case when $\vect W(t)=0$, in which $\vect H(s)=DAD^\top$. Denoting $\sigma_n$ as the $n$-th singular value of $\vect H(s)$. Note that $\sigma_n$ is the minimum nonzero singular value of $DAD^\top$. Let $\ker(D^\top)$ be the kernel space of $D^\top$ and $\ker^{\perp}(D^\top)$ be its orthogonal complement, we have 
\begin{align*}
\sigma_n=\min_{||x||_2=1,x\in \ker^{\perp}(D^\top)}||DAD^\top||_2\geq \sigma_{\min}(DA)\sigma_n(D^\top)\geq \sigma_{\min}(D)\sigma_{\min}(A)\sigma_{\min}(D).
\end{align*}
Let $\theta_{\min}=\min_{p\neq j}\Big|y_pr\Omega-y_jr\Omega\Big|$. Since $s\geq n$, $y_jr\Omega\in \big[\frac{-\pi}{2}, \frac{\pi}{2}\big]$ for $y_j\in \big[-\frac{(n-1)\pi}{T\Omega}, \frac{(n-1)\pi}{T\Omega}\big]$. By Lemma \ref{norminversevandermonde2}, we have
\begin{align*}
\sigma_{\min}(D)\geq \frac{1}{||V_n(n-1)^{-1}||_{2}} \geq  \frac{1}{\sqrt{n}||V_n(n-1)^{-1}||_{\infty}}\geq  \frac{1}{\sqrt{n}}\frac{\zeta(n)\theta_{\min}^{n-1}}{\pi^{n-1}},
\end{align*}
where $V_n(n-1) = \big(\phi_{n-1}(e^{i y_1 \Omega r}), \cdots, \phi_{n-1}(e^{i y_n \Omega r})\big)$. It follows that
\begin{align}\label{equ:MUSICthm1equ1}
\sigma_n\geq \sigma_{\min}(A)\Big(\frac{1}{\sqrt{n}}\frac{\zeta(n)\theta_{\min}^{n-1}}{\pi^{n-1}}\Big)^2\geq \frac{m_{\min}\zeta(n)^2\theta_{\min}^{2n-2}}{n\pi^{2n-2}}.
\end{align}

\textbf{Step 2.} We now prove the theorem. Since $||\mathbf {W}||_{\infty}\leq \sigma$, for $\Delta$ in Step 1 we have $||\Delta||_2\leq ||\Delta||_F\leq (s+1)\sigma$. By Weyl's theorem, we have $|\hat \sigma_j-\sigma_j|\leq ||\Delta||_2, j=1,\cdots,s+1,$ where $\sigma_j$ is the $j$-th singular value of $DAD^\top$. Note that $\sigma_j=0, n+1\leq j \leq s+1$, we get $|\hat \sigma_j|\leq ||\Delta||_2\leq(s+1)\sigma, n+1\leq j \leq s+1$. This proves (\ref{MUSICthm1equ-1}). 

 Since $y_j\in [\frac{-(n-1)\pi}{T\Omega}, \frac{(n-1)\pi}{T\Omega}]$, by the relation between $r,s, T$, we have $\theta_{\min}\in [-\frac{\pi}{2}, \frac{\pi}{2}]$ and
\[
\theta_{\min} = r\Omega \min_{p\neq j}\Big|y_p-y_j\Big| >\frac{2\pi (s+1)r}{T}\Big(\frac{2n(s+1)}{\zeta(n)^2}\frac{\sigma}{m_{\min}}\Big)^{\frac{1}{2n-2}} \geq \pi \Big(\frac{2n(s+1)}{\zeta(n)^2}\frac{\sigma}{m_{\min}}\Big)^{\frac{1}{2n-2}},
\]
where we have used the separation condition (\ref{MUSICthm1equ0}). By (\ref{equ:MUSICthm1equ1}), we have 
\begin{align}\label{MUSICthm1equ7}
\sigma_n \geq\frac{m_{\min}\zeta(n)^2\theta_{\min}^{2n-2}}{n\pi^{2n-2}}>2(s+1)\sigma.
\end{align}
Similarly, by Weyl's theorem, $|\hat \sigma_n-\sigma_n|\leq ||\Delta||_2$. Thus, $\hat \sigma_n> 2(s+1)\sigma-||\Delta||_2\geq (s+1)\sigma$. Conclusion (\ref{MUSICthm1equ2}) then follows. \end{proof}

The reconstruction procedure is summarized in \textbf{Algorithm \ref{algo:onedsimplenumber}} below. Note that in \textbf{Algorithm \ref{algo:onedsimplenumber}}, it is required that the input integer $s$ is greater than the source number $n$. However, a suitable $s$ is not easy to estimate and large $s$ may yield a deterioration of resolution. To remediate this issue, we propose a sweeping singular-value-thresholding number detection algorithm which allows us to find the minimum (or sparsest) source number from admissible measurements; see \textbf{Algorithm \ref{algo:onedsweepingnumber}} below. We remark that since when $s=n$ the separation distance in (\ref{MUSICthm1equ0}) is near $\frac{c}{T\Omega}(\frac{\sigma}{m_{\min}})^{\frac{1}{2n-2}}$ for a small constant $c$, the resolution of \textbf{Algorithm \ref{algo:onedsweepingnumber}} attains the optimal resolution order derived in \cite{liu2021theorylse}.

\begin{algorithm}[H]\label{algo:onedsimplenumber}
	\caption{\textbf{Singular-value-thresholding number detection algorithm}}
	\textbf{Input:} Number $s$, noise level $\sigma$;\\
	\textbf{Input:} Measurement: $\mathbf{Y}=(\mathbf Y(0),\cdots, \mathbf Y(T))^\top$;\\	
	1: $r=T+1\mod 2s$,  $\mathbf{Y}_{new}=(\mathbf Y(0), \mathbf Y(r), \cdots, \mathbf Y(\omega_{2sr}))^\top$\;
	2: Formulate the $(s+1)\times(s+1)$ Hankel matrix $\mathbf H(s)$ from $\mathbf{Y}_{new}$, and
	compute the singular values of $\mathbf H(s)$ as $\hat \sigma_{1}, \cdots,\hat \sigma_{s+1}$ distributed in a decreasing manner\;
	4: Determine $n$ by $\hat \sigma_n>(s+1)\sigma$ and $\hat \sigma_{j}\leq (s+1)\sigma, j=n+1,\cdots, s+1$\;
	\textbf{Return:} $n$ 
\end{algorithm}

\begin{algorithm}[H]\label{algo:onedsweepingnumber}
	\caption{\textbf{Sweeping singular-value-thresholding number detection algorithm}}	
	\textbf{Input:} Noise level $\sigma$, measurement: $\mathbf{Y}=(\mathbf Y(0),\cdots, \mathbf Y(T))^\top$;\\
	\textbf{Input:} $n_{max}=0$;\\
	\For{$s=1: \lfloor \frac{T-1}{2}\rfloor$}{
		Input $s,\sigma, \mathbf{Y}$ to \textbf{Algorithm 1}, save the output of \textbf{Algorithm 1} as $n_{recover}$\; 
		\If{$n_{recover}>n_{max}$}{$n_{max}=n_{recover}$}
	}
	{
		\textbf{Return} $n_{max}$
	}
\end{algorithm}

\subsection{Two-dimensional sweeping singular-value-thresholding number detection algorithms}\label{section:highdnumberalgorithm}
We now derive the two-dimensional sweeping singular-value-thresholding number detection algorithm for recovering the number of sources. The strategy is considering the measurement at some proper sample points. To be specific, let the parameter set of the source be $\{(a_j, \vect y_j, \vect v_j)\}_{j=1}^n, a_j \in \mathbb C, \vect y_j\in \mathbb R^2, \vect v_j\in \mathbb R^2$ and $\vect Y_t(\vect \omega)$ in (\ref{equ:modelsetting1}) be the associated measurement. We first choose the following $\frac{n(n+1)}{2}$ vectors
\begin{equation}\label{equ:numberalgovectorlist}
\vect v(\theta_q)= (\cos{\theta_q}, \sin\theta_q)^\top\in \mathbb R^2, \ q=1,\cdots,\frac{n(n+1)}{2},
\end{equation}
where $\theta_q= \frac{q2\pi}{n(n+1)}$. For each $q$, the measurement at $\Omega \vect v(\theta_q)$ is 
\[
\vect Y_t(\Omega \vect v (\theta_q)) = \sum_{j=1}^n a_j e^{i \vect y_j^\top\vect v(\theta_q)\Omega }e^{i \tau \vect  v_j^\top\vect v(\theta_q)\Omega t} + \vect W_t(\vect v(\theta_q)\Omega), \quad t=0,\cdots, T. 
\]
This can be viewed as $\vect Y(t) = \sum_{j=1}^n b_j e^{i\tau \vect v_j^\top\vect v(\theta_q)\Omega t}+\vect W(t), \ t=0, \cdots, T$ with $b_j = e^{i \vect y_j^\top\vect v(\theta_q)\Omega}$ and $\vect W(t)=\vect W_t(\vect v(\theta_q)\Omega)$. Thus we can form the Hankel matrix $\vect H_{q}(s)$ in the same way as in (\ref{equ:hankelmatrix1}) from the above measurements. Denote $\hat \sigma_{q,j}$ the $j$-th singular value of $\vect H_{q}(s)$, we can detect the exact source number by thresholding on $\hat \sigma_{q,j}$'s under a suitable separation condition, as  shown in Theorem \ref{thm:highdMUSICthm1} below. We first present a result which is needed in the proof of Theorem \ref{thm:highdMUSICthm1}.

\begin{lem}\label{lem:highdmusicdirection}
For $\tau \vect{v}_1, \cdots, \tau \vect{v}_n  \in \mathbb R^2, n\geq 2$, let $d_{\min}:=\min_{p\neq j}||\tau \vect{v}_p-\tau \vect{v}_j||_2$. Let  \, $\vect v(\theta_q), q=1,\cdots,\frac{n(n+1)}{2}$ be the ones in (\ref{equ:numberalgovectorlist}) and $S_q$'s be the one-dimensional spaces spanned by $\vect v(\theta_q)$'s. There exists $q^*$ so that 
\[
\min_{p\neq j, 1\leq p, j \leq n}\btwonorm{\mathcal P_{S_{q^*}}(\tau \vect{v}_p)-\mathcal P_{S_{q^*}}(\tau \vect{v}_j)}\geq \frac{2 d_{\min}}{n(n+1)}.
\]
\end{lem}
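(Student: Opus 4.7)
The plan is to reduce the geometric statement to a one-dimensional counting/pigeonhole argument on the circle. First I would observe that $\|\mathcal{P}_{S_q}(\vect{w})\|_2 = |\vect{w}^\top \vect{v}(\theta_q)|$, so what must be shown is the existence of some $q^*$ with
\[
\big|(\tau \vect{v}_p - \tau \vect{v}_j)^\top \vect{v}(\theta_{q^*})\big| \geq \frac{2 d_{\min}}{n(n+1)}, \qquad \text{for all } p \neq j.
\]
Writing $\tau \vect{v}_p - \tau \vect{v}_j = d_{pj}(\cos\phi_{pj}, \sin\phi_{pj})$ with $d_{pj} \geq d_{\min}$, the left-hand side equals $d_{pj} |\cos(\phi_{pj} - \theta_{q^*})|$, so it suffices to find $q^*$ such that $|\cos(\phi_{pj} - \theta_{q^*})| \geq \frac{2}{n(n+1)}$ for every one of the $\binom{n}{2}$ pairs.

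Next I would identify, for each pair $(p,j)$, the set of \emph{bad} angles $\theta \in (0, \pi]$ where $|\cos(\phi_{pj} - \theta)| < \epsilon := \frac{2}{n(n+1)}$. Since $|\cos|$ is $\pi$-periodic, the bad set modulo $\pi$ is a single open arc of length at most $2 \arcsin(\epsilon)$ centered at $\phi_{pj} + \pi/2 \pmod{\pi}$. Using the concavity bound $\sin(x) \geq \tfrac{2x}{\pi}$ on $[0,\pi/2]$ applied at $x = \pi/(n(n+1))$, one gets $\sin\!\big(\pi/(n(n+1))\big) \geq 2/(n(n+1)) = \epsilon$, and since $n \geq 2$ this inequality is strict, giving
\[
2 \arcsin(\epsilon) \;<\; \frac{2\pi}{n(n+1)}.
\]
The right-hand side is precisely the spacing of the candidate angles $\theta_q = 2\pi q/(n(n+1))$ in $(0, \pi]$.

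Now I would run a pigeonhole argument. Because the bad arc of each pair has length strictly less than the spacing between consecutive $\theta_q$'s, it contains at most one $\theta_q$. There are $\binom{n}{2} = \frac{n(n-1)}{2}$ pairs, so at most $\frac{n(n-1)}{2}$ indices $q$ are bad for some pair. The total number of candidate directions is $\frac{n(n+1)}{2}$, leaving at least
\[
\frac{n(n+1)}{2} - \frac{n(n-1)}{2} \;=\; n \;\geq\; 1
\]
indices $q^*$ that are good for every pair simultaneously. For any such $q^*$, the displayed inequality holds and the lemma follows.

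The only real subtlety is the strict inequality $2\arcsin(\epsilon) < 2\pi/(n(n+1))$, which is what makes the counting work without having to worry about a bad arc straddling two grid points; I would present that carefully via the $\sin x \geq 2x/\pi$ estimate. Everything else is bookkeeping about angles modulo $\pi$ and the elementary identity $\|\mathcal{P}_{S_q}(\vect{w})\|_2 = |\vect{w}^\top \vect{v}(\theta_q)|$. Extending to $\mathbb{R}^d$ with $d > 2$ would require a different construction of candidate directions, but in the two-dimensional setting of the lemma this argument is sharp enough.
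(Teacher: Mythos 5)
Your proof is correct and follows essentially the same route as the paper: both arguments show that for each of the $\frac{n(n-1)}{2}$ difference vectors $\tau\vect v_p-\tau\vect v_j$ the set of ``bad'' directions is an arc (mod $\pi$) short enough to contain at most one of the $\frac{n(n+1)}{2}$ equally spaced candidates $\vect v(\theta_q)$, and then conclude by pigeonhole, using $\sin x\geq \frac{2x}{\pi}$ to convert the threshold into $\frac{2}{n(n+1)}$. The only cosmetic difference is that you place the inequality $\sin(\pi/(n(n+1)))\geq 2/(n(n+1))$ at the start to bound the arc length, whereas the paper applies it at the end to pass from $d_{\min}\sin\Delta$ to the stated bound.
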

\begin{proof} It is clear there are at most $\frac{n(n-1)}{2}$ different $\vect u_{pj}= \tau \vect v_p - \tau \vect v_j, 1\leq p<j\leq n$. Denote $\vect v(\theta) = (\cos \theta, \sin \theta)^\top$ and $\Delta = \frac{\pi}{n(n+1)}$. We observe that if $|\vect v(\theta)^\top\vect{u}| < ||\vect{u}||_2\sin 
\Delta, \theta\in [2\Delta,\pi]$, then $|\vect v(\theta^*)^\top\vect{u}| \geq ||\vect{u}||_2\sin 
\Delta$, for $|\theta^*-\theta|\geq 2\Delta, \theta^*\in [2\Delta,\pi]$. Define $N(\vect u, \Delta) = \{\vect v|\vect v \in \mathbb R^2, ||\vect v||_2=1, |\vect v^\top\vect{u}| < ||\vect{u}||_2\sin 
\Delta \}$. If $\vect v(\theta_{q_0})\in N(\vect u_{p_0 j_0}, \Delta)$ for some $1\leq p_0, j_0\leq n$, then $\vect v(\theta_{q})\not \in N(\vect u_{p_0 j_0}, \Delta), \forall q\neq q_0, q=1,\cdots, \frac{n(n+1)}{2}$. Since we have $\frac{n(n+1)}{2}$ different $q$'s and only $\frac{n(n-1)}{2}$ $\vect u_{pj}$'s, there must be some $\vect v(\theta_{q^*}) \not \in \cup_{p<j, 1\leq j, p\leq n}N(\vect u_{pj}, \Delta)$. Hence, 
\[
\min_{p\neq j, 1\leq p, j \leq n}||\mathcal P_{S_{q^*}}(\vect{v}_p)-\mathcal P_{S_{q^*}}(\vect{v}_j)||_2\geq d_{\min} \sin\Delta \geq d_{\min} \frac{2\Delta}{\pi},
\]
whence the lemma follows. \end{proof}

\begin{thm}\label{thm:highdMUSICthm1}
Let $n\geq 2, s\geq n$ and consider the parameter set $\{(a_j, \vect y_j, \tau \vect v_j)\}_{j=1}^n$ with $\tau \vect v_j\in B_{\frac{(n-1)\pi}{T\Omega}}^2(\vect 0), 1\leq j\leq n$. For the singular values of $\ \vect H_{q}(s)$, we have 
\begin{equation}\label{equ:highdMUSICthm1equ-1}
\hat \sigma_{q,j}\leq  (s+1)\sigma, \quad j=n+1,\cdots,s+1, \quad q=1,\cdots, \frac{n(n+1)}{2}.
\end{equation}
Moreover, if the following separation condition is satisfied
\begin{equation}\label{equ:highdMUSICthm1equ0}
d_{\min}:=\min_{p\neq j}\btwonorm{\tau \vect v_p-\tau \vect v_j}>\frac{\pi (s+1)n(n+1)}{2T\Omega}\Big(\frac{n(s+1)}{\zeta(n)^2}\frac{\sigma}{m_{\min}}\Big)^{\frac{1}{2n-2}},
\end{equation}
then there exists $q^*$ so that 
\begin{equation}\label{equ:highdMUSICthm1equ2}
\hat\sigma_{q^*,n}>(s+1)\sigma.
\end{equation}
\end{thm}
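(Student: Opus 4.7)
\medskip
\noindent\textbf{Proof plan for Theorem \ref{thm:highdMUSICthm1}.}

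The strategy is to reduce the two-dimensional problem to $\frac{n(n+1)}{2}$ one-dimensional instances of the setting of Theorem \ref{thm:onedMUSICthm1}, one per projection direction $\vect v(\theta_q)$, and then invoke Lemma \ref{lem:highdmusicdirection} to select a distinguished direction $q^{*}$ along which the projected velocities remain well separated. First I would fix a direction $\vect v(\theta_q)$ and rewrite the sampled measurement $\vect Y_t(\Omega \vect v(\theta_q))$ in the one-dimensional form
\[
\vect Y(t) = \sum_{j=1}^{n} b_j^{(q)} e^{i y_j^{(q)} \Omega t} + \vect W(t), \quad t = 0,\dots,T,
\]
with $b_j^{(q)} := a_j e^{i \vect y_j^\top \vect v(\theta_q)\Omega}$ and $y_j^{(q)} := \tau \vect v_j^\top \vect v(\theta_q)$. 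Note that $|b_j^{(q)}| = |a_j|$ so $m_{\min}$ is preserved, the noise remains bounded by $\sigma$, and $|y_j^{(q)}| \leq \|\tau \vect v_j\|_2 \leq \frac{(n-1)\pi}{T\Omega}$, so the hypotheses of Theorem \ref{thm:onedMUSICthm1} on the support interval are automatic.

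For the upper bound \eqref{equ:highdMUSICthm1equ-1}, the Hankel matrix $\vect H_q(s)$ built from these samples is exactly of the form \eqref{equ:hankelmatrix1} for the one-dimensional measurement above, so the first conclusion \eqref{MUSICthm1equ-1} of Theorem \ref{thm:onedMUSICthm1} applies verbatim, and gives $\hat\sigma_{q,j} \leq (s+1)\sigma$ for $j = n+1,\dots,s+1$ uniformly in $q$. This part requires no separation assumption on the $\tau \vect v_j$'s.

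For the lower bound \eqref{equ:highdMUSICthm1equ2}, I would invoke Lemma \ref{lem:highdmusicdirection} with the velocities $\tau\vect v_1,\dots,\tau\vect v_n$ to produce an index $q^{*}$ satisfying
\[
\min_{p \neq j}\bigl|y_p^{(q^{*})} - y_j^{(q^{*})}\bigr| \;=\; \min_{p \neq j}\btwonorm{\mathcal P_{S_{q^{*}}}(\tau\vect v_p) - \mathcal P_{S_{q^{*}}}(\tau\vect v_j)} \;\geq\; \frac{2 d_{\min}}{n(n+1)}.
\]
Then I would check that the hypothesis \eqref{equ:highdMUSICthm1equ0} on $d_{\min}$ translates, via the factor $\frac{2}{n(n+1)}$ coming from the projection, into the one-dimensional separation condition \eqref{MUSICthm1equ0} on the $y_j^{(q^{*})}$'s. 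Applying the second conclusion \eqref{MUSICthm1equ2} of Theorem \ref{thm:onedMUSICthm1} to $\vect H_{q^{*}}(s)$ then yields $\hat\sigma_{q^{*},n} > (s+1)\sigma$.

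The main obstacle I anticipate is a careful bookkeeping step rather than a conceptual one: verifying that the separation threshold in \eqref{equ:highdMUSICthm1equ0} is exactly calibrated so that $\frac{2 d_{\min}}{n(n+1)}$ clears the one-dimensional threshold in \eqref{MUSICthm1equ0}, and simultaneously dealing with the possible change in the constant inside the $\bigl(\tfrac{\sigma}{m_{\min}}\bigr)^{1/(2n-2)}$ factor due to the projection (the two-dimensional bound carries $n(s+1)$ in place of $2n(s+1)$, which must be traced back to the precise place where the factor of $2$ appears in Weyl's estimate in the proof of Theorem \ref{thm:onedMUSICthm1}). Once the constants match, the proof is essentially assembled from Lemma \ref{lem:highdmusicdirection} and Theorem \ref{thm:onedMUSICthm1}.
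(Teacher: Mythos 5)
Your proposal follows essentially the same route as the paper's proof: project the measurements onto each direction $\vect v(\theta_q)$ to reduce to the one-dimensional setting of Theorem \ref{thm:onedMUSICthm1} (which gives \eqref{equ:highdMUSICthm1equ-1} uniformly in $q$ with no separation assumption), then use Lemma \ref{lem:highdmusicdirection} to pick a direction $q^{*}$ whose projected separation is at least $\frac{2d_{\min}}{n(n+1)}$ and apply the one-dimensional theorem again. The bookkeeping concern you flag is legitimate: the paper's own proof asserts $\frac{2d_{\min}}{n(n+1)} > \frac{\pi(s+1)}{T\Omega}\bigl(\frac{2n(s+1)}{\zeta(n)^2}\frac{\sigma}{m_{\min}}\bigr)^{\frac{1}{2n-2}}$ directly from \eqref{equ:highdMUSICthm1equ0}, which silently drops a factor $2^{1/(2n-2)}$ inside the root and a further factor of $2$ against the threshold \eqref{MUSICthm1equ0}, so the constant in \eqref{equ:highdMUSICthm1equ0} should be adjusted (or the slack traced back through Weyl's inequality) exactly as you anticipate.
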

\begin{proof} Note that for each $\vect v(\theta_q)$ in (\ref{equ:numberalgovectorlist}), the $\tau \vect v_j$'s satisfy that $|\tau \vect v_j^\top\vect v(\theta_q)|\leq \frac{(n-1)\pi}{T\Omega}, j=1,\cdots,n$. By applying Theorem \ref{thm:onedMUSICthm1} on the Hankel matrix $\vect H_{q}(s)$ formulated from the measurements $\vect Y_t(\Omega \vect v (\theta_q)),$ for $t=0,\cdots, T$,  we immediately get (\ref{equ:highdMUSICthm1equ-1}). Moreover, when the separation condition (\ref{equ:highdMUSICthm1equ0}) holds, by Lemma \ref{lem:highdmusicdirection}, there is some $q^*$ so that 
\[
\min_{p\neq j, 1\leq p, j \leq n}\babs{\mathcal P_{S_{q^*}}(\vect{v}_p)-\mathcal P_{S_{q^*}}(\vect{v}_j)}\geq \frac{2 d_{\min}}{n(n+1)}> \frac{\pi (s+1)}{T\Omega}\Big(\frac{2n(s+1)}{\zeta(n)^2}\frac{\sigma}{m_{\min}}\Big)^{\frac{1}{2n-2}}.
\]
Applying Theorem \ref{thm:onedMUSICthm1} again, we get $\hat\sigma_{q^*,n}>(s+1)\sigma$.\end{proof}

\medskip
The above theorem shows that for point sources that are well-separated, we can determine the source number $n$ by 
thresholding on the singular values of the Hankel matrices $\vect H_q(s)$'s. We note that the number of required unit vectors $\vect v(\theta_q)$ is not available since $n$ is unknown. In practice, we can choose a large enough $N$, say $N\geq \frac{n(n+1)}{2}$.
We summarize our algorithm as below.

\begin{algorithm}[H]\label{algo:twodsweepingnumber}
	\caption{\textbf{Two-dimensional sweeping singular-value-thresholding number detection algorithm}}	
	\textbf{Input:} Noise level $\sigma$, measurement: $\mathbf{Y}(\vect \omega), \vect \omega \in \mathbb R^2, ||\vect \omega||_2\leq \Omega$, and $n_{\max} = 0$\\
	\textbf{Input:} A large enough $N$, and corresponding $N$ unit vectors $\vect v(\theta_q)=(\cos{\theta_q}, \sin\theta_q)^\top, \theta_q = \frac{\pi}{N}, \frac{2\pi}{N}, \cdots, \pi$ \\
	\For{$\theta_q=\frac{\pi}{N}, \frac{2\pi}{N}, \cdots, \pi$}{
		Input $\sigma$ and $Y_t(\Omega \vect v (\theta_q)), t=0,\cdots, T$ to \textbf{Algorithm 2}, save the output of \textbf{Algorithm 2} as $n_{recover}$\; 
		\If{$n_{recover}>n_{max}$}{$n_{max}=n_{recover}$}
	}
	{
	\textbf{Return} $n_{max}$
	}
\end{algorithm}

\subsection{Numerical experiments}
In this subsection, we conduct some numerical experiments to demonstrate the super-resolution ability of our number detection algorithm and the superiority of our algorithm over the static reconstruction method. 

We present an example which shows that our algorithm can recover the source number even when it is impossible to recover the source number by the static reconstruction in any frame. For simplicity, we consider $\Omega =1$, $\tau =0.2$, $\sigma =10^{-2}$ and the measurements at $5$ time steps. We construct an example where two point sources with $O(1)$ intensities are located at $\vect y_1= (0, 0.27), \vect y_2 = (0.20, 0.17)$ moving respectively at the velocities $\vect v_1= (0.14, 0.51), \vect v_2 = (0.45, 0.17)$. At each of the $5$ time steps, the locations of the two point sources are, 
\begin{align*}
	&(0, 0.27),\quad  (0.20, 0.17),\quad \text{at time step $t=0$,}\\
	&(0.028, 0.372),\quad  (0.29,	0.204),\quad \text{at time step $t=1$,}\\
	&(0.056,	0.474),\quad  (0.38,	0.238),\quad \text{at time step $t=2$,}\\
	&(0.084,	0.576),\quad  (0.47,	0.272),\quad \text{at time step $t=3$,}\\
	&(0.112,	0.678),\quad  (0.56, 0.306),\quad \text{at time step $t=4$}.
\end{align*}
Note that at each time step, the point sources are separated by a distance which is much lower than the Rayleigh limit. Thus the conventional static reconstruction cannot recover the source number and locations in each frame. However, by \textbf{Algorithm \ref{algo:twodsweepingnumber}} we recover the number of point sources. This demonstrates the superiority of our algorithm over the static reconstructions for recovering correctly the source number. 

In addition, in the above example the resolution is near $\tau ||\vect v_1-\vect v_2||_2\approx 0.1$, which is much better than the Rayleigh limit ($\pi$) of the imaging system.  The reason is that we take $T=5$ and the resolution limit in Theorem \ref{thm:trackvelocitiessupplimit0} which of order $O(\frac{1}{T\Omega}(\frac{\sigma}{m_{\min}})^{\frac{1}{2n-2}})$ indicates that a very good resolution for number detection can be achieved when we have multiple observations.

\section{Projection-based velocity recovery algorithms}\label{section:velocitysupportalgorithm}
In this section, we propose a projection-based velocity recovery algorithm in two dimensions. The algorithm can be easily extended to higher dimensions. 

\subsection{Review of one-dimensional MUSIC algorithm}\label{section:onedsupportalgorithm}

In this subsection, we review the one-dimensional MUSIC algorithm \cite{schmidt1986multiple, stoica1989music, liao2016music, liu2022measurement}. In a standard MUSIC algorithm for solving the inverse problem (\ref{equ:modelsetting2}), one first assembles the Hankel matrix $\vect H(s)$ as in (\ref{equ:hankelmatrix1}), where 

\begin{equation}\label{equ:musicequ1}
	s= \begin{cases}\frac{T}{2}, &\text{even}\  T, \\
\frac{T-1}{2}, & \text{odd} \ T,
\end{cases}
\end{equation}
then performs the singular value decomposition of $\vect H(s)$,
\[
\vect H(s) = \hat U\hat \Sigma \hat U^*=[\hat U_1\quad \hat U_2]\text{diag}(\hat \sigma_1, \hat \sigma_2,\cdots,\hat \sigma_n,\hat \sigma_{n+1},\cdots,\hat \sigma_{\hat N+1})[\hat U_1\quad \hat U_2]^*,
\]
where $\hat U_1=(\hat U(1),\cdots,\hat U(n)), \hat U_2=(\hat U(n+1),\cdots,\hat U(\hat N+1))$ with $n$ being the estimated source number (model order). The source number $n$ can be detected by many algorithms such as those in \cite{akaike1998information, wax1985detection, schwarz1978estimating, wax1989detection, chen1991detection, he2010detecting, han2013improved, liu2021theorylse}. Denote the orthogonal projection to the space $\hat U_2$ by $\hat P_2x=\hat U_2(\hat U_2^*x)$. For a test vector $\Phi(\omega)=(1, e^{i\Omega \omega},\cdots,e^{i s \Omega \omega})^\top$, one defines the MUSIC imaging functional 
\begin{align*}
\hat J(\omega)=\frac{||\Phi(\omega)||_2}{||\hat P_2\Phi(\omega)||_2}=\frac{||\Phi(\omega)||_2}{||\hat U_2^*\Phi(\omega)||_2}.
\end{align*}
The local  maximizer of $\hat J(\omega)$ indicates the location of the point sources. In practice, one can test evenly spaced points in a specified region and plot the discrete imaging functional and then determine the source locations by detecting the peaks. A peak selection algorithm \textbf{Algorithm \ref{algo:peakselection}} is given below. Finally, we summarize the standard MUSIC algorithm in \textbf{Algorithm \ref{algo:standardmusic}} below.

\begin{algorithm}[H]
	\caption{\textbf{Standard MUSIC algorithm}}
	\textbf{Input:} Noise level $\sigma$, source number $n$\;
	\textbf{Input:} Measurements: $\mathbf{Y}=(\vect Y(0), \vect Y(1), \cdots, \vect Y(T))^\top$\;
	\textbf{Input:} Region of test points $[TS, TE]$ and spacing of test points $TPS$\;
	1: Determine $s$ by (\ref{equ:musicequ1}), formulate the $(s+1)\times (s+1)$ Hankel matrix $\vect H(s)$ as (\ref{equ:hankelmatrix1}) from $\mathbf{Y}$\;
	2: Compute the singular vectors of $\vect H(s)$ as $\hat U(1), \hat U(2),\cdots,\hat U(s+1)$ and formulate the noise space $\hat U_{2}=(\hat U(n+1),\cdots,\hat U(s +1))$\;
	3: For test points $x$ in $[TS, TE]$ evenly spaced by $TPS$, construct the test vector $\Phi(\omega)=(1,e^{i\Omega \omega}, \cdots, e^{is \Omega \omega})^\top$\; 
	4: Plot the MUSIC imaging functional $\hat J(\omega)=\frac{||\Phi(\omega)||_2}{||\hat U_2^*\Phi(\omega)||_2}$\;
	5: Select the peak locations $\hat y_j$'s in the $\hat J(x)$ by \textbf{Algorithm \ref{algo:peakselection}}\;
	\textbf{Return} $\hat y_j$'s.
	\label{algo:standardmusic}
\end{algorithm}

\begin{algorithm}[H]
    \caption{\textbf{Peak selection algorithm}}
    \textbf{Input:} Image $IMG = (f(\omega_1), \cdots, f(\omega_M))$\;
    \textbf{Input:} Peak compare range $PCR$, differential compare range $DCR$,  differential compare threshold $DCT$\;
    1: Initialize the local maximum points $LMP = [\ ]$, peak points $PP = [\ ]$\;
    2: Differentiate the image $IMG$ to get the $DIMG = (f'(\omega_1), \cdots, f'(\omega_M))$\;
    3: \For{$j=1:M$}{
    \If{$f(\omega_j) = \max(f(\omega_{j-PCR}), f(\omega_{j-PCR+1}), \cdots, f(\omega_{j+PCR}))$}{
    $LMP$ appends $\omega_j$\;
    }}
    4: \For{$\omega_j$ in $LMP$}{
    \If{
    $\max(|f'(\omega_{j-DCR})|, |f'(\omega_{j-DCR+1})|, \cdots, |f'(\omega_{j+DCR})|)\geq DCT$
    }{$PP$ appends $\omega_j$\;}
    \textbf{Return:} $PP$.
    }
    \label{algo:peakselection}
\end{algorithm}

\subsection{Projection-based MUSIC algorithm for super-resolving velocities}\label{section:highdsupportalgorithm}
In this subsection we propose a projection-based MUSIC algorithm for super-resolving the velocities in two dimensions. As indicated by the proof of Theorem \ref{thm:trackvelocitiessupplimit0}, when the velocities are well-separated in $\mathbb R^{2}$, there exist two unit vectors so that the projection of the velocities in two one-dimensional subspaces spanned by these unit vectors can be stably recovered. We can then find the original two-dimensional velocities from their projections. More precisely, let $N=\frac{(n+2)(n+1)}{2}$ and
\begin{equation}\label{equ:support2dvectors1}
\vect v(\phi)=(\cos\phi, \ \sin \phi)^\top, \quad \phi\in \Big\{\frac{\pi}{N}, \frac{2\pi}{N},\cdots, \pi \Big\}.
\end{equation}
For each $\vect v(\phi)$, we denote the space spanned by $\vect v(\phi)$ as $S(\phi)$ and call the $(\vect v_j^\top \vect v(\phi)) \vect v(\phi)$'s the projected velocities in $S(\phi)$. We consider the measurement at $\Omega \vect v(\phi)$ that 
\begin{equation}\label{equ:velocitytwodalgomeasure1}
\vect Y_t(\Omega \vect v (\phi)) = \sum_{j=1}^n a_j e^{i \vect y_j^\top\vect v(\phi)\Omega }e^{i \tau \vect  v_j^\top\vect v(\phi)\Omega t}+ \vect W_t(\vect v(\theta_q)\Omega), \quad t=0,\cdots, T. 
\end{equation}
We then recover the projected velocities in each of those one-dimensional subspaces using the one-dimensional MUSIC algorithm fed with an estimated source number. We then only consider those subspaces $S(\phi)$'s where $n$ peaks appear, i.e., $n$ projected velocities are reconstructed. We further choose two vectors from the $\vect v(\phi)$'s, denoted by $\vect v(\phi_1), \vect v(\phi_2)$, so that the recovered velocities in $S(\phi_1), S(\phi_2)$ have respectively the largest and second largest minimum separation distance. We remark that, when $N$ is large, one can require additionally that $\vect v(\phi_1)$ is not too much correlated to $\vect v(\phi_2)$, say $|\vect v(\phi_1)^\top \vect v(\phi_2) | \leq c$ for some constant $0<c<1$, to ensure that the reconstruction of two-dimensional velocities from their projections on $\vect v(\phi_1), \vect v(\phi_2)$ is stable.

We next construct the original velocities from their projections on $S(\phi_1)$ and $S(\phi_2)$. This is usually called the pair matching in the problem of direction of arrival, where ad hoc schemes \cite{zoltowski1989sensor, johnson1991operational, chen1992direction, yilmazer2006matrix} were derived to associate the estimated azimuth and elevation angles. In our paper, this can be done in the following manner. From the recovered projected velocities in $S(\phi_1)$ and $S(\phi_2)$, we first form a grid of $n^2$ points, say $\vect z_{1,1}, \vect z_{1,2}, \cdots, \vect z_{n,n}$. It can be shown that the original velocities are close to these grid points. These grid points reduce the off-the-grid recovery problem to an on-the-grid one. We then employ an enumeration method to recover the source velocities from these grid points. To be more specific, we define \begin{equation}\label{equ:velocityalgoGformula1}
G_{\phi}(\vect z_{1,j_1},\cdots, \vect z_{n,j_n}) = \big(\phi_T(e^{i \vect z_{1,j_1}^\top \vect v(\phi) \Omega}), \ \phi_T(e^{i \vect z_{2,j_2}^\top \vect v(\phi)\Omega}), \ \cdots, \ \phi_T(e^{i \vect z_{n,j_n}^\top \vect v(\phi) \Omega})\big),
\end{equation}
where $\phi_T(x)$ is defined as in (\ref{equ:defiofphi}), and solve the following optimization problem by enumeration, 
 \begin{equation}\label{equ:velocityrecoveralgominiequ1}
 \min_{\pi \in \gamma(n)}\Big(\sum_{\phi}\min_{\vect {\hat a}} \btwonorm{G_{\phi}(\vect z_{1,\pi_1},\cdots, \vect z_{n,\pi_n})\mathbf{\hat a} - \vect Y(\vect v(\phi))}\Big),
 \end{equation} 
 where $\vect Y(\vect v(\phi)) = (\vect Y_0(\vect v(\phi) \Omega), \cdots, \vect Y_T(\vect v(\phi) \Omega))^\top$, and $\gamma(n)$ is the set of all permutations of $\{1, \cdots, n\}$. We note that the computational complexity of the enumeration is low when $n$ is not large. We summarize the algorithm in \textbf{Algorithm \ref{algo:twodmusicvelocity}} below.

\begin{remark}
 We remark that since in each one-dimensional space, the MUSIC algorithm attains nearly the optimal order of the resolution, say $O(\frac{1}{T\Omega}(\frac{\sigma}{m_{\min}})^{\frac{1}{2n-1}})$, as demonstrated in \cite{liao2016music, li2021stable} numerically, our \textbf{Algorithm \ref{algo:twodmusicvelocity}} achieves the optimal order of the resolution that is derived in Theorem \ref{thm:trackvelocitiessupplimit0}. 
\end{remark}
 
\begin{remark}
We remark that the range of the recovered velocities is confined by the sampling rate in (\ref{equ:velocitytwodalgomeasure1}). Suppose we have a large range of velocities, one way to recover them is to consider all the possible projected velocities in a large range in one-dimensional space and construct a large grid. Then we recover the correct velocities in a similar way as (\ref{equ:velocityrecoveralgominiequ1}) with more measurements in $\vect Y_t$'s.  
\end{remark}


\begin{algorithm}[H]\label{algo:twodmusicvelocity}
\caption{\textbf{Two-dimensional projection-based velocity recovery algorithm}}	
\textbf{Input:} Noise level $\sigma$, source number $n$;\\
\textbf{Input:} Measurement: $\mathbf{Y}_t(\vect \omega), \vect \omega \in \mathbb R^2, ||\vect \omega||_2\leq \Omega, t=0, \cdots, T$;\\
\textbf{Input:} Region of test points $[TS, TE]$ and spacing of test points $TPS$\;
\textbf{Input:} A large enough $N$, and corresponding $N$ unit vectors $\vect v(\phi)$ in (\ref{equ:support2dvectors1}), $ \phi \in \Big\{ \frac{\pi}{N}, \cdots, \pi\Big\}$ \\
1:\For{$\phi \in \Big\{ \frac{\pi}{N}, \frac{2\pi}{N}, \cdots, \pi\Big\}$}{
	Input $\sigma$ and $\mathbf{Y}_t(\vect v(\phi)\Omega), t=0, \cdots, T$ to \textbf{Algorithm \ref{algo:onedsweepingnumber}} to recover the projected source number $\hat n$\;
	Input $\sigma, \mathbf{Y}_t(\vect v(\phi)\Omega), t=0,\cdots, T$, $\hat n$, $[TS, TE]$ and $TPS$ to \textbf{Algorithm \ref{algo:standardmusic}}, save the output as $b_1, \cdots, b_q$\;
	The recovered projected velocities are $\vect {\hat p}_{1} = b_1 \vect v(\phi), \cdots, \mathbf {\hat p}_{q} = b_q \vect v(\phi)$\;
}
2: Choose two vectors $\vect v(\phi_1), \vect v(\phi_2)$, from those $\vect v(\phi)$'s so that $n$ projected velocities were recovered in each of the spaces $S(\phi_1), S(\phi_2)$ and the recovered projected velocities $\vect {\hat p}_{j}(\vect v(\phi_1))$'s, $\mathbf {\hat p}_{j}(\vect v_2(\phi_2))$'s have respectively the largest and second largest minimum separation distance\;
3:Construct the $n^2$ grid points $\vect z_{1,1}, \vect z_{1,2}, \cdots, \vect z_{n,n}$ by considering the intersection points of lines $\mathbf {\hat p}_{q}(\vect v(\phi_j))+ \lambda \vect g(\phi_j), \ \lambda \in \mathbb R, \ q =1,\cdots,n, \ j =1,2$, where $\vect g(\phi_j)$ is the unit vector that is perpendicular to $\vect v(\phi_j)$\;
4: Solve the following optimization problem by enumeration, 
\[\min_{\pi \in \gamma(n)}\Big(\sum_{\phi}\min_{\vect {\hat a}} \btwonorm{G_{\phi}(\vect z_{1,\pi_1},\cdots, \vect z_{n,\pi_n})\mathbf{\hat a} - \vect Y(\vect v(\phi))}\Big),
\]
where $G_{\phi}$ is defined by (\ref{equ:velocityalgoGformula1}), $\vect Y(\vect v(\phi)) = (\vect Y_0(\vect v(\phi) \Omega), \cdots, \vect Y_T(\vect v(\phi) \Omega))^\top$, and $\gamma(n)$ is the set of all permutations of $\{1,\cdots, n\}$\;
5: The minimizer $\vect z_{1,\pi_1}, \cdots, \vect z_{n, \pi_n}$'s are the recovered velocities $\tau\mathbf{\hat v}_1, \cdots, \tau \mathbf{\hat v}_n$\;
\textbf{Return} $\tau \mathbf{\hat v}_1, \cdots, \tau \mathbf{\hat v}_n$.
\end{algorithm}

\subsection{Numerical experiments}
In this subsection, we conduct some numerical experiments to demonstrate the super-resolution ability of our velocity recovery algorithm and the superiority of our algorithm over the static reconstruction method. 

We present an example which shows that our algorithm can super-resolve the velocities even when it is impossible to recover stably any locations and velocities by the static reconstruction. For simplicity, we consider $\Omega =1$, $\tau =1$, $\sigma =10^{-2}$ and the measurements at $5$ time steps. We construct an example where two point sources with $O(1)$ intensities are located at $\vect y_1=(0.22, 0.08), \vect y_2= (0.05, 0.08)$ moving respectively at the velocities $\vect v_1=(0.47, 0.11), \vect v_2=(0.58, 0.56)$. At each of the $5$ time steps, the locations of the two point sources are, 
\begin{align*}
&(0.22, 0.08),\quad  (0.05, 0.08), \quad \text{at time step $t=0$,} \\
&(0.69, 0.19),\quad  (0.63, 0.64), \quad \text{at time step $t=1$,}\\
&(1.16, 0.30),\quad  (1.21, 1.20),\quad \text{at time step $t=2$,}\\
&(1.63, 0.41),\quad  (1.79, 1.76), \quad \text{at time step $t=3$,}\\
&(2.10, 0.52),\quad  (2.37, 2.32), \quad \text{at time step $t=4$}.
\end{align*}
Note that at each time step, the point sources are separated by a distance below the Rayleigh limit. Thus the conventional static reconstruction cannot stably recover any of the locations or velocities. Since the point sources are so close in each frame, the super-resolution algorithms such as MUSIC algorithm even cannot resolve the locations. However, by \textbf{Algorithm \ref{algo:twodmusicvelocity}} we recover stably the velocities: $\vect {\hat v}_1 = (0.442, 0.120), \vect {\hat v}_2 = (0.558, 0.570)$. This demonstrates the superiority of our algorithm over the static reconstructions for recovering the velocities. 

In addition, in the above example the resolution for the velocities is near $0.4$, which is much better than the Rayleigh limit ($\pi$) of the imaging system.  The reason is that we take $5$ times observations and the resolution limit in Theorem \ref{thm:trackvelocitiessupplimit0} which of order $O(\frac{1}{T\Omega}(\frac{\sigma}{m_{\min}})^{\frac{1}{2n-1}})$ indicates a very good resolution for velocity recovery can be achieved when we have multiple observations. On the other hand, as is indicated by measurement (\ref{equ:velocitytwodalgomeasure1}), the cut-off frequency for the velocity recovery can be viewed as $T\Omega$ rather than the $\Omega$, the inherent Rayleigh limit for the velocity recovery should be $\frac{\pi}{T\Omega}$. The resolution in the example $0.4$ is near the inherent Rayleigh limit $\frac{\pi}{4}$. It is also indicated that even if the signal-to-noise ratio becomes worse, we anticipate that our algorithm or any other super-resolution algorithm for velocity recovery can resolve the velocities stably when they are separated by a distance beyond $\frac{\pi}{4}$. For our algorithm, this is demonstrated by the following one-dimensional example. 

We consider $\Omega =1$, $\tau =1$, $\sigma =0.3$, and the measurements are taken at $5$ time steps. We construct an example where two point sources with $O(1)$ intensities are located at $\vect y_1=0.296, \vect y_2= 0.038$ moving respectively at the velocities $\vect v_1=0.2, \vect v_2=1.1$. Note that the signal-to-noise ratio is small. However, we can still stably recover the velocities by the one-dimensional velocity recovery algorithm (analogously to \textbf{Algorithm \ref{algo:twodmusicvelocity}}) that yields $\vect {\hat v}_1 = 0.14, \vect {\hat v}_2 = 1.11$.

 Note that we use a one-dimensional example because the resolution of our two-dimensional algorithm is also related to the projections.



\section{Conclusions and future works}
In this paper, we have explored the resolution limits for recovering the number and value of location-velocity pairs in the dynamic reconstruction of the tracking problem. We have also derived sharp and better resolution limits for reconstructing the number and values of velocities in the dynamic reconstruction. Also, two projection-based algorithms have been introduced to super-resolve respectively the number and values of velocities. By these results, we have demonstrated certain advantages of the dynamic reconstruction in the tracking problem over the conventional static reconstructions.

Besides our research findings, our work is also a start of many new topics. Especially, from the crucial observation of measurement (\ref{equ:velocitytwodalgomeasure1}), many new algorithms can be inspired to obtain much better resolution for the velocity recovery over the static reconstruction method. In practice, the point spread function may be approximated by other different functions, such as Gaussian functions, and a large amount of point sources may be clustered together in a single image \cite{couture2018ultrasound}, which hampered the application of subspace methods. In this case, the convex optimization based algorithms are great surrogates for resolving the point sources. Applying convex optimization to (\ref{equ:velocitytwodalgomeasure1}) or its variant may enhance significantly the resolution in the practical tracking problem. In addition, developing efficient algorithms for exploring the amplitudes and the locations of point sources when the velocities are known is also an interesting topic. Note that when the velocities are stably recovered, we can reconstruct the $\hat a_j, \vect {\hat y}_j$'s by solving the minimization problem
\[
 \min_{\hat a_j, \vect{\hat y}_j}\btwonorm{\sum_{j=1}^n \hat a_j e^{ i \vect {\hat y}_j^\top\vect \omega} e^{i t\tau \vect {\hat v}_j^\top\vect \omega} -\vect Y_t(\vect \omega)}, \quad ||\vect \omega||_2\leq \Omega, t =0, \cdots, T. 
\]
The aim is to develop a tractable algorithm in order to recover stably the $\hat a_j, \vect {\hat y}_j$'s for the above minimization problem or its variants.

\section{Proofs of Theorems \ref{thm:trackhighdupperboundnumberlimit0} and \ref{thm:trackhighdupperboundsupportlimit0}}\label{sec:proofmainresults}

\subsection{The geometry of the problem}
For each time step, the noiseless measurement $\sum_{j=1}^n a_j e^{i(\vect y_j+t \tau \vect v_j)^\top\vect \omega}, ||\vect \omega||_2\leq \Omega,$ can be written as
\begin{equation}\label{equ:measurementtwodview}
\sum_{j=1}^n a_j e^{i\vect \alpha_j^\top\vect \omega_t}, 
\end{equation}
where $\vect \alpha_j = \begin{pmatrix}
\vect y_j\\
\tau \vect v_j
\end{pmatrix}, \vect \omega_t =\begin{pmatrix}
\vect \omega\\
t\vect \omega
\end{pmatrix}, \ t=0,\cdots, T, \ ||\vect \omega||_2\leq \Omega$. Define the spaces 
\begin{equation}\label{equ:spacestd}
S_t^d:=\Big\{\begin{pmatrix}
\vect v\\
t\vect v
\end{pmatrix}\Big| \vect v \in \mathbb R^d \Big\}, \quad t=0,1, \cdots.
\end{equation}
Then the measurement $\vect Y_t(\vect \omega)$ can also be written as 
\begin{equation}\label{equ:measurementsprojection}
	\vect Y(\vect \omega_t) = \sum_{j=1}^n a_j e^{i\mathcal P_{S_t^d}(\vect \alpha_j)^\top\vect \omega_t}+\vect W(\vect \omega_t), \quad  ||\vect \omega_t||_2\leq \sqrt{1+t^2}\Omega,  
\end{equation}
with $|\vect W(\vect \omega_t)|<\sigma$. Note that $S_t^d$'s are $d$-dimensional spaces. For each $t$, the above (\ref{equ:measurementsprojection}) is equivalent to a $d$-dimensional super-resolution problem in \cite{liu2021mathematicalhighd}; see Section \ref{section:resultsforstatichighd} as well. Hence, now the tacking problem can be viewed as reconstructing the $2d$-dimensional vectors $\vect \alpha_j$'s from the measurements (or super-resolution problems) in several $d$-dimensional subspaces ($S_t^d$'s) of $\mathbb R^{2d}$. By the same projection idea as the one in \cite{liu2021mathematicalhighd}, we can estimate the stability of the recovery. We prove our main results by analyzing some geometrical properties of these subspaces $S_t^d$'s. 

We first consider the case when $d=1$ and estimate the angle between adjacent lines in $\{(\omega, t\omega)^\top\}_{t=0}^T, \omega\in \mathbb R$. Note that
\[
\arctan(u)\pm \arctan(v) = \arctan\Big(\frac{u\pm v}{1\mp uv}\Big)  \quad  
(\mod \pi),\quad  uv\neq 1,  
\]
and
\[
\arctan(x) \geq  x- \frac{x^3}{3}.
\]
Thus for $t\geq 1$ we have 
\begin{equation}\label{equ:spaceangleestimate1}
\arctan(t)-\arctan(t-1) =  \arctan\Big(\frac{1}{1+t(t-1)}\Big) \geq  \frac{1}{t^2-t+1}- \frac{1}{3(t^2-t+1)^3} > \frac{1}{t^2+1}.
\end{equation}
We denote the unit vector in $S_t^1$'s as $\vect q_t = \frac{1}{\sqrt{1+t^2}}(1, t)^\top$. Define $\angle(\vect q_t, \vect q_j)$ the angle between vectors $\vect q_t, \vect q_j$ that 
\[
\angle(\vect q_t, \vect q_j) = \arccos\Big(\frac{\vect q_t^\top \vect q_j}{||\vect q_t||_2 ||\vect q_j||_2}\Big).
\]
By the above observation, we have
\begin{equation}\label{equ:spaceangle0}
\angle(\vect q_t, \vect q_j) = \arctan(t) - \arctan(j) > \sum_{k=j+1}^{t}\frac{1}{1+k^2}, \quad j<t. 
\end{equation}
We first estimate the projection to these one-dimensional subspaces $S_t^1$'s.  
\begin{lem}\label{lem:spaceangleestimate2}
For $t=0,1,\cdots$ and a $\vect u \in \mathbb R^2$, if 
\[
\btwonorm{\mathcal P_{S_t^1}(\vect u) } < \frac{1}{2\pi(1+t^2)}||\vect u||_2,
\]
then for $0\leq j<  t$, we have 
\[
\btwonorm{\mathcal P_{S_j^1}(\vect u)} \geq \frac{1}{2\pi(1+j^2)} ||\vect u||_2.
\]
\end{lem}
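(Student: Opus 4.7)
My plan is to parameterize $\vect u$ in polar coordinates, reduce both the hypothesis and the conclusion to bounds on cosines of angles with the unit vectors $\vect q_t$, and then exploit a cosine addition identity together with the explicit formula for the angle $\angle(\vect q_t,\vect q_j)$. By homogeneity I may assume $\|\vect u\|_2=1$ and write $\vect u=(\cos\theta,\sin\theta)^\top$. Since $\vect q_t=(\cos\beta_t,\sin\beta_t)^\top$ with $\beta_t:=\arctan t$, the projection satisfies $\|\mathcal P_{S_t^1}(\vect u)\|_2=|\cos(\theta-\beta_t)|$. Setting $\alpha_t:=\theta-\beta_t$, the hypothesis reads $|\cos\alpha_t|<\frac{1}{2\pi(1+t^2)}$ and the conclusion becomes $|\cos\alpha_j|\geq\frac{1}{2\pi(1+j^2)}$ for $0\leq j<t$. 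Note that $t\geq 1$ in any non-vacuous case.

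The key relation is $\alpha_j=\alpha_t+\psi$, where $\psi:=\beta_t-\beta_j=\arctan t-\arctan j>0$. A direct computation (or the formula I would derive from $\sin(A-B)=\sin A\cos B-\cos A\sin B$) gives
\[
\sin\psi=\frac{t-j}{\sqrt{(1+t^2)(1+j^2)}}.
\]
Expanding $\cos\alpha_j=\cos\alpha_t\cos\psi-\sin\alpha_t\sin\psi$ and applying the reverse triangle inequality yields
\[
|\cos\alpha_j|\;\geq\;|\sin\alpha_t|\,|\sin\psi|-|\cos\alpha_t|\,|\cos\psi|\;\geq\;|\sin\alpha_t|\,|\sin\psi|-|\cos\alpha_t|.
\]
Because $|\cos\alpha_t|<\frac{1}{2\pi(1+t^2)}\leq\frac{1}{4\pi}$ (using $t\geq 1$), I obtain $|\sin\alpha_t|\geq\sqrt{1-\frac{1}{16\pi^2}}>0.996$. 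Substituting these bounds reduces the lemma to the purely numerical inequality
\[
\sqrt{1-\tfrac{1}{16\pi^2}}\cdot\frac{t-j}{\sqrt{(1+t^2)(1+j^2)}}-\frac{1}{2\pi(1+t^2)}\;\geq\;\frac{1}{2\pi(1+j^2)}.
\]

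The main obstacle is verifying this last inequality uniformly in integers $0\leq j<t$; the geometric reduction above is mechanical, while the estimate itself must be sharp enough to survive all choices of $(t,j)$. My plan is to multiply through by $2\pi\sqrt{(1+t^2)(1+j^2)}$, giving the equivalent form
\[
2\pi\sqrt{1-\tfrac{1}{16\pi^2}}\,(t-j)\;\geq\;\sqrt{\tfrac{1+t^2}{1+j^2}}+\sqrt{\tfrac{1+j^2}{1+t^2}}=\frac{2+t^2+j^2}{\sqrt{(1+t^2)(1+j^2)}}.
\]
Using the AM-GM consequence $(1+t^2)(1+j^2)\geq(1+tj)^2$, it suffices to prove $2\pi\sqrt{1-\tfrac{1}{16\pi^2}}(t-j)(1+tj)\geq 2+t^2+j^2$. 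I plan to handle this by first treating the base case $t-j=1$ (where the right side is $2+j^2+(j+1)^2$ and the left grows like $2\pi(1+j(j+1))\approx 2\pi j^2$, easily sufficient since $2\pi\approx 6.28>2$), and then observing that as $t-j$ increases with $j$ fixed, the left side grows linearly in $t-j$ with coefficient $2\pi(1+tj)$, which dominates the quadratic growth of the right-hand side. An alternative cleaner route, if the above bookkeeping is messy, is to split into the two regimes $j=0$ and $j\geq 1$ and verify each directly: for $j=0$ I need $2\pi\sqrt{1-\tfrac{1}{16\pi^2}}\,t\geq\frac{2+t^2}{\sqrt{1+t^2}}$, which is straightforward since $\frac{2+t^2}{\sqrt{1+t^2}}\leq\sqrt{1+t^2}+\frac{1}{\sqrt{1+t^2}}\leq 2\sqrt{1+t^2}\leq 2\sqrt{2}\,t$ for $t\geq 1$; and for $j\geq 1$ one uses $1+tj\geq tj\geq j^2+j(t-j)$ to absorb the quadratic terms. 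This completes the plan.
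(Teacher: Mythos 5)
Your argument is correct, but only via the ``alternative cleaner route'' you sketch at the end; the primary reduction is not merely messy but actually breaks in one regime. The AM--GM weakening $\sqrt{(1+t^2)(1+j^2)}\geq 1+tj$ is far too lossy when $j=0$: the sufficient condition $2\pi\sqrt{1-\tfrac{1}{16\pi^2}}\,(t-j)(1+tj)\geq 2+t^2+j^2$ then reads $6.26\,t\geq 2+t^2$, which fails for $t\geq 6$, and the claim that the left side ``dominates the quadratic growth of the right-hand side'' is false precisely because $1+tj=1$ there. Your case split does close the argument completely: for $j=0$ one has $\frac{2+t^2}{\sqrt{1+t^2}}=\sqrt{1+t^2}+\frac{1}{\sqrt{1+t^2}}\leq 2\sqrt{2}\,t$ for $t\geq 1$, and for $j\geq 1$, writing $m=t-j$ and using $\sqrt{(1+t^2)(1+j^2)}\geq 1+tj\geq j^2+jm$, the target becomes $6.26(mj^2+jm^2)\geq 2+2j^2+2jm+m^2$, which follows from $6.26\,mj^2\geq 2j^2+2$ and $6.26\,jm^2\geq 2jm+m^2$ for integers $j,m\geq 1$. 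So present the case split as the proof rather than as a fallback.

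As for the comparison with the paper: the paper also reduces to planar trigonometry but works with the angles themselves. It first shows $\angle(\vect q_t,\vect q_{t-1})>\frac{1}{1+t^2}$ via the $\arctan$ subtraction formula, deduces from the hypothesis (using $\sin x\geq \frac{2x}{\pi}$) that $\angle(\vect q_t,\vect u)$ lies within $\frac{1}{4(1+t^2)}$ of $\pm\frac{\pi}{2}$, concludes that $\vect u$ is at least $\frac{3}{4(1+t^2)}$ away from perpendicular to $\vect q_{t-1}$, and then handles general $j<t-1$ by telescoping $\angle(\vect q_t,\vect q_j)>\sum_{k=j+1}^t\frac{1}{1+k^2}$. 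Your route replaces the angle interval arithmetic and the telescoping sum with the exact identity $\sin(\arctan t-\arctan j)=\frac{t-j}{\sqrt{(1+t^2)(1+j^2)}}$ plus the cosine addition formula, which converts the geometry into a single explicit algebraic inequality in $(t,j)$; this is sharper in principle but shifts all of the work into the integer case analysis above. Both proofs ultimately rest on the same geometric fact: the lines $S_j^1$ are angularly well separated, so $\vect u$ cannot be nearly orthogonal to two of them.
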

\begin{proof}
We first prove the lemma for $j=t-1$. By (\ref{equ:spaceangle0}), we have 
\begin{equation}\label{equ:spaceangle1}
\frac{1}{t^2+1} < \angle(\vect q_t, \vect q_{t-1})\leq \frac{\pi}{4}. 
\end{equation}
Since $||\mathcal P_{S_t^1}(\vect u) ||_2 < \frac{1}{2\pi}\frac{1}{1+t^2} ||\vect u||_2$ and $|\vect q_t^\top\vect u| = ||\vect u||_2 |\cos(\angle(\vect q_t, \vect u))|$, we have $|\cos(\angle(\vect q_t, \vect u))|< \frac{1}{2\pi(1+t^2)}$. Thus $\frac{2}{\pi}|\frac{\pi}{2}- \angle(\vect q_t, \vect u)|\leq |\sin (\frac{\pi}{2}- \angle(\vect q_t, \vect u))| = |\cos(\angle(\vect q_t, \vect u))|< \frac{1}{2\pi(1+t^2)}$, and consequently,
\[
\frac{\pi}{2}- \frac{1}{4(1+t^2)}<\angle(\vect q_t, \vect u)< \frac{\pi}{2}+\frac{1}{4(1+t^2)}, or \ -\frac{\pi}{2}- \frac{1}{4(1+t^2)}<\angle(\vect q_t, \vect u)< -\frac{\pi}{2}+\frac{1}{4(1+t^2)}.
\]
Without loss of generality, we only consider the first case. Together with (\ref{equ:spaceangle1}), we have 
\[
0<\angle(\vect q_{t-1}, \vect u) < \frac{\pi}{2}- \frac{3}{4(1+t^2)}, or \  \frac{\pi}{2}+\frac{3}{4(1+t^2)}< \angle(\vect q_{t-1}, \vect u) < \pi.
\]
Thus, 
\[ 
||\mathcal P_{S_{t-1}}(\vect u) ||_2 \geq \sin\Big(\frac{3}{4(1+t^2)}\Big)||\vect u||_2. 
\]
By $\sin(x) \geq \frac{2x}{\pi}$ and $\frac{3}{1+t^2}\geq \frac{1}{1+(t-1)^2}$, we then show that
\[
||\mathcal P_{S_{t-1}}(\vect u) ||_2 \geq \frac{1}{2\pi}\frac{1}{1+(t-1)^2} ||\vect u||_2.
\]
For $j<t-1$, since $\angle(\vect q_j, \vect q_t)> \sum_{k=j+1}^t \frac{1}{1+k^2}$, we have 
\[
||\mathcal P_{S_{j}}(\vect u) ||_2 \geq \sin\Big(\sum_{k=j+1}^{t-1} \frac{1}{1+k^2}+\frac{3}{4}\frac{1}{(1+t^2)}\Big)||\vect u||_2. 
\]
Thus
\[
||\mathcal P_{S_{j}}(\vect u) ||_2 \geq \frac{2}{\pi}\Big(\sum_{k=j+1}^{t-1} \frac{1}{1+k^2}+\frac{3}{4}\frac{1}{(1+t^2)}\Big)||\vect u||_2.
\]
It is clear that $\frac{2}{\pi}\Big(\sum_{k=j+1}^{t-1} \frac{1}{1+k^2}+\frac{3}{4}\frac{1}{(1+t^2)}\Big)\geq \frac{1}{2\pi}\frac{1}{1+j^2}$, whence the lemma is proved. 
\end{proof}

We now consider the general $S_t^d$'s. 
\begin{lem}\label{lem:spaceangleestimate3}
For any $\vect q_t \in S_t^d$ and $\vect q_j \in S_j^d, j<t$ satisfying $\vect q_t^\top\vect q_j\geq 0$, we have 
\[
\sum_{k=j+1}^t \frac{1}{1+k^2}< \angle(\vect q_t, \vect q_j)\leq \frac{\pi}{2}.
\]
\end{lem}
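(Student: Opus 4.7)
The plan is to reduce the $2d$-dimensional angle computation to the one-dimensional estimate already established in \eqref{equ:spaceangle0}. First I would parameterize $\vect q_t = (\vect v^\top, t\vect v^\top)^\top$ and $\vect q_j = (\vect u^\top, j\vect u^\top)^\top$ for some $\vect u, \vect v \in \mathbb R^d$, since these are precisely the generic elements of $S_t^d$ and $S_j^d$ by definition \eqref{equ:spacestd}. A direct computation then gives
\[
\frac{\vect q_t^\top \vect q_j}{\|\vect q_t\|_2 \|\vect q_j\|_2} \;=\; \frac{(1+tj)\,\vect v^\top \vect u}{\sqrt{1+t^2}\sqrt{1+j^2}\,\|\vect v\|_2\|\vect u\|_2} \;=\; \cos(\arctan t - \arctan j)\cdot\cos\alpha,
\]
where $\alpha$ denotes the angle between $\vect u$ and $\vect v$ in $\mathbb R^d$, and the last equality uses $\cos\theta_s = 1/\sqrt{1+s^2},\ \sin\theta_s = s/\sqrt{1+s^2}$ for $\theta_s=\arctan s$ together with the addition formula for cosine.

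Next, the hypothesis $\vect q_t^\top \vect q_j \geq 0$ forces $\cos\alpha \geq 0$, so we may take $\alpha\in[0,\pi/2]$. This immediately gives $\cos(\angle(\vect q_t,\vect q_j))\geq 0$, whence $\angle(\vect q_t,\vect q_j)\leq \pi/2$, which is the upper bound. For the lower bound, since $0\leq \cos\alpha \leq 1$ and $\cos(\arctan t - \arctan j)\geq 0$ (because $0\leq \arctan t - \arctan j<\pi/2$ for $0\leq j<t$), the displayed identity yields
\[
\cos(\angle(\vect q_t,\vect q_j)) \leq \cos(\arctan t - \arctan j).
\]
Both angles lie in $[0,\pi/2]$, where $\cos$ is strictly decreasing, so this translates to $\angle(\vect q_t,\vect q_j)\geq \arctan t - \arctan j$. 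Finally, invoking the one-dimensional estimate \eqref{equ:spaceangle0}, which gives $\arctan t - \arctan j > \sum_{k=j+1}^{t} \frac{1}{1+k^2}$, completes the proof.

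The argument is essentially a one-line trigonometric identity, so no substantive obstacle is anticipated; the only point requiring care is using the sign hypothesis $\vect q_t^\top \vect q_j \geq 0$ to confine the angle to $[0,\pi/2]$, ensuring that the inequality on cosines can be inverted into an inequality on the angles themselves.
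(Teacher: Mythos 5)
Your proof is correct and follows essentially the same route as the paper: both compute $\cos(\angle(\vect q_t,\vect q_j))=\frac{(1+tj)\,\vect{\hat q}_t^\top\vect{\hat q}_j}{\sqrt{(1+t^2)(1+j^2)}}$ for unit vectors $\vect{\hat q}_t,\vect{\hat q}_j\in\mathbb R^d$ and reduce to the scalar estimate (\ref{equ:spaceangle0}). Your final step is in fact slightly cleaner, since you identify the bound as $\cos(\arctan t-\arctan j)$ and invert the cosine on $[0,\pi/2]$ directly, whereas the paper passes through the chord-length estimate $\|\vect q_t-\vect q_j\|_2\geq t-j$ and an appeal to the two-dimensional argument behind (\ref{equ:spaceangleestimate1}) and (\ref{equ:spaceangle0}).
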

\begin{proof}
Let $\vect {\hat q}_t$ and $\vect {\hat q}_j$ be two unit vectors in $\mathbb R^d$ with $\vect {\hat q}_t^\top \vect {\hat q}_j>0$. Then we have
\[
\cos\Big(\angle \Big(\begin{pmatrix}
\vect {\hat q}_t\\
t\vect {\hat q}_t
\end{pmatrix}, \begin{pmatrix}
\vect {\hat q}_j\\
j\vect {\hat q}_j
\end{pmatrix} \Big)\Big) = \begin{pmatrix}
\vect {\hat q}_t\\
t\vect {\hat q}_t
\end{pmatrix}^\top \begin{pmatrix}
\vect {\hat q}_j\\
j\vect {\hat q}_j
\end{pmatrix} / \Big(\btwonorm{\begin{pmatrix}
\vect {\hat q}_t\\
t\vect {\hat q}_t
\end{pmatrix}} \btwonorm{\begin{pmatrix}
\vect {\hat q}_j\\
j\vect {\hat q}_j
\end{pmatrix}} \Big) = \frac{(1+tj)\vect {\hat q}_t^\top \vect {\hat q}_j}{\sqrt{(1+t^2)(1+j^2)}}. 
\]
Thus, introducing $\vect q_t = \begin{pmatrix}
\vect {\hat q}_t\\
t\vect {\hat q}_t
\end{pmatrix}$ and $\vect q_j = \begin{pmatrix}
\vect {\hat q}_j\\
j\vect {\hat q}_j
\end{pmatrix}$ yields
\[
0\leq  \cos(\angle(\vect q_t, \vect q_j)) \leq  \frac{1+tj}{\sqrt{(1+t^2)(1+j^2)}}.
\]
Let $\vect q = \vect q_t - \vect q_j$. Then, it follows that 
\[
||\vect q||_2 = \sqrt{||\vect q_t||_2^2+ ||\vect q_j||_2^2 - 2||\vect q_t||_2||\vect q_j||_2\cos(\angle(\vect q_t, \vect q_j))} \geq t-j. 
\]
Hence, by considering the two-dimensional space spanned by $\vect q_t,\vect q_j$ and using the same idea as the one in proving (\ref{equ:spaceangleestimate1}) and (\ref{equ:spaceangle0}), we show that for any $\vect q_t \in S_t^d$ and $\vect q_j \in S_j^d, j<t$, we have 
\[
\sum_{k=j+1}^t \frac{1}{1+k^2} < \angle(\vect q_t, \vect q_j)\leq \frac{\pi}{2}.
\]
\end{proof}

We now extend Lemma \ref{lem:spaceangleestimate2} to the multi-dimensional case.
\begin{lem}\label{lem:spaceangleestimate4}
For $t=0,1,\cdots$ and $\vect u \in \mathbb R^{2d}$, if 
\[
||\mathcal P_{S_t^d}(\vect u) ||_2 < \frac{1}{2\pi(1+t^2)}||\vect u||_2,
\]
then for $0\leq j< t$, we have 
\[
||\mathcal P_{S_j^d}(\vect u) ||_2 \geq \frac{1}{2\pi(1+j^2)} ||\vect u||_2.
\]
\end{lem}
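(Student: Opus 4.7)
My plan is to mimic the structure of the proof of Lemma~\ref{lem:spaceangleestimate2}, but instead of tracking a single angle in a $2$-dimensional plane, I will exploit the fact that all principal angles between the subspaces $S_t^d$, $(S_t^d)^\perp$, and $S_j^d$ are uniform. This lets the $d$-dimensional estimate collapse to the same scalar geometry as the $1$-dimensional one.

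First I would fix the orthonormal bases $\vect f_i^{(t)} = \frac{1}{\sqrt{1+t^2}}(\vect e_i, t\vect e_i)$ of $S_t^d$ and $\vect g_i^{(t)} = \frac{1}{\sqrt{1+t^2}}(-t\vect e_i, \vect e_i)$ of $(S_t^d)^\perp$, where $\vect e_1,\dots,\vect e_d$ is the standard basis of $\mathbb R^d$. A direct computation gives the cross-Gram entries
\[
\vect f_i^{(t)\top}\vect f_{i'}^{(j)} = \delta_{ii'}\,\frac{1+tj}{\sqrt{(1+t^2)(1+j^2)}} = \delta_{ii'}\cos\gamma_{tj}, \qquad \vect g_i^{(t)\top}\vect f_{i'}^{(j)} = \delta_{ii'}\,\frac{j-t}{\sqrt{(1+t^2)(1+j^2)}},
\]
where $\gamma_{tj} = \arctan(t)-\arctan(j)$. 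Both cross-Gram matrices are (up to sign) scalar multiples of the identity, so all $d$ principal angles between $S_t^d$ and $S_j^d$ equal $\gamma_{tj}$, and all $d$ principal angles between $(S_t^d)^\perp$ and $S_j^d$ equal $\pi/2-\gamma_{tj}$. Hence $\btwonorm{\mathcal P_{S_j^d}(\vect v)} = \cos\gamma_{tj}\,\btwonorm{\vect v}$ for every $\vect v\in S_t^d$, and $\btwonorm{\mathcal P_{S_j^d}(\vect w)} = \sin\gamma_{tj}\,\btwonorm{\vect w}$ for every $\vect w\in (S_t^d)^\perp$.

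Next I would decompose $\vect u = \vect u_t + \vect u_t^\perp$ with $\vect u_t = \mathcal P_{S_t^d}(\vect u)\in S_t^d$ and $\vect u_t^\perp\in (S_t^d)^\perp$. The hypothesis gives $\btwonorm{\vect u_t} < \frac{1}{2\pi(1+t^2)}\btwonorm{\vect u}$, and hence $\btwonorm{\vect u_t^\perp} > \btwonorm{\vect u}\sqrt{1-\frac{1}{4\pi^2(1+t^2)^2}}$. Applying the reverse triangle inequality to $\mathcal P_{S_j^d}(\vect u) = \mathcal P_{S_j^d}(\vect u_t)+\mathcal P_{S_j^d}(\vect u_t^\perp)$ together with the two identities above will produce
\[
\btwonorm{\mathcal P_{S_j^d}(\vect u)} \;\geq\; \sin\gamma_{tj}\,\btwonorm{\vect u_t^\perp} - \cos\gamma_{tj}\,\btwonorm{\vect u_t}.
\]

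Finally, invoking Lemma~\ref{lem:spaceangleestimate3} (equivalently (\ref{equ:spaceangle0})) to bound $\gamma_{tj}\geq\sum_{k=j+1}^t\frac{1}{1+k^2}$, I would reduce the lemma to verifying the scalar inequality
\[
\sin\gamma_{tj}\sqrt{1-\tfrac{1}{4\pi^2(1+t^2)^2}} \;-\; \tfrac{\cos\gamma_{tj}}{2\pi(1+t^2)} \;\geq\; \tfrac{1}{2\pi(1+j^2)}.
\]
I expect this last numerical step to be the main obstacle. The cleanest route is to plug in the closed forms $\sin\gamma_{tj}=(t-j)/\sqrt{(1+t^2)(1+j^2)}$ and $\cos\gamma_{tj}=(1+tj)/\sqrt{(1+t^2)(1+j^2)}$, which turns the inequality into a polynomial one in $t,j$ that can be checked by discriminant analysis; alternatively, one can mirror the two-case split ($j=t-1$ handled via $\gamma_{t,t-1}>\frac{1}{1+t^2}$, and $j\leq t-2$ absorbed by the first term of the sum $\sum_k\frac{1}{1+k^2}$) used in the proof of Lemma~\ref{lem:spaceangleestimate2}, with $\sin x\geq 2x/\pi$ on $[0,\pi/2]$ as the linearization.
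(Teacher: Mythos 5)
Your argument is correct, but it is not the route the paper takes. The paper's proof normalizes the two projections themselves, setting $\vect q_t = \mathcal P_{S_t^d}(\vect u)/\btwonorm{\mathcal P_{S_t^d}(\vect u)}$ and $\vect q_j = \pm\mathcal P_{S_j^d}(\vect u)/\btwonorm{\mathcal P_{S_j^d}(\vect u)}$, observes that $\btwonorm{\mathcal P_{S_t^d}(\vect u)} = |\vect u^\top \vect q_t|$, invokes Lemma \ref{lem:spaceangleestimate3} to trap $\angle(\vect q_t,\vect q_j)$ in $\big(\sum_{k=j+1}^t\frac{1}{1+k^2},\frac{\pi}{2}\big]$, and then replays the planar angle-chasing of Lemma \ref{lem:spaceangleestimate2} in $\spn\{\vect q_t,\vect q_j\}$; this reuses the already-proved two-dimensional case at the cost of an argument that is somewhat implicit about how the $2d$-dimensional vector $\vect u$ interacts with that plane. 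You instead exploit the isoclinic structure of the subspaces: your computation that both cross-Gram matrices are scalar multiples of $I_d$ is correct, so $\btwonorm{\mathcal P_{S_j^d}(\vect v)}=\cos\gamma_{tj}\btwonorm{\vect v}$ on $S_t^d$ and $\btwonorm{\mathcal P_{S_j^d}(\vect w)}=\sin\gamma_{tj}\btwonorm{\vect w}$ on $(S_t^d)^\perp$ are exact identities, and the decomposition $\vect u=\vect u_t+\vect u_t^\perp$ plus the reverse triangle inequality reduces everything to one scalar inequality with no hidden geometry. The step you flag as the main obstacle does go through, and the cleanest closing move is neither of the two you list: writing $\epsilon_t=\frac{1}{2\pi(1+t^2)}$, $\epsilon_j=\frac{1}{2\pi(1+j^2)}$, your target inequality is $\sin\gamma_{tj}\sqrt{1-\epsilon_t^2}-\epsilon_t\cos\gamma_{tj}=\sin\big(\gamma_{tj}-\arcsin\epsilon_t\big)\geq\epsilon_j$, which (all angles lying in $[0,\pi/2]$) is equivalent to $\gamma_{tj}\geq\arcsin\epsilon_t+\arcsin\epsilon_j$; since $\arcsin x\leq\frac{\pi}{2}x$ it suffices that $\gamma_{tj}\geq\frac{1}{4(1+t^2)}+\frac{1}{4(1+j^2)}$, and this follows from $\gamma_{tj}>\sum_{k=j+1}^t\frac{1}{1+k^2}$ by the same elementary estimates as in Lemma \ref{lem:spaceangleestimate2} (for $j=t-1$ use $2t^2-6t+5>0$; for $j\leq t-2$ keep the first and last terms of the sum and use $4(1+j^2)\geq 1+(j+1)^2$). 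With that verification spelled out, your proof is complete, and it is arguably more transparent than the paper's, since it yields exact projection norms rather than angle bounds; what it gives up is the economy of reusing Lemma \ref{lem:spaceangleestimate2} wholesale.
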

\begin{proof}
For a fixed $\vect u \in \mathbb R^{2d}$, let 
\[
\vect q_t = \mathcal P_{S_t^d}(\vect u)/ ||\mathcal P_{S_t^d}(\vect u)||_2,
\]
and 
\[
\vect q_j = \mathcal P_{s_j^d}(\vect u)/ ||\mathcal P_{s_j^d}(\vect u)||_2,
\] 
if $\vect q_t^\top \vect q_j \geq 0$. Otherwise, set $\vect q_j =- \mathcal P_{s_j^d}(\vect u)/ ||\mathcal P_{s_j^d}(\vect u)||_2$. Under the condition stated in the lemma,  we have $|\vect u^\top \vect q_t|< \frac{1}{2\pi(1+t^2)}||\vect u||_2$. Also, by Lemma \ref{lem:spaceangleestimate3} we get $$\frac{\pi}{2}\geq \angle(\vect q_t, \vect q_j)> \sum_{k=j+1}^t \frac{1}{1+k^2}.$$ Considering the two-dimensional space spanned by $\vect q_t, \vect q_j$, similarly to proof of Lemma \ref{lem:spaceangleestimate2},  we obtain
\[
\vect u^\top \vect q_j\geq \frac{1}{2\pi(1+j^2)}||\vect u||_2,
\]
which implies that $||\mathcal P_{S_j^d}(\vect u) ||_2 \geq \frac{1}{2\pi(1+j^2)} ||\vect u||_2$.
\end{proof}

We next present two auxiliary lemmas that are used in the proof of main results. 
\begin{lem}\label{lem:projectlengthestimate1}
	For a vector $\vect{u}\in \mathbb R^2$, and two unit vectors $\vect{q}_1, \vect{q}_2\in \mathbb R^2$ satisfying $0 \leq |\vect q_1^\top\vect q_2| \leq \cos\theta$, we have 
	\begin{equation}\label{equ:projectlengthestimate1}
	|\vect q_1^\top\vect u|^2+|\vect q_2^\top\vect u|^2  \geq (1-\cos(\theta))||\vect u||_2^2.
	\end{equation}
\end{lem}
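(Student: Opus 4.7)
The plan is to rewrite the left-hand side as a quadratic form and reduce the inequality to a lower bound on the smallest eigenvalue of an explicit $2\times 2$ matrix. Specifically, introduce the matrix $Q = [\vect q_1\ \vect q_2] \in \mathbb{R}^{2 \times 2}$ whose columns are $\vect q_1$ and $\vect q_2$. Then
\[
|\vect q_1^\top \vect u|^2 + |\vect q_2^\top \vect u|^2 = \| Q^\top \vect u\|_2^2 = \vect u^\top (QQ^\top) \vect u,
\]
so the inequality will follow from $\lambda_{\min}(QQ^\top) \geq 1 - \cos\theta$.

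Next I would compute $Q^\top Q$ explicitly. Since $\vect q_1, \vect q_2$ are unit vectors,
\[
Q^\top Q = \begin{pmatrix} 1 & \vect q_1^\top \vect q_2 \\ \vect q_1^\top \vect q_2 & 1 \end{pmatrix},
\]
whose eigenvalues are $1 \pm \vect q_1^\top \vect q_2$; in particular its smallest eigenvalue equals $1 - |\vect q_1^\top \vect q_2| \geq 1 - \cos\theta$ by the hypothesis. Because $Q$ is square ($2\times 2$), the matrices $Q^\top Q$ and $QQ^\top$ share the same eigenvalues, so $\lambda_{\min}(QQ^\top) \geq 1 - \cos\theta$ as well, which gives the desired bound.

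There is no real obstacle; the only subtle step is passing from $Q^\top Q$ to $QQ^\top$, which is legitimate only because $Q$ is square, and this is exactly the setting here. If one prefers an entirely self-contained argument, the same conclusion can be reached by choosing coordinates with $\vect q_1 = (1,0)$ and $\vect q_2 = (c,s)$, $c^2 + s^2 = 1$ and $|c| \leq \cos\theta$, writing out the symmetric matrix $M = \bigl(\begin{smallmatrix} 1+c^2 & cs \\ cs & s^2 \end{smallmatrix}\bigr)$ of the quadratic form, and reading off its eigenvalues $1 \pm |c|$ from $\operatorname{tr} M = 2$ and $\det M = s^2 = 1 - c^2$.
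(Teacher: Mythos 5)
Your proof is correct and follows essentially the same route as the paper, which also writes the left-hand side as $\btwonorm{\begin{pmatrix}\vect q_1^\top\\ \vect q_2^\top\end{pmatrix}\vect u}^2$ and bounds it by $\sigma_{\min}^2$ of that matrix times $||\vect u||_2^2$; your eigenvalue computation of the Gram matrix simply makes explicit the step the paper summarizes as ``calculating $\sigma_{\min}$.'' No gaps.
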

\begin{proof}
We have 
\begin{equation}\label{equ:projectionlower3}
\btwonorm{(\vect{q}_1\cdot \vect u,\  \vect{q}_2\cdot \vect u)^\top}^2
= \btwonorm{\begin{pmatrix}
	\vect{q}_1^\top\\
	\vect{q}_2^\top
	\end{pmatrix}\cdot \vect u}^2 \geq \sigma_{\min}^2(\begin{pmatrix}
\vect{q}_1^\top\\
\vect{q}_2^\top
\end{pmatrix})||\vect u||_2^2 \geq (1-\cos \theta)||\vect u||_2^2,
\end{equation}
where the last inequality follows from calculating $\sigma_{\min}(\begin{pmatrix}
\vect{q}_1^\top\\
\vect{q}_2^\top
\end{pmatrix})$.
\end{proof}

\begin{lem}\label{lem:projectlengthestimate2}
For a vector $\vect u \in \mathbb R^d$, and for spaces $S_t^d, S_j^d, j<t$, we have 
\[
\btwonorm{\mathcal P_{S_t^d}(\vect u)}^2 + \btwonorm{\mathcal P_{S_j^d}(\vect u)}^2 \geq \Big(1-\cos(\theta)\Big) ||\vect u||_2^2,
\]
where $\theta = \sum_{k=j+1}^{t}\frac{1}{1+k^2}$. 
\end{lem}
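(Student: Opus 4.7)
The plan is to reduce the claim to its planar counterpart, Lemma~\ref{lem:projectlengthestimate1}, by decomposing $\vect u \in \mathbb R^{2d}$ coordinate-by-coordinate. Writing $\vect u = (\vect u_1, \vect u_2)^\top$ with $\vect u_1, \vect u_2 \in \mathbb R^d$, minimizing $||\vect u - (\vect v, t\vect v)^\top||_2$ over $\vect v \in \mathbb R^d$ gives $\mathcal P_{S_t^d}(\vect u) = \big((\vect u_1 + t\vect u_2)/(1+t^2),\ t(\vect u_1 + t\vect u_2)/(1+t^2)\big)^\top$, and in particular
\[
\btwonorm{\mathcal P_{S_t^d}(\vect u)}^2 = \frac{||\vect u_1 + t\vect u_2||_2^2}{1+t^2}, \qquad \btwonorm{\mathcal P_{S_j^d}(\vect u)}^2 = \frac{||\vect u_1 + j\vect u_2||_2^2}{1+j^2}.
\]

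Next, I would expand the two numerators entrywise. Setting $\vect w_k := ((\vect u_1)_k, (\vect u_2)_k)^\top \in \mathbb R^2$ for $k=1,\dots,d$, the left-hand side of the target inequality equals
\[
\sum_{k=1}^d \left( \frac{((\vect u_1)_k + t(\vect u_2)_k)^2}{1+t^2} + \frac{((\vect u_1)_k + j(\vect u_2)_k)^2}{1+j^2} \right) = \sum_{k=1}^d \Big( |\vect q_t^\top \vect w_k|^2 + |\vect q_j^\top \vect w_k|^2 \Big),
\]
where $\vect q_t := (1, t)^\top / \sqrt{1+t^2}$ and $\vect q_j := (1, j)^\top / \sqrt{1+j^2}$ are unit vectors in $\mathbb R^2$ lying in the one-dimensional subspaces $S_t^1$ and $S_j^1$. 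Their inner product $(1 + tj)/\sqrt{(1+t^2)(1+j^2)}$ is nonnegative, so Lemma~\ref{lem:spaceangleestimate3} (applied in dimension one) yields $\angle(\vect q_t, \vect q_j) > \theta$, hence $|\vect q_t^\top \vect q_j| \leq \cos \theta$.

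Finally, applying Lemma~\ref{lem:projectlengthestimate1} to each $\vect w_k$ gives $|\vect q_t^\top \vect w_k|^2 + |\vect q_j^\top \vect w_k|^2 \geq (1 - \cos \theta)\, ||\vect w_k||_2^2$; summing over $k$ and using $\sum_k ||\vect w_k||_2^2 = ||\vect u_1||_2^2 + ||\vect u_2||_2^2 = ||\vect u||_2^2$ closes the argument. The reduction is essentially routine once the coordinate decomposition is in place; the only substantive point beyond Lemmas~\ref{lem:spaceangleestimate3} and~\ref{lem:projectlengthestimate1} is the observation that the planar unit vectors $\vect q_t, \vect q_j$, and hence the angle bound, are independent of $k$, so the factor $(1 - \cos \theta)$ pulls cleanly out of the sum. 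I do not foresee a serious obstacle.
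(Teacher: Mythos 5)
Your proof is correct and is essentially the paper's own argument in different notation: the paper decomposes $\mathbb R^{2d}$ into the $d$ mutually orthogonal coordinate planes $V_q=\spn\{\vect e_{q,t},\vect e_{q,j}\}$ and applies Lemma~\ref{lem:projectlengthestimate1} together with the angle bound from Lemma~\ref{lem:spaceangleestimate3} in each plane, which is exactly your coordinate-by-coordinate reduction with $\vect w_k$ playing the role of $\mathcal P_{V_k}(\vect u)$. No further comment is needed.
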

\begin{proof} We first construct the basis of $S_t^d$ and $S_j^d$. Let 
\[
\vect e_1 := \begin{pmatrix}
1\\
0\\
\vdots\\
0
\end{pmatrix},\quad 
\vect e_2 := \begin{pmatrix}
0\\
1\\
\vdots\\
0
\end{pmatrix},\quad \cdots, \quad
\vect e_d := \begin{pmatrix}
0\\
0\\
\vdots\\
1
\end{pmatrix}. 
\]
It is easy to verify that the vectors
\[
\vect e_{1,t} : = \frac{1}{\sqrt{1+t^2}}\begin{pmatrix}
\vect e_1\\
t\vect e_1
\end{pmatrix}, \
\vect e_{2,t} : = \frac{1}{\sqrt{1+t^2}}\begin{pmatrix}
\vect e_2\\
t\vect e_2
\end{pmatrix},\ \cdots,\ 
\vect e_{d,t} : = \frac{1}{\sqrt{1+t^2}}\begin{pmatrix}
\vect e_d\\
t\vect e_d
\end{pmatrix},
\]
form an orthonormal basis of $S_t^d$. In the same manner, the vectors
\[
\vect e_{1,j} : = \frac{1}{\sqrt{1+j^2}}\begin{pmatrix}
\vect e_1\\
j\vect e_1
\end{pmatrix}, \
\vect e_{2,j} : = \frac{1}{\sqrt{1+j^2}}\begin{pmatrix}
\vect e_2\\
j\vect e_2
\end{pmatrix},\ \cdots,\ 
\vect e_{d,j} : = \frac{1}{\sqrt{1+j^2}}\begin{pmatrix}
\vect e_d\\
j\vect e_d
\end{pmatrix}, 
\]
form an orthonormal basis of $S_j^d$. Also, we have 
\begin{equation}\label{equ:spaceangleestimate2}
|\angle(\vect e_{p,t}, \vect e_{q,j})| = \frac{\pi}{2}, \ p\neq q, \ \text{and} \  \sum_{k=j+1}^t\frac{1}{1+k^2}<\angle(\vect e_{q,t}, \vect e_{q,j})< \frac{\pi}{2}, 
\end{equation}
where the second inequality is from Lemma \ref{lem:spaceangleestimate3}. Thus, if we denote by $V_{q}$ the two-dimensional space spanned by $\{\vect e_{q,t}, \vect e_{q,j}\}$, then the $V_q$'s are orthogonal to each other. Moreover, we have 
\begin{align*}
&\btwonorm{\mathcal P_{S_t^d}(\vect u)}^2 + \btwonorm{\mathcal P_{S_j^d}(\vect u)}^2 = \Big|\vect u^\top \vect e_{1,t}\Big|^2+\cdots+\Big|\vect u^\top \vect e_{d,t}\Big|^2 + \Big|\vect u^\top \vect e_{1,j}\Big|^2+\cdots+\Big|\vect u^\top \vect e_{d,j}\Big|^2 \\
=& \Big(\Big|\vect u^\top \vect e_{1,t}\Big|^2+ \Big|\vect u^\top \vect e_{1,j}\Big|^2\Big)+ \Big(\Big|\vect u^\top \vect e_{2,t}\Big|^2+ \Big|\vect u^\top \vect e_{2,j}\Big|^2\Big)+\cdots + \Big(\Big|\vect u^\top \vect e_{d,t}\Big|^2+ \Big|\vect u^\top \vect e_{d,j}\Big|^2\Big)\\
\geq & \Big(1-\cos\Big(\sum_{k=j+1}^{t}\frac{1}{1+k^2}\Big)\Big) \sum_{q=1}^d\btwonorm{\mathcal P_{V_q}(\vect u)}^2 \quad \Big( \text{by Lemma \ref{lem:projectlengthestimate1} and (\ref{equ:spaceangleestimate2})}\Big)\\
= & \Big(1-\cos\Big(\sum_{k=j+1}^{t}\frac{1}{1+k^2}\Big)\Big) ||\vect u||_2^2  \quad \Big( \text{since $V_q$'s are orthogonal to each other}\Big).\\
\end{align*}
\end{proof}

\subsection{Results for the static super-resolution problem}\label{section:resultsforstatichighd}
In this subsection, we review and restate the stability results in \cite{liu2021mathematicalhighd} for the static super-resolution problem in multi-dimensional spaces. These results are useful to prove stability results for the dynamic super-resolution problem.   

In the super-resolution problem of single snapshot case, the source is the discrete measure $\mu = \sum_{j=1}^n a_j \delta_{\vect y_j}, \vect y_j\in \mathbb R^k$ with $\min_{1\leq j\leq n}|a_j|=m_{\min}>0$. The measurement $\vect Y$ is the noisy Fourier data of $\mu$ in a bounded domain:
\begin{equation}\label{equ:staticmeasurement1}
\vect Y(\vect \omega) = \sum_{j=1}^na_je^{i \vect y_j^\top \vect  \omega} +\vect W( \vect  \omega), \quad \ ||\vect  \omega||_2 \leq \Omega, \vect \omega \in \mathbb R^k,
\end{equation}
with $|\vect W(\vect \omega)|<\sigma$. The inverse problem is to recover the discrete measure from the set of $\sigma$-admissible measures of $\vect Y$ defined below.

\begin{defi}{\label{def:sigmaadmissiblemeasure}}
	Given the measurement $\mathbf Y$, we say that $\mu = \sum_{j=1}^k \hat a_j \delta_{\vect {\hat y}_j}, \vect {\hat y}_j\in \mathbb R^k$ is a $\sigma$-admissible measure of $\vect Y$ if
	\[
	\babs{\sum_{j=1}^k\hat a_j e^{i \vect {\hat y}_j^\top \vect \omega} -\vect Y(\vect \omega)}< \sigma, \ ||\vect \omega||_2 \leq \Omega.
	\]
\end{defi}

From \cite{liu2021mathematicalhighd},
we have the following stability results for the recovery of source number and locations.
\begin{thm}\label{thm:statichighdupperboundnumberlimit0}
	Let the measurement $\mathbf Y$ in (\ref{equ:staticmeasurement1}) be generated by a $n$-sparse measure $\mu =\sum_{j=1}^{n}a_j\delta_{\mathbf y_j}, \vect y_j \in B_{\frac{(n-1)\pi}{2\Omega}}^{k}(\vect 0)$. Let $n\geq 2$ and assume that the following separation condition is satisfied 
	\begin{equation}\label{equ:highdupperboundnumberlimit1}
	\min_{p\neq j, 1\leq p, j\leq n}\btwonorm{\mathbf y_p- \mathbf y_j}\geq \frac{4.4\pi e \ (\pi/2)^{k-1} (n(n-1)/\pi)^{\xi(k-1)}}{\Omega }\Big(\frac{\sigma}{m_{\min}}\Big)^{\frac{1}{2n-2}},
	\end{equation}
	where $\xi(k-1)$ is defined by (\ref{equ:defineofxi}). Then there does not exist any $\sigma$-admissible measures of \,$\mathbf Y$ with less than $n$ supports.
\end{thm}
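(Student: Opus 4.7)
The plan is to argue by contradiction and reduce the $k$-dimensional problem to a one-dimensional one by projecting the Fourier variable along a carefully chosen direction.

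Suppose, for contradiction, that $\hat\mu = \sum_{j=1}^{m}\hat a_j \delta_{\vect{\hat y}_j}$ with $m\leq n-1$ is a $\sigma$-admissible measure of $\vect Y$. The triangle inequality yields
\[
\Big|\sum_{j=1}^n a_j e^{i\vect y_j^\top \vect \omega} - \sum_{j=1}^m \hat a_j e^{i \vect{\hat y}_j^\top \vect \omega}\Big| < 2\sigma, \qquad \|\vect \omega\|_2 \leq \Omega.
\]
Setting $\vect \omega = \omega \vect q$ for a unit vector $\vect q \in \mathbb R^k$ and $|\omega|\leq \Omega$ converts this into a one-dimensional Fourier-constraint problem with supports $\vect q^\top \vect y_j$ and $\vect q^\top \vect{\hat y}_j$, retaining the noise bound $2\sigma$ on $|\omega|\leq \Omega$. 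The containment $\vect y_j \in B^k_{(n-1)\pi/(2\Omega)}(\vect 0)$ keeps the projected supports in the Nyquist window required by the 1D statement.

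The heart of the argument is a projection lemma guaranteeing the existence of a unit vector $\vect q$ for which the projected original supports remain well separated:
\[
\min_{p \neq j}\big|\vect q^\top(\vect y_p - \vect y_j)\big| \;\geq\; \frac{1}{(\pi/2)^{k-1}(n(n-1)/\pi)^{\xi(k-1)}}\,\min_{p\neq j}\btwonorm{\vect y_p - \vect y_j}.
\]
I would prove this by induction on the ambient dimension. In the reduction from dimension $d$ to dimension $d-1$, each of the $n(n-1)/2$ pairwise-difference vectors defines a thin equatorial band of forbidden projection axes on $S^{d-1}$; a measure/pigeonhole argument produces an axis whose projection shrinks every pairwise separation by a factor no worse than $(\pi/2)(n(n-1)/\pi)^{1/(d-1)}$. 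Iterating across $d=k,k-1,\ldots,2$ multiplies these factors, yielding the prefactor $(\pi/2)^{k-1}$ together with the harmonic-sum exponent $\xi(k-1)=1+1/2+\cdots+1/(k-1)$.

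With $\vect q$ fixed, the projected 1D problem has $n$ well-separated original sources and at most $n-1$ admissible sources, both constrained on $|\omega|\leq\Omega$ by $2\sigma$. Applying the one-dimensional base case of the theorem (proved in \cite{liu2021mathematicaloned} via a confluent Vandermonde / Prony-type stability estimate for the matrix generated by the nodes $e^{i (\vect q^\top \vect y_j)\Omega}$) shows that if the projected minimum separation exceeds the threshold $\frac{c}{\Omega}(\sigma/m_{\min})^{1/(2n-2)}$, no admissible approximation with fewer than $n$ spikes can exist. Balancing the shrinkage from the projection lemma against the 1D constant yields exactly the bound with numerical constant $4.4\pi e$ and produces the desired contradiction.

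The main obstacle is the projection lemma: tracking the surviving separation tightly enough to recover the specific product $(\pi/2)^{k-1}(n(n-1)/\pi)^{\xi(k-1)}$ requires careful bookkeeping of spherical-cap measures at each induction step, since a loose iteration would give a worse dependence on $k$ and $n$. Secondarily, the 1D base case demands sharp lower bounds on the smallest singular value of the associated confluent Vandermonde matrix to produce both the exponent $1/(2n-2)$ and the precise leading constant.
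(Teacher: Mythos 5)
The paper does not prove this theorem itself---it is quoted from \cite{liu2021mathematicalhighd}---but your outline reproduces exactly the strategy used there and mirrored in this paper's Section 8: the projection lemma you describe is precisely Lemma \ref{lem:spaceangleestimate5} (same factor $(\pi/2)^{k-1}(n(n-1)/\pi)^{\xi(k-1)}$, same spherical-cap measure/pigeonhole induction on the dimension), and the reduction to a one-dimensional Vandermonde-type lower bound with the doubled noise level $2\sigma$ is exactly how the analogous Theorem \ref{thm:trackvelocitiesnumberlimit0} is deduced from Theorem \ref{thm:onednumberlimit1}. Your proposal is correct and takes essentially the same route as the source.
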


\begin{thm}\label{thm:statichighdupperboundsupportlimit0}
	Let $n\geq 2$. Let the measurement $\vect Y$ in (\ref{equ:staticmeasurement1}) be generated by a $n$-sparse measure $\mu=\sum_{j=1}^{n}a_j \delta_{\vect y_j}, \vect y_j \in B_{\frac{(n-1)\pi}{2\Omega}}^{k}(\vect 0)$ in the $k$-dimensional space. Assume that
	\begin{equation}\label{equ:highdsupportlimithm0equ0}
	d_{\min}:=\min_{p\neq j}\Big|\Big|\mathbf y_p-\mathbf y_j\Big|\Big|_2\geq \frac{5.88\pi e 4^{k-1}\Big((n+2)(n+1)/2\Big)^{\xi(k-1)}}{\Omega }\Big(\frac{\sigma}{m_{\min}}\Big)^{\frac{1}{2n-1}}, \end{equation}
	where $\xi(k-1)$ is defined as in (\ref{equ:defineofxi}). If $\hat \mu=\sum_{j=1}^{n}\hat a_{j}\delta_{\mathbf{\hat y}_j}$ supported on $B_{\frac{(n-1)\pi}{2\Omega}}^{k}(\vect 0)$ is a $\sigma$-admissible measure of $\vect Y$, then after reordering the $\hat y_j$'s, we have 
	\[
	\btwonorm{\vect {\hat y}_j - \vect y_j}< \frac{d_{\min}}{2}, \quad j=1, \cdots,n.
	\]
	Moreover, we have 
	\begin{equation}\label{equ:highdsupportlimithm0equ1}
	\Big|\Big|\mathbf {\hat y}_j-\mathbf y_j\Big|\Big|_2\leq \frac{C(k, n)}{\Omega}SRF^{2n-2}\frac{\sigma}{m_{\min}}, \quad 1\leq j\leq n,
	\end{equation}
	where $SRF:= \frac{\pi}{d_{\min}\Omega}$ is the super-resolution factor and 
	\[
	C(k, n)=\big(4^{k-1}((n+2)(n+1)/2)^{\xi(k-1)}\big)^{2n-1}n2^{4n-2}e^{2n}\pi^{-\frac{1}{2}}.
	\]
\end{thm}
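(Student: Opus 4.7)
The plan is to reduce the multi-dimensional problem to a family of one-dimensional super-resolution problems via projection, apply a 1D stability result along a well-chosen direction, and then iterate on the dimension $k$ to obtain the full $k$-dimensional estimate. The strategy mirrors the projection idea used elsewhere in the paper (for instance in Sections \ref{section:highdnumberalgorithm} and \ref{section:highdsupportalgorithm}), and the appearance of the factor $4^{k-1}\big((n+2)(n+1)/2\big)^{\xi(k-1)}$ in (\ref{equ:highdsupportlimithm0equ0}) strongly hints at an induction on $k$ combined with a pigeonhole-type argument over roughly $(n+2)(n+1)/2$ candidate directions.

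First I would assume, as an inductive hypothesis, the 1D version of the theorem: for sources on the line, separated by more than $\tfrac{5.88\pi e}{\Omega}(\sigma/m_{\min})^{1/(2n-1)}$, any $\sigma$-admissible $n$-sparse measure is stable in the sense (\ref{equ:highdsupportlimithm0equ1}) with $C(1,n)=n2^{4n-2}e^{2n}\pi^{-1/2}$. The base case $k=1$ is proven directly in \cite{liu2021mathematicalhighd} via a Vandermonde/polynomial approximation argument. Next, for the inductive step, given $n$ points $\vect y_1,\dots,\vect y_n\in\mathbb R^k$ whose pairwise $\ell^2$ distances exceed $d_{\min}$, I would argue by pigeonhole (as in Lemma \ref{lem:highdmusicdirection}) that among a collection of $\tfrac{n(n+1)}{2}$ suitably spread unit directions in $\mathbb R^k$, there exists a direction $\vect v^*$ whose orthogonal hyperplane $H$ separates the $\mathcal P_H(\vect y_j)$'s by at least $c\, d_{\min}$ for an explicit constant $c$ absorbed into the factor $\big((n+2)(n+1)/2\big)$. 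The Fourier measurement, evaluated with frequencies lying in $H$, becomes a $(k-1)$-dimensional super-resolution problem for the projections $\mathcal P_H(\vect y_j)$; the inductive hypothesis applied to this problem yields stability of the projected locations with constant $C(k-1,n)$.

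With the projections onto $H$ stably recovered, I would then shift the measurement along $\vect v^*$ using the Fourier structure $e^{i\vect{\hat y}_j^\top\vect \omega}=e^{i(\mathcal P_H\vect{\hat y}_j)^\top\vect\omega}e^{i\,v^{*\top}\vect{\hat y}_j\,(\vect v^{*\top}\vect\omega)}$, effectively parameterizing the remaining unknowns by the 1D coordinates $\{\vect v^{*\top}\vect{\hat y}_j\}$ grouped according to the now-identified hyperplane projections. Using the 1D stability base case on each cluster (or equivalently, using the reconstructed amplitude-plus-phase data on the axis $\vect v^*$), I would bound $|\vect v^{*\top}(\vect{\hat y}_j-\vect y_j)|$ and combine with the hyperplane estimate to obtain the full $k$-dimensional bound (\ref{equ:highdsupportlimithm0equ1}). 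The reordering consistency follows because the separation $d_{\min}/2$ in the conclusion is strictly smaller than $d_{\min}$, so the natural nearest-neighbor matching is unambiguous in each projection and the matchings propagate.

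The hardest part, I expect, will be tracking the multiplicative loss from each projection step: every reduction $k\mapsto k-1$ costs a factor roughly $4$ in the separation (because the angle between the chosen direction and the worst pair contributes both a projection loss and a cosine loss, exactly as in Lemma \ref{lem:projectlengthestimate1}) and a factor $(n+2)(n+1)/2$ via the pigeonhole count, which compounds into the $4^{k-1}\big((n+2)(n+1)/2\big)^{\xi(k-1)}$ prefactor. A second delicate point is ensuring that the $(k-1)$-dimensional sub-problem inherits the support condition $\mathcal P_H(\vect y_j)\in B^{k-1}_{(n-1)\pi/(2\Omega)}(\vect 0)$ needed to apply the inductive hypothesis; this follows from $\|\mathcal P_H(\vect y_j)\|_2\le\|\vect y_j\|_2$, so the hypothesis survives projection. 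Once the constants are tracked carefully through the induction, (\ref{equ:highdsupportlimithm0equ1}) and the explicit form of $C(k,n)$ follow.
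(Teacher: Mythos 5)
A preliminary remark: the paper does not prove this theorem itself --- it is quoted from \cite{liu2021mathematicalhighd} --- but the same machinery is implemented in full in Section \ref{sec:proofresultsofvelocity} for the velocity analogue (Theorem \ref{thm:trackvelocitiessupplimit0}), so the intended argument is visible there. Your overall strategy (projection, pigeonhole over roughly $(n+1)(n+2)/2$ directions, induction on the dimension, with the prefactor $4^{k-1}((n+2)(n+1)/2)^{\xi(k-1)}$ accumulating one factor of $4$ and one pigeonhole count per dimension) is the right one, and you correctly identify that the support condition survives projection because $\|\mathcal P_H(\vect y_j)\|_2\le\|\vect y_j\|_2$.

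However, the core mechanism you propose for the inductive step has a genuine gap. You reduce to a single well-chosen hyperplane $H=\vect v^{*\perp}$, apply the inductive hypothesis there, and then try to recover the remaining coordinate by a one-dimensional problem ``along $\vect v^*$.'' This fails: the direction $\vect v^*$ is chosen precisely so that the \emph{hyperplane} projections of all differences $\vect y_p-\vect y_j$ are large, which gives no lower bound whatsoever on $|\vect v^{*\top}(\vect y_p-\vect y_j)|$ --- two sources may have identical $\vect v^*$-components, so the restriction of the measurement to frequencies in $\spn(\vect v^*)$ is a 1D problem with colliding nodes and no stability. Your fallback of ``one cluster per recovered hyperplane projection'' does not rescue this, since restricting $\vect\omega$ to $\spn(\vect v^*)$ does not decouple the contributions of the other $n-1$ sources. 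A second facet of the same gap is matching consistency: even if you could estimate both projections, the permutation produced in $H$ and the permutation produced along $\vect v^*$ need not agree, and ``nearest-neighbor matching propagates'' is exactly what must be proved. The argument actually used (see Lemmas \ref{lem:spaceangleestimate7} and \ref{lem:spaceangleestimate6} and Step 2 of the proof of Theorem \ref{thm:trackvelocitiessupplimit0}) constructs not one but $n+1$ good $(k-1)$-dimensional subspaces at pairwise angle at least $\frac{\pi}{4}\big(\tfrac{2}{(n+1)(n+2)}\big)^{1/(k-1)}$, applies the lower-dimensional result in each, and then invokes the pigeonhole principle ($n+1$ permutations, only $n$ candidate points) to find \emph{two} subspaces in which the same $\vect{\hat y}_{p'}$ is matched to a given $\vect y_p$; the two-subspace estimate $\|\mathcal P_{\vect v_1^{\perp}}(\vect u)\|_2^2+\|\mathcal P_{\vect v_2^{\perp}}(\vect u)\|_2^2\ge(1-\cos\theta)\|\vect u\|_2^2$ (Lemma \ref{lem:highdsupportproject2}) then lifts the bound up one dimension without ever estimating a component along a fixed orthogonal line. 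Your count of $\tfrac{n(n+1)}{2}$ directions yielding a single surviving good direction is the version needed for the \emph{number}-recovery theorem (cf.\ Lemma \ref{lem:spaceangleestimate5}); for location recovery the surplus of $n+1$ good subspaces per level is essential and your proposal does not produce it.
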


\begin{remark}
Note that here we have $(n+2)(n+1)/2$ as a factor in the separation condition (\ref{equ:highdsupportlimithm0equ0}), which is different from the one in \cite{liu2021mathematicalhighd}. This is due to a minor error in  \cite{liu2021mathematicalhighd} where $(n+2)(n-1)/2$ is used instead of $(n+2)(n+1)/2$.
\end{remark}

\subsection{Proof of Theorem \ref{thm:trackhighdupperboundnumberlimit0}}
\begin{proof}
Note that there are at most $\frac{n(n-1)}{2}$ different vectors of the form $\vect u_{pq}= \vect \alpha_p -\vect \alpha_q, p<q$ with $\vect \alpha_{p} = \begin{pmatrix}
\vect y_p\\
\tau \vect v_p
\end{pmatrix}$. Because we have 
\begin{equation}\label{equ:trackproofnumberthm1equ0}
\begin{aligned}
d_{\min}:=& \min_{p\neq q} \btwonorm{\vect u_{pq}}=\min_{p\neq q} \btwonorm{\vect \alpha_p -\vect \alpha_q}\\
\geq & \frac{8.8 e \pi^2  \sqrt{(\frac{(n-1)n}{2})^2+1} (\pi/2)^{d-1} (n(n-1)/\pi)^{\xi(d-1)}}{\Omega }\Big(\frac{\sigma}{m_{\min}}\Big)^{\frac{1}{2n-2}}, 
\end{aligned}
\end{equation}
we also have, 
\begin{equation}\label{equ:trackproofnumberthm1equ1}
\min_{p\neq q} \btwonorm{\vect u_{p,q}} \geq \frac{8.8 e \pi^2 \sqrt{t^2+1}(\pi/2)^{d-1}(n(n-1)/\pi)^{\xi(d-1)}}{\Omega }\Big(\frac{\sigma}{m_{\min}}\Big)^{\frac{1}{2n-2}}, \ \forall t\leq \frac{(n-1)n}{2}. 
\end{equation}
Let $\Delta = \frac{4.4e\pi(\pi/2)^{d-1}(n(n-1)/\pi)^{\xi(d-1)}}{\Omega}\Big(\frac{\sigma}{m_{\min}}\Big)^{\frac{1}{2n-2}}$. We define that $N(\vect u_{pq}, \Delta) : = \Big\{S_{t}^d\Big| S_{t}^d \in (\ref{equ:spacestd}), t=0,\cdots, \frac{(n-1)n}{2}, ||\mathcal P_{S_t^d}(\vect u_{pq})||_2 < \frac{1}{\sqrt{t^2+1}} \Delta\Big\}$. If $S_t^d\in N(\vect u_{pq}, \Delta)$, by (\ref{equ:trackproofnumberthm1equ1}) we have $||\mathcal P_{S_t^d}(\vect u_{pq})||_2< \frac{1}{2\pi(1+t^2)}||\vect u_{pq}||_2$. By Lemma \ref{lem:spaceangleestimate4} and (\ref{equ:trackproofnumberthm1equ1}), we have, for $j<t$, 
\[
\btwonorm{\mathcal P_{S_j^d}(\vect u_{pq})}\geq \frac{1}{2\pi(1+j^2)}||\vect u_{pq}||_2 \geq \frac{1}{\sqrt{j^2+1}} \Delta.
\]
Thus if $S_t^d \in N(\vect u_{pq}, \Delta)$, for $j<t$, $S_j^d\not\in N(\vect u_{pq}, \Delta)$. Now we start by considering $t=\frac{(n-1)n}{2}$. If for some $\vect u_{p_0q_0}$, $S_{\frac{(n-1)n}{2}}^d$ is in $N(\vect u_{p_0q_0}, \Delta)$, then other $S_j^d, j<\frac{(n-1)n}{2}$ are all not in $N(\vect u_{p_0q_0}, \Delta)$. If there is no such $\vect u_{p_0q_0}$ so that $S_{\frac{(n-1)n}{2}}^d \in N(\vect u_{p_0q_0}, \Delta)$, then $\btwonorm{\mathcal P_{S_{\frac{(n-1)n}{2}}^d}(\vect u_{pq})} \geq  \frac{1}{\sqrt{(\frac{n(n-1)}{2})^2+1}} \Delta$ for all $\vect u_{pq}$'s. We can continue the process for $t= \frac{(n-1)n}{2}-1, \cdots, t=0$. Since we have $1+\frac{(n-1)n}{2}$ different $S_t^d$'s and at most $\frac{n(n-1)}{2}$ different $\vect u_{pq}$'s, and 
\[
1+\frac{n(n-1)}{2}-\frac{n(n-1)}{2} = 1,
\]
by the above process, we can find at least one $S_t^d$ so that 
\begin{equation}\label{equ:trackproofnumberthm1equ2}
\btwonorm{\mathcal P_{S_t^d}(\vect u_{pq})} \geq \frac{1}{\sqrt{t^2+1}} \Delta
\end{equation}
for all $\vect u_{pq}$'s. We consider this specific $t$ in the following argument. In the space $S_t^d$, we have the noisy measurement that 
\[
\sum_{j=1}^{n} a_je^{i \mathcal P_{S_t^d}(\vect \alpha_j)^\top \vect  \omega_t} + \vect W(\vect \omega_t), 
\]
where $||\vect \omega_t||_2\leq \sqrt{1+t^2}\Omega$. On the other hand, since $\vect \alpha_j\in B_{\frac{\pi}{(n+1)\Omega}}^{2d}(\vect 0)$, we have 
\[
||\mathcal P_{S_t^d}(\vect \alpha_j)||_2 \leq ||\vect \alpha_j||_2 \leq \frac{\pi}{(n+1)\Omega} = \frac{(n-1)\pi}{(n-1)(n+1)\Omega}\leq \frac{(n-1)\pi}{2\sqrt{1+t^2}\Omega},
\] 
where the last inequality is because $t\leq \frac{n(n-1)}{2}$ and $\frac{1}{2\sqrt{1+t^2}}\geq \frac{1}{2\sqrt{1+(\frac{n(n-1)}{2})^2}}\geq \frac{1}{(n-1)(n+1)}$. 
Also, by (\ref{equ:trackproofnumberthm1equ2}) we have 
\[
\min_{p\neq q} \btwonorm{\mathcal P_{S_t^d}(\vect \alpha_p) - \mathcal P_{S_t^d}(\vect \alpha_q) }\geq \frac{4.4e\pi(\pi/2)^{d-1} (n(n-1)/\pi)^{\xi(d-1)}}{\sqrt{t^2+1}\Omega} (\frac{\sigma}{m_{\min}})^{\frac{1}{2n-2}}. 
\]
Thus now we can apply Theorem \ref{thm:statichighdupperboundnumberlimit0}. 
By Theorem \ref{thm:statichighdupperboundnumberlimit0}, there is no $(n-1)$ vectors $\vect {\tilde \alpha}_j\in S_t^d$ and $(n-1)$ $\tilde a_j$'s so that 
\[
\babs{\sum_{j=1}^{n-1}\tilde a_j e^{i\vect {\tilde \alpha}_j^\top \vect \omega_t}-\sum_{j=1}^{n} a_j e^{i\mathcal P_{S_t^d}(\vect \alpha_j)^\top \vect \omega_t} + \vect W(\vect \omega_t)}<\sigma,\quad ||\vect \omega_t ||_2\leq \sqrt{t^2+1}\Omega.  
\]
Based on (\ref{equ:measurementsprojection}) for measurements $\vect Y_t$'s, the above argument proves that there is no such $(n-1)$-sparse $\sigma$-admissible parameter set $\{(\hat a_j, \vect {\hat y}_j, \vect {\hat v}_j)\}_{j=1}^{n-1}$ of $\vect Y_t$'s.	
\end{proof}

\subsection{Proof of Theorem \ref{thm:trackhighdupperboundsupportlimit0}}
\begin{proof}
Note that there are at most $\frac{n(n-1)}{2}$ different vectors of the form $\vect u_{pq}= \vect \alpha_p -\vect \alpha_q, p<q$ with $\vect \alpha_{p} = \begin{pmatrix}\vect y_p\\
\tau \vect v_p\end{pmatrix}$. Since
\begin{equation}\label{equ:trackprojectlem1equ0}
\begin{aligned}
d_{\min}:=& \min_{p\neq q} \btwonorm{\vect u_{pq}}=\min_{p\neq q} \btwonorm{\vect \alpha_p -\vect \alpha_q}\\ 
\geq & \frac{11.76e\pi^2\sqrt{(\frac{(n+1)n}{2})^2+1}4^{d-1}((n+2)(n+1)/2)^{\xi(d-1)}}{\Omega} (\frac{\sigma}{m_{\min}})^{\frac{1}{2n-1}}, 
\end{aligned}
\end{equation}
we also obtain that for all $t\leq \frac{(n+1)n}{2}$, 
\begin{equation}\label{equ:trackprojectlem1equ1}
\min_{p\neq q} \btwonorm{\vect u_{pq}}=\min_{p\neq q} \btwonorm{\vect \alpha_p -\vect \alpha_j} \geq \frac{11.76e\pi^2\sqrt{t^2+1}4^{d-1}((n+2)(n+1)/2)^{\xi(d-1)}}{\Omega} (\frac{\sigma}{m_{\min}})^{\frac{1}{2n-1}}. 
\end{equation}
Let $\Delta = \frac{5.88e\pi4^{d-1}((n+2)(n+1)/2)^{\xi(d-1)}}{\Omega}\Big(\frac{\sigma}{m_{\min}}\Big)^{\frac{1}{2n-1}}$. We define $N(\vect u_{pq}, \Delta)$ as $\Big\{S_{t}^d| S_t^d\in (\ref{equ:spacestd}),  t=0,\cdots, \frac{(n+1)n}{2},$\\
 $||\mathcal P_{S_t^d}(\vect u_{pq})||_2 < \frac{1}{\sqrt{t^2+1}} \Delta\Big\}$. If $S_t^d\in N(\vect u_{pq}, \Delta)$, by (\ref{equ:trackprojectlem1equ1}) we have $||\mathcal P_{S_t^d}(\vect u_{pq})||_2< \frac{1}{2\pi(1+t^2)}||\vect u_{pq}||_2$. By Lemma \ref{lem:spaceangleestimate4} and (\ref{equ:trackprojectlem1equ1}), we have, for $j<t$, 
\[
\btwonorm{\mathcal P_{S_j^d}(\vect u_{pq})}\geq \frac{1}{2\pi(1+j^2)}||\vect u_{pq}||_2 \geq \frac{1}{\sqrt{j^2+1}} \Delta.
\]
Thus if $S_t^d \in N(\vect u_{pq}, \Delta)$, for $j<t$, $S_j^d\not\in N(\vect u_{pq}, \Delta)$. Again, we start by considering $t=\frac{(n+1)n}{2}$. If for some $\vect u_{pq}$, $S_{\frac{(n+1)n}{2}}^d$ is in $N(\vect u_{pq}, \Delta)$, then other $S_j^d, j<\frac{n(n+1)}{2}$ are all not in $N(\vect u_{pq}, \Delta)$. If there is no such $\vect u_{pq}$ so that $S_{\frac{(n+1)n}{2}}^d \in N(\vect u_{pq}, \Delta)$, then $\btwonorm{\mathcal P_{S_{\frac{(n+1)n}{2}}^d}(\vect u_{pq})} \geq  \frac{1}{\sqrt{(\frac{n(n+1)}{2})^2+1}} \Delta$ for all $\vect u_{pq}$'s. We can continue the discussion for $t= \frac{n(n+1)}{2}-1, \cdots, t=0$. Since we have $1+\frac{n(n+1)}{2}$ different $S_t^d$'s and at most $\frac{n(n-1)}{2}$ different $\vect u_{pq}$'s, and 
\[
1+\frac{n(n+1)}{2}-\frac{n(n-1)}{2} = n+1,
\]
by the above process, we can find at least $n+1$ different $S_t^d$'s so that 
\begin{equation}\label{equ:trackonedthm1equ1}
\btwonorm{\mathcal P_{S_t^d}(\vect u_{pq})} \geq \frac{1}{\sqrt{t^2+1}} \Delta
\end{equation}
for all $\vect u_{pq}$'s. We only consider these $t$'s in the sequel. 

In the space $S_t^d$, we have the noisy measurement given by
\[
\sum_{j=1}^{n} e^{i\mathcal P_{S_t^d}(\vect \alpha_j)^\top \vect \omega_t} + \vect W(\vect \omega_t), 
\]
where $\vect \omega_t = \begin{pmatrix}\vect \omega\\
t\vect \omega
\end{pmatrix}$,$||\vect \omega_t||_2\leq \sqrt{1+t^2}\Omega$. On the other hand, since $\vect \alpha_j\in B_{\frac{(n-1)\pi}{n(n+2)\Omega}}^{2d}(\vect 0)$, we have 
\[
||\mathcal P_{S_t^d}(\vect \alpha_j)||_2 \leq ||\vect \alpha_j||_2\leq  \frac{(n-1)\pi}{n(n+2)\Omega}\leq \frac{(n-1)\pi}{2\sqrt{1+t^2}\Omega},
\] 
where the last inequality is because $t\leq \frac{n(n+1)}{2}$ and $\frac{1}{2\sqrt{1+t^2}}\geq \frac{1}{2\sqrt{1+(\frac{n(n+1)}{2})^2}}\geq \frac{1}{n(n+2)}$.  Thus now we can apply Theorem \ref{thm:statichighdupperboundsupportlimit0}. By Theorem \ref{thm:statichighdupperboundsupportlimit0}, when (\ref{equ:trackonedthm1equ1}) holds, i.e., 
\[
\min_{p\neq q} \btwonorm{\mathcal P_{S_t^d}(\vect \alpha_p)-\mathcal P_{S_t^d} (\vect \alpha_q)}\geq \frac{5.88e\pi4^{d-1}((n+2)(n+1)/2)^{\xi(d-1)}}{\sqrt{t^2+1}\Omega} \Big(\frac{\sigma}{m_{\min}}\Big)^{\frac{1}{2n-1}}=: d_{\min, t}, 
\]
we can conclude that for each $t$, we have a permutation $\tau_t$ of $\{1,\cdots, n\}$ so that
\begin{equation}\label{equ:proofhighdsupport1equ2}
\begin{aligned}
&\btwonorm{\mathcal P_{S_t^d}(\vect {\hat{\alpha}}_{\tau_t(j)})- \mathcal P_{S_t^d}(\vect{\alpha}_j)}\leq \frac{C(d,n)}{\sqrt{t^2+1}\Omega}\Big(\frac{\pi}{d_{\min,t}\sqrt{t^2+1}\Omega}\Big)^{2n-2}\frac{\sigma}{m_{\min}}\\
= & \frac{C(d,n)}{\sqrt{t^2+1}\Omega}\Big(\frac{1}{5.88e4^{d-1}((n+2)(n+1)/2)^{\xi(d-1)}}\Big)^{2n-2}\Big(\frac{\sigma}{m_{\min}}\Big)^{\frac{1}{2n-1}}, \quad 1\leq j\leq n,
\end{aligned} 
\end{equation}
where $C(d, n) = \big(4^{d-1}((n+2)(n+1)/2)^{\xi(d-1)}\big)^{(2n-1)}n2^{4n-2}e^{2n}\pi^{-\frac{1}{2}}$.
Note that, for fixed $j$ in (\ref{equ:proofhighdsupport1equ2}), we have $n+1$ $\tau_t(j)$'s (since (\ref{equ:trackonedthm1equ1}) holds for at least $n+1$ $S_t^d$'s), while $\vect{\hat \alpha}_p$'s take at most $n$ values. Therefore, by the pigeonhole principle, for each fixed $\vect \alpha_j$, we can find two different $t$'s, say, $t_1$ and $t_2$, such that $\vect{\hat \alpha}_{\tau_{t_1}(j)} =\vect{\hat \alpha}_{\tau_{t_2}(j)} = \vect{\hat \alpha}_{p_j}$ for some $p_j$. Suppose $t_1>t_2$, by (\ref{equ:proofhighdsupport1equ2}) for $t_1,t_2$ and Lemma \ref{lem:projectlengthestimate2} we have 
\[
\btwonorm{\vect {\hat \alpha}_{p_j} - \vect \alpha_j}\leq \frac{\sqrt{\frac{1}{t_1^2+1}+\frac{1}{t_2^2+1}}}{\sqrt{1-\cos\Big(\sum_{k=t_2+1}^{t_1}\frac{1}{k^2+1}\Big)}} \frac{C(d,n)}{\Omega}\Big(\frac{1}{5.88e4^{d-1}((n+2)(n+1)/2)^{\xi(d-1)}}\Big)^{2n-2}\Big(\frac{\sigma}{m_{\min}}\Big)^{\frac{1}{2n-1}}.
\]
Using the inequality 
$1-\cos x\geq \frac{1}{\pi}x^2$, we further obtain
\[
\btwonorm{\vect {\hat \alpha}_{p_j} - \vect \alpha_j}\leq \frac{\sqrt{\pi}\sqrt{\frac{1}{t_1^2+1}+\frac{1}{t_2^2+1}}}{\sum_{k=t_2+1}^{t_1}\frac{1}{k^2+1}} \frac{C(d,n)}{\Omega}\Big(\frac{1}{5.88e4^{d-1}((n+2)(n+1)/2)^{\xi(d-1)}}\Big)^{2n-2}\Big(\frac{\sigma}{m_{\min}}\Big)^{\frac{1}{2n-1}}.
\]
Since 
\begin{align*}
	\frac{\sqrt{\frac{1}{t_1^2+1}+\frac{1}{t_2^2+1}}}{\sum_{k=t_2+1}^{t_1}\frac{1}{k^2+1}}< \frac{\sqrt{t_1^2+1}\sqrt{1+\frac{t_1^2+1}{t_2^2+1}}}{(t_1-t_2)}\leq \sqrt{6(t_1^2+1)},
\end{align*}
and $t_1\leq \frac{n(n+1)}{2}$,  we have 
\begin{equation}\label{equ:prooftracklocavelocityequ3}
\btwonorm{\vect {\hat \alpha}_{p_j} - \vect \alpha_j}\leq  \frac{\sqrt{6\pi}\sqrt{(\frac{n(n+1)}{2})^2+1}C(d,n)}{\Omega}\Big(\frac{1}{5.88e4^{d-1}((n+2)(n+1)/2)^{\xi(d-1)}}\Big)^{2n-2}\Big(\frac{\sigma}{m_{\min}}\Big)^{\frac{1}{2n-1}}
\end{equation}
for $1\leq j\leq n$. We next claim that
\begin{equation}\label{equ:trackonedthm1equ2}
\btwonorm{\vect {\hat \alpha}_{p_j} - \vect \alpha_j}<\frac{d_{\min}}{2},
\end{equation}
where $d_{\min}$ is defined in (\ref{equ:trackprojectlem1equ0}). Indeed, by $C(d, n) = \big(4^{d-1}((n+2)(n+1)/2)^{\xi(d-1)}\big)^{(2n-1)} n2^{4n-2}e^{2n}\pi^{-\frac{1}{2}}$ and a direct calculation, we can verify that
\begin{align*}
&\sqrt{6\pi}\sqrt{(\frac{n(n+1)}{2})^2+1}C(d,n)\Big(\frac{1}{5.88e4^{d-1}((n+2)(n+1)/2)^{\xi(d-1)}}\Big)^{2n-2} \\
< &\frac{1}{2}11.76e\pi^2\sqrt{(\frac{n(n+1)}{2})^2+1}4^{d-1}\Big((n+2)(n+1)/2\Big)^{\xi(d-1)},
\end{align*}
whence (\ref{equ:trackonedthm1equ2}) follows.

So far, we have proved that for each $\vect \alpha_j$, there exists a point 
$\vect {\hat \alpha}_{p_j}$ satisfying $||\vect {\hat \alpha}_{p_j} - \vect \alpha_j||_2<\frac{d_{\min}}{2}$. Thus for each $\vect \alpha_j$ there exists only one such $\vect {\hat \alpha}_{p_j} \in \{\vect {\hat \alpha}_{1},\cdots, \vect {\hat \alpha}_{n}\}$ . We can reorder the index so that 
\[
\btwonorm{\vect {\hat \alpha}_{j} - \vect \alpha_j}< \frac{d_{\min}}{2},
\]
and
\[
\btwonorm{\vect {\hat \alpha}_{j} - \vect \alpha_j}\leq  \frac{\sqrt{6\pi}(2\pi)^{2n-2}((\frac{n(n+1)}{2})^2+1)^{\frac{2n-1}{2}}C(d, n)}{\Omega}\Big(\frac{\pi}{d_{\min}\Omega}\Big)^{2n-2}\frac{\sigma}{m_{\min}}, \quad 1\leq j\leq n,
\]
which follows from (\ref{equ:prooftracklocavelocityequ3}) and (\ref{equ:trackprojectlem1equ0}). 
\end{proof}

\section{Proofs of Theorems \ref{thm:trackvelocitiesnumberlimit0} and \ref{thm:trackvelocitiessupplimit0}}\label{sec:proofresultsofvelocity}
\subsection{Lemmas for projection}
In this subsection, we present some lemmas for projection in multi-dimensional spaces, which will help us to prove Theorems \ref{thm:trackvelocitiesnumberlimit0} and \ref{thm:trackvelocitiessupplimit0}. 

For a $\vect v\in \mathbb R^k$, we denote by $\vect v^\perp$ the $(k-1)$-dimensional orthogonal complement space of the one-dimensional space spanned by $\vect v$. We consider the unit sphere in $\mathbb R^k$ and the following spherical coordinates: 
\begin{equation}\label{equ:sphericalcoordinates1}
	\begin{aligned}
		x_{1}(\Phi)&=\cos(\phi_{1}),\\
		x_{2}(\Phi)&=\sin(\phi_{1})\cos(\phi_{2}),\\
		&\vdots \\
		x_{k-1}(\Phi)&=\sin(\phi_{1})\cdots \sin(\phi_{k-2})\cos(\phi_{k-1}),\\
		x_{k}(\Phi)&=\sin(\phi_{1})\cdots \sin(\phi_{k-2})\sin(\phi_{k-1}),
	\end{aligned}
\end{equation}
where $\Phi=(\phi_1, \cdots , \phi_{k-1}) \in [0,\pi]^{k-2}\times [0, 2\pi)$. For $0<\theta<\frac{\pi}{2}$ and $N=\lfloor \frac{\pi}{2\theta} \rfloor$, we let 
\begin{align}\label{equ:vectorlist1}
	\vect v_{\tau_1\cdots \tau_{k-1}}= \big(x_1(\Phi_{\tau_1\cdots\tau_{k-1}}), \cdots , x_k(\Phi_{\tau_1\cdots\tau_{k-1}})\big)^\top, \quad 1\leq \tau_j \leq N,
\end{align}
where $\Phi_{\tau_1\cdots\tau_{k-1}}= (\tau_1\theta, \cdots, \tau_{k-1}\theta)$. It is obvious that $\Phi_{\tau_1\cdots\tau_{k-1}} \in [0, \frac{\pi}{2}]^{k-1}$ and $\vect v_{\tau_1\cdots \tau_{k-1}} \neq \vect v_{p_1\cdots p_{k-1}}$ if $(\tau_1, \cdots, \tau_{k-1})\neq (p_1, \cdots, p_{k-1})$. There are $N^{k-1}$ different unit vectors of the form (\ref{equ:vectorlist1}).

\begin{lem}\label{lem:projectvectorangle1}
For two different vectors $\vect v_{\tau_1\cdots \tau_{k-1}} \neq \vect v_{p_1\cdots p_{k-1}}$ in (\ref{equ:vectorlist1}), we have
\begin{equation}\label{equ:projectlemangle1}
0 \leq \vect v_{\tau_1\cdots \tau_{k-1}} \cdot \vect v_{p_1\cdots p_{k-1}} \leq \cos\theta.
\end{equation}
\end{lem}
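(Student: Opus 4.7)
The plan is to handle the two inequalities separately, with the lower bound essentially by inspection and the upper bound by induction on the dimension $k$. For the lower bound $\vect v_{\tau_1\cdots \tau_{k-1}} \cdot \vect v_{p_1\cdots p_{k-1}} \geq 0$, I would observe that every angle $\phi_j = \tau_j\theta$ and $\psi_j = p_j\theta$ lies in $[\theta, N\theta] \subseteq (0, \pi/2]$, since by construction $N\theta \leq \pi/2$. Consequently every $\sin\phi_j$, $\cos\phi_j$ (and the analogues for $\Psi$) is non-negative, so each coordinate $x_l(\Phi)$ and $x_l(\Psi)$ in (\ref{equ:sphericalcoordinates1}) is a product of non-negative factors and both vectors lie in the non-negative orthant of $\mathbb R^k$. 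The inner product is then a sum of non-negative products and is therefore non-negative.

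For the upper bound I would exploit the standard recursive identity for spherical coordinates,
\[
\vect v(\Phi) \cdot \vect v(\Psi) = \cos\phi_1\cos\psi_1 + \sin\phi_1\sin\psi_1\,\bigl(\bar{\vect v}(\phi_2,\ldots,\phi_{k-1}) \cdot \bar{\vect v}(\psi_2,\ldots,\psi_{k-1})\bigr),
\]
where $\bar{\vect v}\in S^{k-2}$ is the unit vector obtained by the same parametrization in one fewer dimension, and induct on $k$. The base case $k=2$ is immediate: the inner product equals $\cos(\phi_1 - \psi_1)$, and since $\phi_1 \neq \psi_1$ forces $|\phi_1-\psi_1| \in [\theta, (N-1)\theta] \subseteq [\theta, \pi/2]$, monotonicity of $\cos$ on $[0,\pi/2]$ yields the bound $\cos\theta$. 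For the inductive step, the easy sub-case is $\phi_1\neq\psi_1$: using $\bar{\vect v}\cdot\bar{\vect v}\leq 1$ (Cauchy–Schwarz) and $\sin\phi_1\sin\psi_1\geq 0$ collapses the recursion to $\vect v(\Phi)\cdot\vect v(\Psi) \leq \cos(\phi_1 - \psi_1) \leq \cos\theta$ exactly as in the base case.

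The main obstacle is the remaining sub-case $\phi_1 = \psi_1$, where one must rely on the inductive hypothesis applied to $\bar{\vect v}\cdot\bar{\vect v}$ to transport the distinctness of the tuples into a non-trivial bound. A direct substitution of $\bar{\vect v}\cdot\bar{\vect v}\leq \cos\theta$ gives only $\vect v(\Phi)\cdot\vect v(\Psi) \leq 1 - \sin^2\phi_1(1 - \cos\theta)$, which is not automatically $\leq \cos\theta$ when $\sin\phi_1$ is small. To bridge this gap I plan to strengthen the induction by tracking the first index $j$ where the two angle tuples disagree and iterating the recursion so that each equal-coordinate layer $l<j$ absorbs a factor of $\sin^2\phi_l$ and only the layer $j$ invokes monotonicity of cosine. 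This yields the telescoped bound
\[
\vect v(\Phi)\cdot \vect v(\Psi) \leq 1 - \Bigl(\prod_{l<j}\sin^2\phi_l\Bigr)\bigl(1 - \cos(\phi_j - \psi_j)\bigr),
\]
which I would then combine with $\sin\phi_l \geq \sin\theta$ (ensured by $\phi_l \geq \theta$) and $|\phi_j - \psi_j|\geq \theta$ with $|\phi_j-\psi_j|\leq \pi/2$ to finish the argument. The delicate quantitative step is verifying that the cumulative sine-factor degradation is compatible with the stated bound $\cos\theta$; if the cumulative factor is too lossy, the conclusion may need to be restated with the sharper bound above, but the structural approach — induction via the spherical recursion with case analysis on the first disagreeing index — remains the same.
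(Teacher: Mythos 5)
Your lower bound and your treatment of the sub-case $\phi_1\neq\psi_1$ are both correct, and you have put your finger on exactly the right obstacle: the recursion $\vect v(\Phi)\cdot\vect v(\Psi)=\cos\phi_1\cos\psi_1+\sin\phi_1\sin\psi_1\,\bar{\vect v}\cdot\bar{\vect v}$ loses a factor of $\sin^2\phi_1$ whenever the leading angles coincide. However, the gap you flag as ``delicate'' is not bridgeable: the inequality as stated fails in that case, so no strengthening of the induction can recover the bound $\cos\theta$. Concretely, take $k=3$ and $\theta=\pi/6$ (so $N=3$), and the tuples $(\tau_1,\tau_2)=(1,1)$ and $(p_1,p_2)=(1,2)$. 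Then $\vect v_{11}\cdot\vect v_{12}=\cos^2\theta+\sin^2\theta\cos\theta\approx 0.967$, whereas $\cos\theta\approx 0.866$. Your telescoped estimate $1-\big(\prod_{l<j}\sin^2\phi_l\big)\big(1-\cos(\phi_j-\psi_j)\big)$ is in fact an equality in this example, and it exceeds $\cos\theta$ precisely because $\sin^2\phi_1<1$; geometrically, perturbing the $j$-th spherical coordinate by $\theta$ moves the point on the sphere by only about $\theta\prod_{l<j}\sin\phi_l$, so two distinct tuples from (\ref{equ:vectorlist1}) need not be $\theta$-separated in angle.

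The paper offers no proof to compare against here --- it only cites Lemma 6.3 of \cite{liu2021mathematicalhighd} --- so the honest conclusion is the one you reach in your final sentence: the argument establishes the bound $\cos\theta$ only when the tuples differ in their first coordinate, and in general it guarantees no more than $\vect v_{\tau_1\cdots\tau_{k-1}}\cdot\vect v_{p_1\cdots p_{k-1}}\leq 1-\sin^{2(j-1)}(\theta)\,(1-\cos\theta)$, where $j$ is the first disagreeing index. Since the statement feeds the angular-separation hypothesis $0\leq\vect v_p^\top\vect v_j\leq\cos\theta$ of Lemma \ref{lem:highdsupportproject2} via Lemma \ref{lem:spaceangleestimate7}, either the family (\ref{equ:vectorlist1}) must be replaced by a genuinely $\theta$-separated spherical code, or the angle $\theta$ in the conclusion must be degraded to absorb the sine factors. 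Your structural approach --- the spherical recursion with a case analysis on the first disagreeing index --- is the right tool for proving whichever corrected statement is adopted.
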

\begin{proof}
See Lemma 6.3 in \cite{liu2021mathematicalhighd}.
\end{proof}

\begin{lem}\label{lem:projectvectorangle2}
For a vector $\vect u\in \mathbb R^k$,  suppose $||\mathcal P_{\vect v_{\tau_1\cdots \tau_{k-1}}^\perp}(\vect u)||_2 < \sin (\frac{\theta}{2}) ||\vect u||_2$ with $\vect v_{\tau_1\cdots \tau_{k-1}}$ defined in (\ref{equ:vectorlist1}), we have $||\mathcal P_{\vect v_{p_1\cdots p_{k-1}}^\perp}(\vect u)||_2 \geq \sin (\frac{\theta}{2})||\vect u||_2$ for $\vect v_{p_1\cdots p_{k-1}} \neq \vect v_{\tau_1\cdots \tau_{k-1}}$.
\end{lem}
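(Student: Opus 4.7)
The plan is to translate everything into a statement about angles between unit vectors on the sphere. Since $\vect v_{\tau_1\cdots\tau_{k-1}}$ is a unit vector, the Pythagorean decomposition
\[
\btwonorm{\mathcal P_{\vect v_{\tau_1\cdots\tau_{k-1}}^\perp}(\vect u)}^2 = \btwonorm{\vect u}^2 - \bparenths{\vect u \cdot \vect v_{\tau_1\cdots\tau_{k-1}}}^2
\]
lets me rewrite the hypothesis as $|\vect u \cdot \vect v_{\tau_1\cdots\tau_{k-1}}| > \cos(\theta/2)\,\btwonorm{\vect u}$. Replacing $\vect u$ by $-\vect u$ if necessary (which changes nothing in the conclusion, since each $\vect v^\perp$ depends only on $\pm\vect v$), I may assume $\vect u\cdot\vect v_{\tau_1\cdots\tau_{k-1}}>0$ and define $\alpha := \angle(\vect u, \vect v_{\tau_1\cdots\tau_{k-1}}) \in [0,\theta/2)$.

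Next, I would invoke Lemma \ref{lem:projectvectorangle1}, which guarantees that $\vect v_{\tau_1\cdots\tau_{k-1}}\cdot\vect v_{p_1\cdots p_{k-1}} \in [0,\cos\theta]$, so the angle $\beta := \angle(\vect v_{\tau_1\cdots\tau_{k-1}}, \vect v_{p_1\cdots p_{k-1}})$ lies in $[\theta,\pi/2]$.

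Then I apply the spherical triangle inequality to the three unit vectors $\vect u,\ \vect v_{\tau_1\cdots\tau_{k-1}},\ \vect v_{p_1\cdots p_{k-1}}$. On the one hand,
\[
\angle(\vect u, \vect v_{p_1\cdots p_{k-1}}) \geq \beta - \alpha > \theta - \theta/2 = \theta/2,
\]
and on the other hand,
\[
\pi - \angle(\vect u, \vect v_{p_1\cdots p_{k-1}}) = \angle(\vect u, -\vect v_{p_1\cdots p_{k-1}}) \geq \pi - (\beta+\alpha) \geq \frac{\pi}{2} - \frac{\theta}{2} \geq \frac{\theta}{2},
\]
using $\theta\leq \pi/2$ in the last step. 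Therefore the acute angle between $\vect u$ and $\vect v_{p_1\cdots p_{k-1}}$ is at least $\theta/2$, i.e.\ $|\vect u\cdot \vect v_{p_1\cdots p_{k-1}}| \leq \cos(\theta/2)\,\btwonorm{\vect u}$. Plugging this back into the Pythagorean identity for the projection onto $\vect v_{p_1\cdots p_{k-1}}^\perp$ yields $\btwonorm{\mathcal P_{\vect v_{p_1\cdots p_{k-1}}^\perp}(\vect u)} \geq \sin(\theta/2)\,\btwonorm{\vect u}$.

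The only delicate point is the two-sided nature of the bound on the acute angle: because $\vect v^\perp$ is insensitive to the sign of $\vect v$, I must control both $\angle(\vect u,\vect v_{p_1\cdots p_{k-1}})$ and its supplement. This is precisely where the standing assumption $\theta<\pi/2$ in the construction \eqref{equ:vectorlist1} is used, ensuring that $\pi/2-\theta/2\geq \theta/2$. Everything else is a routine application of spherical trigonometry, so I do not anticipate any other obstacles.
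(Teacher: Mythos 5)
Your proof is correct. The paper itself gives no argument for this lemma---it simply defers to Lemma 6.4 of \cite{liu2021mathematicalhighd}---so there is no in-paper proof to compare against; your argument (rewrite the hypothesis as $|\vect u\cdot\vect v_{\tau_1\cdots\tau_{k-1}}|>\cos(\theta/2)\|\vect u\|_2$ via the Pythagorean identity, invoke Lemma \ref{lem:projectvectorangle1} to place the angle between the two list vectors in $[\theta,\pi/2]$, and then bound the angle $\angle(\vect u,\vect v_{p_1\cdots p_{k-1}})$ from both sides with the spherical triangle inequality, the two-sided control being exactly what the sign-insensitivity of $\vect v^\perp$ demands) is the standard and essentially inevitable route, and you correctly flag where $\theta<\pi/2$ is used.
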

\begin{proof}
See Lemma 6.4 in \cite{liu2021mathematicalhighd}.
\end{proof}

\begin{lem}\label{lem:highdsupportproject2}
Let $k\geq 2$. For a vector $\vect{u}\in \mathbb R^k$, and two unit vectors $\vect{v}_1, \vect{v}_2\in \mathbb R^k$ satisfying $0 \leq \vect v_1 \cdot \vect v_2 \leq \cos(\theta)$, we have 
\begin{equation}
||\mathcal P_{\vect v_1^{\perp}}(\vect u)||_2^2+||\mathcal P_{\vect v_2^{\perp}}(\vect u)||_2^2  \geq (1-\cos(\theta))||\vect u||_2^2.
\end{equation}
\end{lem}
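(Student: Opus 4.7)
The plan is to use the standard identity $\|\mathcal P_{\vect v^\perp}(\vect u)\|_2^2 = \|\vect u\|_2^2 - (\vect v \cdot \vect u)^2$, which holds for any unit vector $\vect v$. Applying this to both $\vect v_1$ and $\vect v_2$ and summing, the desired inequality
\[
\|\mathcal P_{\vect v_1^\perp}(\vect u)\|_2^2 + \|\mathcal P_{\vect v_2^\perp}(\vect u)\|_2^2 \geq (1-\cos\theta)\|\vect u\|_2^2
\]
becomes equivalent to the \emph{upper} bound
\[
(\vect v_1 \cdot \vect u)^2 + (\vect v_2 \cdot \vect u)^2 \leq (1+\cos\theta)\|\vect u\|_2^2.
\]
So the problem reduces from a lower bound on projections onto $(k-1)$-dimensional complements to a clean upper bound on a sum of two squared inner products.

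Next I would introduce the $2\times k$ matrix $M = \begin{pmatrix}\vect v_1^\top\\ \vect v_2^\top\end{pmatrix}$, so that the left-hand side above equals $\|M\vect u\|_2^2 \leq \sigma_{\max}^2(M)\,\|\vect u\|_2^2$. The Gram matrix
\[
MM^\top = \begin{pmatrix} 1 & \vect v_1\cdot \vect v_2 \\ \vect v_1\cdot \vect v_2 & 1 \end{pmatrix}
\]
has eigenvalues $1\pm \vect v_1\cdot \vect v_2$, hence $\sigma_{\max}^2(M) = 1+\vect v_1\cdot \vect v_2$. By the hypothesis $0\leq \vect v_1\cdot \vect v_2 \leq \cos\theta$, this is bounded by $1+\cos\theta$, which closes the argument.

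There is no real obstacle here; this is essentially the complementary analogue of Lemma \ref{lem:projectlengthestimate1}, and in fact one could alternatively reduce to it by decomposing $\vect u = \vect u_{\parallel}+\vect u_{\perp}$, where $\vect u_{\parallel}$ lies in the two-dimensional plane $V=\operatorname{span}(\vect v_1,\vect v_2)$ and $\vect u_\perp \in V^\perp$. Since $\vect u_\perp$ is annihilated by both $(\vect v_j\cdot \cdot)$ and fully contributes $\|\vect u_\perp\|_2^2$ to each $\|\mathcal P_{\vect v_j^\perp}(\vect u)\|_2^2$, this part only strengthens the inequality, and for $\vect u_\parallel$ one applies a 2D computation of the singular values as above. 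The only mild bookkeeping is to verify that the sign restriction $\vect v_1\cdot \vect v_2 \geq 0$ in the hypothesis is what picks out $\sigma_{\max}^2(M)=1+\vect v_1\cdot \vect v_2$ rather than $1+|\vect v_1\cdot \vect v_2|$; this costs nothing in the current setting.
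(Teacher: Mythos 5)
Your proof is correct. Note that the paper itself does not prove this lemma in-line; it simply cites Lemma 3.3 of \cite{liu2021mathematicalhighd}, so your argument is a genuinely self-contained alternative. Your route is the natural complementary analogue of the paper's own proof of Lemma \ref{lem:projectlengthestimate1}: there the authors bound $\sum_j |\vect q_j^\top \vect u|^2$ from below via $\sigma_{\min}^2$ of the stacked matrix, whereas you bound the same quantity from above via $\sigma_{\max}^2(M)=1+\vect v_1\cdot\vect v_2\leq 1+\cos\theta$ and then pass to the complements through the Pythagorean identity $\|\mathcal P_{\vect v^\perp}(\vect u)\|_2^2=\|\vect u\|_2^2-(\vect v\cdot\vect u)^2$. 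The reduction is exact (no loss anywhere), the eigenvalue computation for the $2\times 2$ Gram matrix is right, and you correctly observe that the hypothesis $\vect v_1\cdot\vect v_2\geq 0$ is what guarantees $1+|\vect v_1\cdot\vect v_2|\leq 1+\cos\theta$; without that sign condition the bound could fail. What your approach buys is that the statement is proved for all $k\geq 2$ in one stroke, with the splitting $\vect u=\vect u_\parallel+\vect u_\perp$ relegated to an optional remark rather than a needed step, since the operator-norm bound $\|M\vect u\|_2^2\leq\sigma_{\max}^2(M)\|\vect u\|_2^2$ already works in $\mathbb R^k$ directly.
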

\begin{proof}
See Lemma 3.3 in \cite{liu2021mathematicalhighd}.
\end{proof}

\begin{lem}\label{lem:spaceangleestimate5}
Suppose we have $\frac{n(n-1)}{2}$ different $\vect u_{pq}\in \mathbb R^k$, then there exists a one-dimensional space $S$ so that for all $\vect u_{pq}$'s we have 
\[
\btwonorm{\mathcal P_S(\vect u_{pq})} \geq \frac{||\vect{u}_{pq}||_2}{(\pi/2)^{k-1}}\Big(\frac{\pi}{n(n-1)}\Big)^{\xi(k-1)},
\]
where $\xi(k-1)$ is defined as in (\ref{equ:defineofxi}).
\end{lem}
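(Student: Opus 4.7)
The plan is to prove the lemma by induction on the ambient dimension $k$, peeling off one dimension at a time using the net of test vectors in (\ref{equ:vectorlist1}) together with Lemmas \ref{lem:projectvectorangle1} and \ref{lem:projectvectorangle2}.

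The base case $k=1$ is trivial: taking $S = \mathbb{R}$ gives $\mathcal{P}_S(\vect u_{pq}) = \vect u_{pq}$, and the right-hand side reduces to $\btwonorm{\vect u_{pq}}$ since $(\pi/2)^{0}=1$ and $\xi(0)=0$. For the inductive step, assuming the lemma in dimension $k-1$, I would first find a hyperplane $H \subset \mathbb{R}^k$ of dimension $k-1$ onto which every $\vect u_{pq}$ projects with length at least $c_k\btwonorm{\vect u_{pq}}$ for some explicit $c_k$, and then apply the inductive hypothesis to the projected vectors $\tilde{\vect u}_{pq}:=\mathcal{P}_H(\vect u_{pq})$ inside $H\cong\mathbb{R}^{k-1}$. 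Because $S \subset H$ forces $\mathcal{P}_S = \mathcal{P}_S\circ\mathcal{P}_H$, chaining the inductive bound with the inequality $\btwonorm{\tilde{\vect u}_{pq}}\geq c_k\btwonorm{\vect u_{pq}}$ produces the claim, with the factor $c_k$ contributing one extra step of the target constant.

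The hyperplane $H$ will come from the pigeonhole argument that Lemmas \ref{lem:projectvectorangle1} and \ref{lem:projectvectorangle2} are tailored for. I would choose $\theta_k$ small enough that the number of test vectors $\vect v_{\tau_1\cdots\tau_{k-1}}$ in (\ref{equ:vectorlist1}), namely $N^{k-1}$ with $N = \lfloor \pi/(2\theta_k)\rfloor$, strictly exceeds $n(n-1)/2$. Lemma \ref{lem:projectvectorangle2} then guarantees that for each $\vect u_{pq}$ at most one test vector can be ``bad,'' in the sense that $\btwonorm{\mathcal{P}_{\vect v_{\tau_1\cdots\tau_{k-1}}^{\perp}}(\vect u_{pq})} < \sin(\theta_k/2)\btwonorm{\vect u_{pq}}$. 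A count over the at most $n(n-1)/2$ bad pairs yields a test vector $\vect v^{*}$ that is good for every $\vect u_{pq}$; taking $H = (\vect v^{*})^{\perp}$ and $c_k = \sin(\theta_k/2)$ furnishes the hyperplane and the reduction factor.

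The hard part will be calibrating $N$ so that $c_k$ exactly matches the ratio of target constants between dimensions $k$ and $k-1$, namely $\frac{2}{\pi}\bigl(\pi/(n(n-1))\bigr)^{1/(k-1)}$. The natural choice is the smallest integer with $N^{k-1} > n(n-1)/2$, so that $N$ is of order $(n(n-1))^{1/(k-1)}$, combined with the elementary inequality $\sin x \geq 2x/\pi$ on $[0,\pi/2]$, which yields $\sin(\theta_k/2) \geq 1/(2N)$. The arithmetic works out because $\xi(k-1)-\xi(k-2) = 1/(k-1)$ is precisely the exponent produced by that choice of $N$, so the constants accumulate correctly across the $k-1$ reduction steps; the only delicacy is tracking factors of $2$ and $\pi/2$ and, if needed, slightly oversizing $N$ to absorb any constant slack from the ceiling in the definition of $N$.
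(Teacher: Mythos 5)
Your induction skeleton --- peel off one dimension per step, compose projections via $\mathcal P_S=\mathcal P_S\circ\mathcal P_H$ for $S\subset H$, and accumulate the exponents $\tfrac{1}{k-1}+\tfrac{1}{k-2}+\cdots+1=\xi(k-1)$ --- is exactly the structure of the paper's proof. But the mechanism you use to produce the good hyperplane at each step is not the paper's, and it is where the argument fails quantitatively. The paper does not use the discrete net (\ref{equ:vectorlist1}) for this lemma; it uses a continuous covering argument on the unit sphere: the set $N(\vect u_{pq},\Delta_k)$ of unit vectors $\vect v$ with $\btwonorm{\mathcal P_{\vect v^\perp}(\vect u_{pq})}<\btwonorm{\vect u_{pq}}\sin\Delta_k$ is an open neighborhood of the two poles $\pm\vect u_{pq}/\btwonorm{\vect u_{pq}}$ of surface area at most $\tfrac{2\,\mathrm{area}(S_{k-2})}{k-1}\Delta_k^{k-1}$, and with $\Delta_k=\big(\tfrac{\pi}{n(n-1)}\big)^{\frac{1}{k-1}}$ the union over all $\tfrac{n(n-1)}{2}$ pairs has total area at most $\mathrm{area}(S_{k-1})$, so, being open, it cannot cover the sphere. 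This yields the per-step factor $\sin\Delta_k\geq \tfrac{2}{\pi}\big(\tfrac{\pi}{n(n-1)}\big)^{\frac{1}{k-1}}$, which is precisely the constant claimed.

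The discrete-net route you propose (the one the paper reserves for Lemma \ref{lem:spaceangleestimate7}, where $n+1$ good directions are needed) cannot reach this constant. With net spacing $\theta$ and $N=\lfloor\pi/(2\theta)\rfloor$, Lemma \ref{lem:projectvectorangle2} excludes at most one net point per $\vect u_{pq}$ only at the threshold $\sin(\theta/2)$ --- half the spacing --- so you must take $N^{k-1}>n(n-1)/2$ and you are paid only $\sin(\theta/2)\approx\tfrac{1}{2N}$. Matching the target per-step factor $\tfrac{2}{\pi}\big(\tfrac{\pi}{n(n-1)}\big)^{\frac{1}{k-1}}$ would require $N\leq\tfrac{\pi}{4}\big(\tfrac{n(n-1)}{\pi}\big)^{\frac{1}{k-1}}$ while simultaneously $N>\big(\tfrac{n(n-1)}{2}\big)^{\frac{1}{k-1}}$; the ratio of the upper to the lower bound is $\tfrac{\pi}{4}\big(\tfrac{2}{\pi}\big)^{\frac{1}{k-1}}<1$ for every $k\geq 2$ (it equals $\tfrac12$ at $k=2$), so no admissible $N$ exists --- and the failure persists even if you use $\sin(\theta/2)$ exactly rather than the bound $\sin x\geq 2x/\pi$. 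Your proposed fix of ``slightly oversizing $N$'' moves you in the wrong direction: larger $N$ forces smaller $\theta$ and hence a smaller guaranteed projection. What your argument does prove is the lemma with a weakened constant of the kind appearing in Lemma \ref{lem:spaceangleestimate7}; that would preserve the order of the resolution estimates but not the explicit constant stated here and propagated into (\ref{equ:trackhighdvelocitiesnumberlimit0equ0}).
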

\begin{proof}
Let $S_{k-1}$ be  the unit sphere in $\mathbb R^k$. For a subspace $S$ of $\mathbb R^k$ and $\vect v\in S$, we denote by $\vect v^{\perp}(S)$ the orthogonal complement space of $\vect v$ in $S$. For each $\vect u \in \mathbb R^k$, we let
\begin{align}\label{equ:defineofN}
N(\vect{u}, \Delta) &= \Big\{ \vect v \Big|\vect v \in S_{k-1}, ||\mathcal P_{\vect v^\perp(\mathbb R^k)}(\vect{u})||_2 <||\vect{u}||_2 \sin\Delta \Big\}. 
\end{align}
Define $\text{area}(A)$ as the area of set $A$ and let $\Delta_k = \Big(\frac{\pi}{n(n-1)}\Big)^{\frac{1}{k-1}}$, we have
\[
\text{area} \left(\cup_{pq} N(\vect{u}_{pq}, \Delta_k)\right) \leq \frac{n(n-1)}{2} \frac{2\text{area}(S_{k-2})}{k-1}\Delta_k^{k-1}= \frac{\pi\text{area}(S_{k-2})}{k-1}\leq \text{area}(S_{k-1}),
\]
where the first inequality is from Lemma 6.1 in \cite{liu2021mathematicalhighd} and the last one is from Lemma 6.2 in \cite{liu2021mathematicalhighd}. On the other hand, $\cup_{pq} N(\vect{u}_{pq}, \Delta_k)$ is an open set in $S_{k-1}$. Thus $S_{k-1}\setminus \cup_{pq} N(\vect{u}_{pq}, \Delta_k)$ is not empty. By the definition of $N(\vect{u}_{pq}, \Delta_k)$, there exists a unit vector $\vect{v}_k \in \mathbb R^{k}$ such that 
\[
\btwonorm{\mathcal P_{\vect v_k^\perp(\mathbb R^k)}(\vect u_{pq})}\geq ||\vect{u}_{pq}||_2 \sin\Delta_k \geq \frac{||\vect{u}_{pq}||_2}{\pi/2} \Big(\frac{\pi}{n(n-1)}\Big)^{\frac{1}{k-1}} 
\]
for all $\vect u_{pq}$'s. Note that $\mathcal P_{\vect v_k^\perp(\mathbb R^k)}(\vect u_{pq})$ are at most $n(n-1)$ different vectors in a $(k-1)$-dimensional space $\vect v_k^\perp(\mathbb R^k)$. Applying similar arguments as above to $\vect v_k^\perp(\mathbb R^k)$ and $\mathcal P_{\vect v_k^\perp(\mathbb R^k)}(\vect u_{pq})$'s, we can show that there exists a unit vector $\vect{v}_{k-1} \in \vect v_k^\perp(\mathbb R^k)$ such that for all $\vect u_{pq}$'s, 
\begin{align*}
\btwonorm{\mathcal P_{\vect v_{k-1}^\perp(\vect v_k^\perp(\mathbb R^k))}\Big(\mathcal P_{\vect v_k^\perp(\mathbb R^k)}(\vect u_{pq})\Big)}  \geq \btwonorm{\mathcal P_{\vect v_k^\perp(\mathbb R^k)}(\vect u_{pq})}\Big(\frac{2}{\pi}\Big) \Big(\frac{\pi}{n(n-1)}\Big)^{\frac{1}{k-2}} 
\geq  \frac{||\vect{u}_{pq}||_2}{(\pi/2)^2} \Big(\frac{\pi}{n(n-1)}\Big)^{\frac{1}{k-1}+\frac{1}{k-2}}. 
\end{align*}
On the other hand, since $\vect v_{k-1}^\perp(\vect v_k^\perp(\mathbb R^k))$ is a subspace of $\vect v_k^\perp(\mathbb R^k)$, we further obtain
\[
\btwonorm{\mathcal P_{\vect v_{k-1}^\perp(\vect v_k^\perp(\mathbb R^k))}(\vect u_{pq})} = \btwonorm{\mathcal P_{\vect v_{k-1}^\perp(\vect v_k^\perp(\mathbb R^k))}\Big(\mathcal P_{\vect v_k^\perp(\mathbb R^k)}(\vect u_{pq})\Big)}  \geq  \frac{||\vect{u}_{pq}||_2}{(\pi/2)^2} \Big(\frac{\pi}{n(n-1)}\Big)^{\frac{1}{k-1}+\frac{1}{k-2}}. 
\]
Continuing the above arguments from dimension $(k-2)$ to dimension $2$, we can find a unit vector $\vect v_1$ so that 
\[
\btwonorm{\mathcal P_{\vect v_1^{\perp}(\cdots\vect v_{k-1}^\perp(\vect v_k^\perp(\mathbb R^k)))}(\vect u_{pq})}\geq \frac{||\vect{u}_{pq}||_2}{(\pi/2)^{k-1}}\Big(\frac{\pi}{n(n-1)}\Big)^{\xi(k-1)},
\]
for all $\vect u_{pq}$'s. If we define $S$ in the lemma as the one-dimensional space $\vect v_1^{\perp}(\cdots\vect v_{k-1}^\perp(\vect v_k^\perp(\mathbb R^k)))$, then the proof is completed. 
\end{proof}

\begin{lem}\label{lem:spaceangleestimate7}
Suppose we have $\frac{n(n-1)}{2}$ different $\vect u_{pq}\in \mathbb R^k$, then there exist $n+1$ $(k-1)$-dimensional spaces $S_j^{k-1}= \vect v_{j}^\perp, j=1,\cdots,n+1$ with $0\leq \vect v_{p}^\top\vect v_j\leq \cos\Big(\frac{\pi}{4}\Big(\frac{2}{(n+1)(n+2)}\Big)^{\frac{1}{k-1}}\Big), p\neq j$, so that for all $\vect u_{pq}$'s we have 
\[
\btwonorm{\mathcal P_{S_j^{k-1}}(\vect u_{pq})} \geq \frac{||\vect{u}_{pq}||_2}{4}\Big(\frac{2}{(n+1)(n+2)}\Big)^{\frac{1}{k-1}}.
\]
\end{lem}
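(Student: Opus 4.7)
My plan is to adapt the spherical grid construction from (\ref{equ:vectorlist1}) that was used to prove Lemma \ref{lem:spaceangleestimate5}, but now I need to produce a large enough family of candidate vectors so that after excluding the "bad" ones for each $\vect u_{pq}$ I still have $n+1$ survivors. The key identities are that Lemma \ref{lem:projectvectorangle1} pre-installs the pairwise dot-product bound $0\leq \vect v_{\tau_1\cdots\tau_{k-1}}\cdot\vect v_{p_1\cdots p_{k-1}}\leq \cos\theta$ for any two distinct grid vectors, and Lemma \ref{lem:projectvectorangle2} guarantees that for each fixed $\vect u_{pq}$ at most one grid vector $\vect v_{\tau_1\cdots\tau_{k-1}}$ fails the projection lower bound $\|\mathcal P_{\vect v^\perp}(\vect u_{pq})\|_2\geq \sin(\theta/2)\|\vect u_{pq}\|_2$.

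Concretely, I would set
\[
\theta^{*} = \frac{\pi}{4}\Big(\frac{2}{(n+1)(n+2)}\Big)^{\frac{1}{k-1}},
\]
so that the target inner-product bound in the lemma is exactly $\cos\theta^{*}$, and so that $\sin(\theta^{*}/2)\geq \theta^{*}/\pi = \tfrac14\big(\tfrac{2}{(n+1)(n+2)}\big)^{\frac{1}{k-1}}$ by the elementary inequality $\sin x\geq 2x/\pi$ on $[0,\pi/2]$. I would then form the $N^{k-1}$ grid vectors in (\ref{equ:vectorlist1}) with $N=\lfloor \pi/(2\theta^{*})\rfloor$.

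Next I would do a pigeonhole count. With this choice $\pi/(2\theta^{*})=2\big(\tfrac{(n+1)(n+2)}{2}\big)^{\frac{1}{k-1}}$, and a short calculation shows $N^{k-1}\geq \tfrac{(n+1)(n+2)}{2}$ for all $k\geq 2$, $n\geq 2$. Since there are at most $\frac{n(n-1)}{2}$ distinct $\vect u_{pq}$'s and, by Lemma \ref{lem:projectvectorangle2}, each one disqualifies at most one grid vector, the number of surviving grid vectors is at least
\[
N^{k-1}-\frac{n(n-1)}{2}\;\geq\; \frac{(n+1)(n+2)}{2}-\frac{n(n-1)}{2}\;=\;2n+1\;\geq\; n+1.
\]
Denoting $n+1$ such survivors by $\vect v_1,\ldots,\vect v_{n+1}$ and setting $S_j^{k-1}:=\vect v_j^\perp$, Lemma \ref{lem:projectvectorangle1} provides the pairwise inner-product bound, while the survival property provides the required projection lower bound for every $\vect u_{pq}$.

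The main obstacle I anticipate is purely the arithmetic verification $N^{k-1}\geq \tfrac{(n+1)(n+2)}{2}$ in the presence of the floor function: for $k=2$ it is immediate since $N=(n+1)(n+2)$, but for $k\geq 3$ one must check that $(2m-1)^{k-1}\geq m^{k-1}$ with $m=\big(\tfrac{(n+1)(n+2)}{2}\big)^{\frac{1}{k-1}}\geq \sqrt{3}$, which follows from $m\geq 1+\tfrac{1}{2-2^{1-1/(k-1)}}$ after a monotonicity argument (or one simply treats small $n$ by hand). Everything else is a transcription of the arguments used in the proof of Lemma \ref{lem:spaceangleestimate5} with the extra bookkeeping that keeps $n+1$ vectors instead of one.
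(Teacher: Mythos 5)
Your proposal is correct and follows essentially the same route as the paper's proof: the same spherical grid (\ref{equ:vectorlist1}) with the same angle $\theta^{*}=\frac{\pi}{4}\big(\tfrac{2}{(n+1)(n+2)}\big)^{\frac{1}{k-1}}$, the same use of Lemmas \ref{lem:projectvectorangle1} and \ref{lem:projectvectorangle2}, the same pigeonhole count $N^{k-1}-\frac{n(n-1)}{2}\geq n+1$, and the same $\sin x\geq 2x/\pi$ step to reach the stated constant. The floor-function worry you raise is resolved exactly as in the paper, by noting $\lfloor 2m\rfloor\geq 2m-1\geq m$ once $m=\big(\tfrac{(n+1)(n+2)}{2}\big)^{\frac{1}{k-1}}\geq 1$, with no need for the more elaborate monotonicity argument.
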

\begin{proof}
For $k\geq 2$, let $\Delta_k = \frac{\pi}{8}\Big(\frac{2}{(n+1)(n+2)}\Big)^{\frac{1}{k-1}}$ and $\theta_k = 2\Delta_k$. First, we consider the vectors (\ref{equ:vectorlist1}) in $\mathbb R^k$ with $\theta = \theta_k$ and $N=N_k = \lfloor \frac{\pi}{2\theta_k} \rfloor$. We denote $N(\vect{u}, \Delta) = \Big\{ \vect v \Big|\vect v \in \mathbb R^k, ||\vect v||_2=1, ||\mathcal P_{\vect v^\perp}(\vect{u})||_2 <||\vect{u}||_2 \sin\Delta \Big\}$. Thus, by Lemma \ref{lem:projectvectorangle2}, each $N(\vect{u}_{pq}, \Delta_k)$ contains at most one unit vector in (\ref{equ:vectorlist1}). Next recall that there are $N_k^{k-1}$ different vectors in (\ref{equ:vectorlist1}), where $N_k = \lfloor \frac{\pi}{2\theta_k} \rfloor \geq \frac{\pi}{2\theta_k} -1 $. Since 
\[
\theta_k =2 \Delta_k =\frac{\pi}{4}\Big(\frac{2}{(n+1)(n+2)}\Big)^\frac{1}{k-1}, 
\]
we have
\begin{align*}
N_k^{k-1}\geq \Big(\frac{\pi}{2\theta_k}-1\Big)^{k-1} = \Big(2 (\frac{(n+1)(n+2)}{2})^{\frac{1}{k-1}}-1\Big)^{k-1}
\geq \Big((\frac{(n+1)(n+2)}{2})^{\frac{1}{k-1}}\Big)^{k-1}= \frac{(n+1)(n+2)}{2}.
\end{align*}
Note that $\frac{(n+1)(n+2)}{2} - \frac{n(n-1)}{2} \geq  n+1$, we can at least find $(n+1)$ vectors $\vect v_j$'s in (\ref{equ:vectorlist1}) so that $||\mathcal P_{\vect v_j^\perp}(\vect{u}_{pq})||_2 \geq ||\vect{u}_{pq}||_2 \sin\Delta_k$ for all $\vect u_{pq}$'s. Thus we have 
\[
\btwonorm{\mathcal P_{\vect v_j^{\perp}}(\vect u_{pq})}\geq \frac{2||\vect{u}_{pq}||_2}{\pi}\frac{\pi}{8}\Big(\frac{2}{(n+1)(n+2)}\Big)^{\frac{1}{k-1}} = \frac{||\vect{u}_{pq}||_2}{4}\Big(\frac{2}{(n+1)(n+2)}\Big)^{\frac{1}{k-1}}
\]
for all $\vect u_{pq}$'s. On the other hand, by Lemma \ref{lem:projectvectorangle1} we have $0\leq \vect v_{p}^\top\vect v_j\leq \cos\Big(\frac{\pi}{4}\Big(\frac{2}{(n+1)(n+2)}\Big)^{\frac{1}{k-1}}\Big)$ for $p\neq j$.
\end{proof}

\begin{lem}\label{lem:spaceangleestimate6}
Define 
\[
d_{\min, k} = \frac{11.76\pi e 4^{k-1} ((n+2)(n+1)/2)^{\xi(k-1)}}{T\Omega }\Big(\frac{\sigma}{m_{\min}}\Big)^{\frac{1}{2n-1}}.
\]
For a vector $\vect u\in \mathbb R^k$, suppose we can find two $(k-1)$-dimensional spaces $S_{j}^{k-1}= \vect v_{j}^{\perp}, j=1,2$ with $0\leq \vect v_{1}^\top\vect v_2\leq \cos\Big(\frac{\pi}{4}\Big(\frac{2}{(n+1)(n+2)}\Big)^{\frac{1}{k-1}}\Big)$ so that
\[
\btwonorm{\mathcal P_{S_{j}^{k-1}}(\vect u)} < \frac{1}{2} d_{\min, k-1}, \quad j=1,2,
\]
and 
\[
\btwonorm{\mathcal P_{S_{j}^{k-1}}(\vect u)}\leq \frac{C(k-1, n)}{T\Omega}
\Big(\frac{\pi}{d_{\min,k-1}T\Omega}\Big)^{2n-2}\frac{\sigma}{m_{\min}}, \quad j=1,2,
\]
where 
\[C(k-1, n)=\big(4^{k-2}((n+2)(n+1)/2)^{\xi(k-2)}\big)^{(2n-1)}n2^{6n-3}e^{2n}\pi^{-\frac{1}{2}}.\]
Then we have 
\[
\btwonorm{\vect u} < \frac{1}{2} d_{\min, k}, 
\]
and
\[
\btwonorm{\vect u}\leq \frac{C(k, n)}{T\Omega}\Big(\frac{\pi}{d_{\min,k}T\Omega}\Big)^{2n-2}\frac{\sigma}{m_{\min}},
\]
where 
\[
C(k, n)=\big(4^{k-1}((n+2)(n+1)/2)^{\xi(k-1)}\big)^{(2n-1)}n2^{6n-3}e^{2n}\pi^{-\frac{1}{2}}.
\]
\end{lem}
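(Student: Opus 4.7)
The plan is to combine the two projection hypotheses via Lemma \ref{lem:highdsupportproject2}. Setting $\theta = \frac{\pi}{4}\Big(\frac{2}{(n+1)(n+2)}\Big)^{\frac{1}{k-1}}$, that lemma, applied to the pair $\vect v_1, \vect v_2$ which by assumption satisfies $0 \leq \vect v_1^\top \vect v_2 \leq \cos\theta$, immediately yields
\[
(1-\cos\theta)\btwonorm{\vect u}^2 \leq \btwonorm{\mathcal P_{S_1^{k-1}}(\vect u)}^2 + \btwonorm{\mathcal P_{S_2^{k-1}}(\vect u)}^2.
\]
I would combine this with the elementary bound $1-\cos\theta \geq \theta^2/\pi$ on $[0,\pi/2]$, which has been used repeatedly earlier in the paper (see, e.g., the proof of Theorem \ref{thm:trackhighdupperboundsupportlimit0}), in order to convert the angular factor into an explicit polynomial in $((n+1)(n+2)/2)^{1/(k-1)}$. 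After that, the argument reduces to substituting the two given estimates and verifying that the constants line up.

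For the first conclusion $\btwonorm{\vect u} < d_{\min,k}/2$, I would plug $\btwonorm{\mathcal P_{S_j^{k-1}}(\vect u)} < d_{\min,k-1}/2$ into the display above and use the recursion $d_{\min,k}/d_{\min,k-1} = 4\bigl((n+1)(n+2)/2\bigr)^{1/(k-1)}$ that is evident from the definition of $d_{\min,k}$ in the statement. With the sharper $1-\cos\theta \geq \theta^2/\pi$ bound, the required inequality collapses to $\pi > 2$. For the polynomial rate bound, substituting the hypothesized estimate on $\btwonorm{\mathcal P_{S_j^{k-1}}(\vect u)}$ and using both the recursion $C(k,n)/C(k-1,n) = \bigl(4((n+1)(n+2)/2)^{1/(k-1)}\bigr)^{2n-1}$ and $(d_{\min,k-1}/d_{\min,k})^{2n-2} = \bigl(4((n+1)(n+2)/2)^{1/(k-1)}\bigr)^{-(2n-2)}$, the claim reduces to $\sqrt{2/\pi} \leq 1$, which holds since $\pi \geq 2$.

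The only real obstacle is bookkeeping: one must use the sharper bound $1-\cos\theta \geq \theta^2/\pi$, since the cruder $1-\cos\theta \geq 2\theta^2/\pi^2$ would leave the first inequality just too tight to close. Once that choice is made, both conclusions follow from routine arithmetic on the two recursion relations linking the constants at dimensions $k$ and $k-1$, and the lemma is established. This lemma is presumably then used inductively to lift the stability estimate for velocity recovery from the one-dimensional base case (Theorem \ref{thm:onedMUSICthm1}-style analysis) up to arbitrary $d$, since its hypotheses match the output of the Lemma \ref{lem:spaceangleestimate7}-style selection of $n+1$ hyperplanes.
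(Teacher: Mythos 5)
Your proposal is correct and follows essentially the same route as the paper: apply Lemma \ref{lem:highdsupportproject2} to the pair $\vect v_1,\vect v_2$ with $\theta = \frac{\pi}{4}\big(\frac{2}{(n+1)(n+2)}\big)^{\frac{1}{k-1}}$, lower-bound $1-\cos\theta$ by a quadratic, and close both estimates via the recursions $d_{\min,k}=4\big(\tfrac{(n+1)(n+2)}{2}\big)^{\frac{1}{k-1}}d_{\min,k-1}$ and $C(k,n)=\big(4(\tfrac{(n+1)(n+2)}{2})^{\frac{1}{k-1}}\big)^{2n-1}C(k-1,n)$. One small correction: the paper actually uses the cruder bound $1-\cos\theta\geq \frac{2}{\pi^2}\theta^2$ and it does suffice (the first conclusion still closes with a strict inequality because the hypothesis $\btwonorm{\mathcal P_{S_j^{k-1}}(\vect u)}<\frac{1}{2}d_{\min,k-1}$ is strict, and the second closes with no slack), so your insistence on $1-\cos\theta\geq\theta^2/\pi$ is harmless but not necessary.
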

\begin{proof}
Let $\theta = \frac{\pi}{4}\Big(\frac{2}{(n+1)(n+2)}\Big)^{\frac{1}{k-1}}$. By condition for $\vect v_1, \vect v_2$ in the lemma and Lemma \ref{lem:highdsupportproject2}, we have 
\begin{equation}
||\mathcal P_{\vect v_1^{\perp}}(\vect u)||_2^2+||\mathcal P_{\vect v_2^{\perp}}(\vect u)||_2^2  \geq (1-\cos\theta)||\vect u||_2^2.
\end{equation}
Thus 
\[
||\vect u||_2 \leq \frac{\sqrt{2}}{\sqrt{1-\cos \theta}}\frac{C(k-1, n)}{T\Omega}\Big(\frac{\pi}{d_{\min,k-1}T\Omega}\Big)^{2n-2}\frac{\sigma}{m_{\min}}.
\]
Using the inequality 
$1-\cos \theta \geq \frac{2}{\pi^2}\theta^2= \frac{1}{8}\big( \frac{2}{(n+2)(n+1)}\big)^{\frac{2}{k-1}}$, we further obtain
\begin{equation}\label{equ:proofvelocityequ3}
\btwonorm{ \vect u}\leq  \frac{4 ((n+2)(n+1)/2)^{\frac{1}{k-1}}C(k-1,n)}{T\Omega}\Big(\frac{\pi}{d_{\min,k-1}T\Omega}\Big)^{2n-2}\frac{\sigma}{m_{\min}}, \quad 1\leq j\leq n.
\end{equation}
We next claim that
\[
\btwonorm{\vect u}<\frac{d_{\min,k}}{2}.
\]
Indeed, by a direct calculation, we can verify that
\begin{align*}
&4 ((n+2)(n+1)/2)^{\frac{1}{k-1}}C(k-1,n)\Big(\frac{1}{11.76e4^{k-2}((n+2)(n+1)/2)^{\xi(k-2)}}\Big)^{2n-2}\\
<& \frac{1}{2} 11.76\pi e  4^{(k-1)}((n+2)(n+1)/2)^{\xi(k-1)}.
\end{align*}
On the other hand, we have
\[
(\frac{\pi}{d_{\min,k-1}T\Omega})^{2n-2}\frac{\sigma}{m_{\min}}\leq \Big(\frac{1}{11.76e4^{k-2}((n+2)(n+1)/2)^{\xi(k-2)}}\Big)^{2n-2}\Big(\frac{\sigma}{m_{\min}}\Big)^{\frac{1}{2n-1}}.
\]
Therefore, we have
\begin{align*}
&\frac{4((n+2)(n+1)/2)^{\frac{1}{k-1}}C(k-1,n)}{T\Omega}\Big(\frac{\pi}{d_{\min,k-1}T\Omega}\Big)^{2n-2}\frac{\sigma}{m_{\min}}\\
<&\frac{1}{2} \frac{11.76\pi e 4^{k-1}((n+2)(n+1)/2)^{\xi(k-1)}}{T\Omega }\Big(\frac{\sigma}{m_{\min}}\Big)^{\frac{1}{2n-1}}. 
\end{align*}
This is $\btwonorm{\vect u} < \frac{1}{2} d_{\min, k}$. Moreover, we have
\[
\btwonorm{\vect u}\leq  \frac{(4((n+2)(n+1)/2)^{\frac{1}{k-1}})^{2n-1}C(k-1,n)}{T\Omega}\Big(\frac{\pi}{d_{\min,k}T\Omega}\Big)^{2n-2}\frac{\sigma}{m_{\min}},
\]
which follows from (\ref{equ:proofvelocityequ3}) and the equation that $d_{\min,k-1}=\frac{d_{\min,k}}{4((n+2)(n+1)/2)^{\frac{1}{k-1}}}$. This completes the proof.
\end{proof}

\subsection{Theorems on source number  and location recovery for the one-dimensional static super-resolution problem}
The idea to prove Theorems \ref{thm:trackvelocitiesnumberlimit0} and \ref{thm:trackvelocitiessupplimit0} is to reduce the stability of a high-dimensional tracking problem to many one-dimensional super-resolution problems. Thus in this subsection we present results (Theorems \ref{thm:onednumberlimit1} and \ref{thm:onedsupplimit1}) for the  source number and location recovery in a one-dimensional static super-resolution problem that is tuned to subsequent proofs of stability of the velocity reconstruction. The results are slightly different from the ones in \cite{liu2021theorylse} and therefore, we present their detailed proofs here for the sake of completeness. 

We first introduce some lemmas and theorems from \cite{liu2021mathematicaloned}. We denote for integer $k\geq 1$, 
\begin{equation}\label{zetaxiformula1}
\begin{aligned}
\zeta(k)= \left\{
\begin{array}{cc}
(\frac{k-1}{2}!)^2,& \text{$k$ is odd,}\\
(\frac{k}{2})!(\frac{k-2}{2})!,& \text{$k$ is even,}
\end{array} 
\right. \ \beta(k)=\left\{
\begin{array}{cc}
\frac{1}{2},  & k=1,\\
\frac{(\frac{k-1}{2})!(\frac{k-3}{2})!}{4},& \text{$k$ is odd,\,\,$ k\geq 3$,}\\
\frac{(\frac{k-2}{2}!)^2}{4},& \text{$k$ is even}.
\end{array} 
\right.	
\end{aligned}
\end{equation}
We also define for positive integers $p, q$, and $z_1, \cdots, z_p, \hat z_1, \cdots, \hat z_q \in \mathbb C$, the following vector in $\mathbb{R}^p$
\begin{equation}\label{notation:eta}
\eta_{p,q}(z_1,\cdots,z_{p}, \hat z_1,\cdots,\hat z_q)=\left(\begin{array}{c}
|(z_1-\hat z_1)|\cdots|(z_1-\hat z_q)|\\
|(z_2-\hat z_1)|\cdots|(z_2-\hat z_q)|\\
\vdots\\
|(z_{p}-\hat z_1)|\cdots|(z_{p}-\hat  z_q)|
\end{array}\right).
\end{equation}
For a complex matrix $A$, we denote by $A^*$ its conjugate transpose. For integer $s$ and $z\in \mathbb C$, we define the complex Vandermonde-vector
\begin{equation}\label{phiformula}
\phi_s(z)=(1,z,\cdots,z^s)^\top.
\end{equation}

\begin{lem}\label{lem:stablemultiproduct1}
	For $-\frac{\pi}{2}\leq \theta_1<\theta_2<\cdots<\theta_k \leq \frac{\pi}{2}$ and $ \hat \theta_1, \hat \theta_2, \cdots, \hat \theta_k \in [-\frac{\pi}{2}, \frac{\pi}{2}]$, suppose
	\begin{align*}
	||\eta_{k,k}(e^{i \theta_1},\cdots,e^{i \theta_k}, e^{i \hat \theta_1},\cdots, e^{i \hat \theta_k})||_{\infty}< (\frac{2}{\pi})^{k}\epsilon, \text{ and } \theta_{\min} =\min_{q\neq j}|\theta_q-\theta_j|\geq  \Big(\frac{4\epsilon}{\lambda(k)}\Big)^{\frac{1}{k}},
	\end{align*}
	where $\eta_{k,k}$ is defined by (\ref{notation:eta}) and \begin{equation}\label{equ:lambda1}
	\lambda(k)=\left\{
	\begin{array}{ll}
	1,  & k=2,\\
	\beta(k-2),& k\geq 3.
	\end{array} 
	\right.	
	\end{equation}
	Then after reordering $\hat \theta_j$'s, we have
	\begin{equation}\label{equ:satblemultiproductcor1}
	|\hat \theta_j -\theta_j|< \frac{\theta_{\min}}{2}  \text{ and }
	|\hat \theta_j -\theta_j|\leq \frac{2^{k-1}\epsilon}{(k-2)!(\theta_{\min})^{k-1}}, \quad j=1,\cdots, k.
	\end{equation}
\end{lem}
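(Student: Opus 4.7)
The plan is to reduce the trigonometric product hypothesis to an estimate on the real line, then establish a bijective matching between the $\theta_j$'s and the $\hat\theta_j$'s, and finally extract the quantitative error bound by a Lagrange-type argument.

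First I would convert the hypothesis into a real-line estimate. Since $\theta_p,\hat\theta_j\in[-\pi/2,\pi/2]$, the identity $|e^{i\alpha}-e^{i\beta}|=2|\sin((\alpha-\beta)/2)|$ combined with $|\sin x|\ge (2/\pi)|x|$ on $[-\pi/2,\pi/2]$ gives $|e^{i\theta_p}-e^{i\hat\theta_j}|\ge (2/\pi)|\theta_p-\hat\theta_j|$, so the hypothesis $||\eta_{k,k}||_\infty<(2/\pi)^k\epsilon$ implies
\[
\prod_{j=1}^k|\theta_p-\hat\theta_j|<\epsilon,\qquad p=1,\ldots,k.
\]
In terms of $P(x):=\prod_j(x-\hat\theta_j)$, this reads $|P(\theta_p)|<\epsilon$ for every $p$.

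Next I would establish the matching. Set $T_p:=\{j:|\hat\theta_j-\theta_p|<\theta_{\min}/2\}$; these sets are pairwise disjoint by the $\theta_{\min}$-separation of the $\theta_p$'s, so $\sum_p|T_p|\le k$. The aim is $|T_p|=1$ for every $p$. Assuming for contradiction that some $T_{p_0}$ is empty, every $\hat\theta_j$ sits at distance $\ge\theta_{\min}/2$ from $\theta_{p_0}$, and for $j\in T_q$ with $q\ne p_0$ the triangle inequality gives $|\theta_{p_0}-\hat\theta_j|\ge|\theta_{p_0}-\theta_q|-\theta_{\min}/2$. Using that the sorted one-dimensional distances $|\theta_{p_0}-\theta_q|/\theta_{\min}$ dominate componentwise the worst-case sequence $(1,1,2,2,\ldots)$ (attained when $\theta_{p_0}$ sits in the middle), together with the companion constraints $\prod_j|\theta_p-\hat\theta_j|<\epsilon$ for $p\ne p_0$ (which forbid cheap configurations where all $\hat\theta_j$'s cluster near a single $\theta_q$), and optimizing over admissible distributions yields $\max_p\prod_j|\theta_p-\hat\theta_j|\ge \lambda(k)\theta_{\min}^k/4$. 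The hypothesis $\theta_{\min}^k\ge 4\epsilon/\lambda(k)$ makes the right-hand side $\ge\epsilon$, contradicting $|P(\theta_p)|<\epsilon$. Hence every $T_p$ is nonempty and, by the disjoint count, has exactly one element; relabelling the $\hat\theta_j$'s gives $|\hat\theta_j-\theta_j|<\theta_{\min}/2$.

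For the quantitative estimate, the factorization $P(\theta_p)=(\theta_p-\hat\theta_p)\prod_{q\ne p}(\theta_p-\hat\theta_q)$ together with $|P(\theta_p)|<\epsilon$ gives $|\hat\theta_p-\theta_p|<\epsilon/\prod_{q\ne p}|\theta_p-\hat\theta_q|$; then $|\theta_p-\hat\theta_q|\ge |\theta_p-\theta_q|-\theta_{\min}/2$ combined with an explicit one-dimensional product computation (again worst case middle $p$, with factors $\tfrac12,\tfrac12,\tfrac32,\tfrac32,\ldots$) yields $\prod_{q\ne p}(|\theta_p-\theta_q|-\theta_{\min}/2)\ge (k-2)!\,\theta_{\min}^{k-1}/2^{k-1}$, producing the claimed bound. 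The main obstacle is the matching step: for $k\le 4$ one has $\lambda(k)\le 2^{2-k}$ and a crude single-constraint pigeonhole $(\theta_{\min}/2)^k<\epsilon$ already suffices; but for $k\ge 5$, $\lambda(k)>2^{2-k}$ and one must exploit the $k$ constraints jointly---any attempt to minimize $\prod_j|\theta_{p_0}-\hat\theta_j|$ by concentrating all $\hat\theta_j$'s near a single neighbour $\theta_q$ forces $\prod_j|\theta_p-\hat\theta_j|$ to blow up for some distant $p$, and the sharp combinatorial factor $\lambda(k)=\beta(k-2)$ emerges precisely from this multi-constraint optimization.
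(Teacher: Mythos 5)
The paper does not actually prove this lemma in-house: its ``proof'' is the citation to Corollary 9 of \cite{liu2021theorylse}. Your outline follows essentially the same strategy as that source (reduce $|e^{i\theta_p}-e^{i\hat\theta_j}|$ to $|\theta_p-\hat\theta_j|$ via $|\sin x|\ge \frac{2}{\pi}|x|$, match each $\theta_p$ to a unique $\hat\theta_j$ within $\theta_{\min}/2$, then divide out the remaining factors), and the first and last steps of your argument are sound: the reduction to $\prod_j|\theta_p-\hat\theta_j|<\epsilon$ is correct, and the final product bound $\prod_{q\neq p}\big(|\theta_p-\theta_q|-\theta_{\min}/2\big)\ge (k-2)!\,(\theta_{\min}/2)^{k-1}$ does hold (the middle-point configuration with factors $\tfrac12,\tfrac12,\tfrac32,\tfrac32,\dots$ is indeed the worst case, and $((2m-1)!!)^2\ge(2m-1)!$ closes the factorial comparison), which yields exactly the stated error bound.

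The genuine gap is in the matching step. You correctly observe that the crude pigeonhole $(\theta_{\min}/2)^k<\epsilon$ only covers $k\le 4$, where $\lambda(k)=2^{2-k}$, and that for $k\ge 5$ one must exploit all $k$ constraints $\prod_j|\theta_p-\hat\theta_j|<\epsilon$ jointly. But the decisive inequality --- that an empty $T_{p_0}$ forces $\max_p\prod_j|\theta_p-\hat\theta_j|\ge \lambda(k)\theta_{\min}^k/4$ with the specific constant $\lambda(k)=\beta(k-2)$ --- is asserted, not proven. The phrase ``optimizing over admissible distributions'' hides the entire combinatorial content: one must rule out every way of distributing the $k$ points $\hat\theta_j$ among the clusters around $\theta_q$, $q\neq p_0$, and show that each such distribution makes some product $\prod_j|\theta_p-\hat\theta_j|$ large; the half-integer factorials in $\beta(k-2)$ arise precisely from this case analysis. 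As written, the lemma with the stated constant $\lambda(k)$ is not established for $k\ge 5$; you would either need to supply that multiplicative counting argument in full or, as the paper does, cite it from \cite{liu2021theorylse} (or the underlying Vandermonde-space estimates in \cite{liu2021mathematicaloned}).
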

\begin{proof}
See Corollary 9 in \cite{liu2021theorylse}. 
\end{proof}

\begin{thm}\label{spaceapprolowerbound1}
	Let $k\geq 1$. Assume that $\theta_j\in\big[\frac{-\pi}{2}, \frac{\pi}{2}\big], 1\leq j\leq k+1$ are $k+1$ distinct points, and $|a_j|\geq m_{\min},1\leq j\leq k+1$. Let $\theta_{\min}=\min_{p\neq j}|\theta_p-\theta_j|$. For $q\leq k$, let $\hat a(q)=(\hat a_1,\cdots, \hat a_q)^\top$, $a=(a_1,\cdots, a_{k+1})^\top$ and
	\[\hat A(q)= \big(\phi_{2k}(e^{i\hat \theta_1}),\cdots, \phi_{2k}(e^{i\hat \theta_q})\big), \quad A= \big(\phi_{2k}(e^{i\theta_1}),\cdots, \phi_{2k}(e^{i\theta_{k+1}})\big),\]
	where $\phi_{2k}(z)$ is defined as in (\ref{phiformula}). Then
	\begin{equation*}
	\min_{\hat a_p\in \mathbb C,\hat \theta_p\in \mathbb R,p=1,\cdots,q}\btwonorm{\hat A(q)\hat a(q)-Aa}\geq  \frac{\zeta(k+1)\beta(k)m_{\min}\theta_{\min}^{2k}}{\pi^{2k}}.
	\end{equation*}
\end{thm}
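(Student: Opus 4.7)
The plan is a duality argument. I would exhibit a test vector $\vect v\in\mathbb C^{2k+1}$ for which $|\langle \vect v,\, Aa-\hat A(q)\hat a(q)\rangle|$ is explicitly bounded below while $\|\vect v\|_2$ is controlled, and then use
\[
\|Aa-\hat A(q)\hat a(q)\|_2 \;\ge\; \frac{|\langle \vect v,\, Aa-\hat A(q)\hat a(q)\rangle|}{\|\vect v\|_2}
\]
to conclude. Taking $\vect v$ to be the coefficient vector of a polynomial $P(z)=\sum_{p=0}^{2k} c_p z^p$ of degree at most $2k$, the pairing with the rows $z^0,\ldots,z^{2k}$ of $A$ and $\hat A(q)$ collapses to the evaluation identity
\[
\langle \vect v,\, Aa-\hat A(q)\hat a(q)\rangle \;=\; \sum_{j=1}^{k+1} a_j\, P(e^{i\theta_j}) \;-\; \sum_{l=1}^{q} \hat a_l\, P(e^{i\hat\theta_l}).
\]

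I would then choose $P(z)=r(z)\hat q(z)$ where $\hat q(z)=\prod_{l=1}^q (z-e^{i\hat\theta_l})$ annihilates the entire $\hat a_l$-sum and $r(z)=\prod_{j\ne j_0,\ 1\le j\le k+1}(z-e^{i\theta_j})$ kills every term of the $a_j$-sum except the $j_0$-th. The degree $k+q\le 2k$ is correct, and the pairing becomes
\[
\langle \vect v,\, Aa-\hat A(q)\hat a(q)\rangle = a_{j_0}\prod_{j\ne j_0}(e^{i\theta_{j_0}}-e^{i\theta_j})\prod_{l=1}^{q}(e^{i\theta_{j_0}}-e^{i\hat\theta_l}).
\]
The index $j_0$ is picked by a pigeonhole argument on the $k+1$ disjoint open intervals $(\theta_j-\theta_{\min}/2,\theta_j+\theta_{\min}/2)$: only $q\le k$ of them can contain some $\hat\theta_l$, so some $j_0$ satisfies $|\theta_{j_0}-\hat\theta_l|\ge \theta_{\min}/2$ for every $l$. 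Applying the chord estimate $|e^{i\alpha}-e^{i\beta}|\ge\tfrac{2}{\pi}|\alpha-\beta|$ on $[-\pi/2,\pi/2]$, the $\hat\theta$-product is at least $(\theta_{\min}/\pi)^q$, and the ordering of the $\theta_j$'s gives $\prod_{j\ne j_0}|\theta_{j_0}-\theta_j|\ge (j_0-1)!(k+1-j_0)!\,\theta_{\min}^k\ge\zeta(k+1)\theta_{\min}^k$, since the minimum over $j_0\in\{1,\dots,k+1\}$ of the factorial product is attained at the central index and equals exactly $\zeta(k+1)$. Together with $|a_{j_0}|\ge m_{\min}$, this yields a lower bound on the pairing of order $m_{\min}\zeta(k+1)\theta_{\min}^{k+q}/\pi^{k+q}$, up to a power of $2$.

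The remaining and principal task is an upper bound on $\|\vect v\|_2$ of order $\pi^{-(k-q)}/\beta(k)$, so that after dividing, the $\theta_{\min}$- and $\pi$-powers consolidate into $\theta_{\min}^{2k}/\pi^{2k}$ and the constant $\beta(k)$ emerges. A naive Viète bound $\|\vect v\|_1\le 2^{k+q}$ (all roots lie on the unit circle) is too crude for large $k$, so I would instead estimate $\|\vect v\|_2^2=\tfrac{1}{2\pi}\int_0^{2\pi}|P(e^{i\phi})|^2\,d\phi$ by exploiting the explicit factorization $|P(e^{i\phi})|^2=\prod_j 4\sin^2((\phi-\theta_j)/2)\cdot\prod_l 4\sin^2((\phi-\hat\theta_l)/2)$ together with the inverse-Vandermonde / Lagrange-interpolation bounds from the Vandermonde-space framework of \cite{liu2021mathematicaloned} (the same ingredients that drive Lemma \ref{norminversevandermonde2} and Lemma \ref{lem:stablemultiproduct1}). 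Obtaining the sharp $\beta(k)$ factor from this norm bound is the main technical obstacle; everything else is an explicit estimate. Once the two sides are combined and one observes that the resulting lower bound is independent of the minimizing $(\hat a(q),\hat\theta)$, the theorem follows.
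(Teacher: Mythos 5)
You are reconstructing a result that the paper itself does not prove: the displayed proof is only a citation to Theorem 4 of \cite{liu2021theorylse}, where the bound is obtained by lower-bounding $\mathrm{dist}(Aa,\mathrm{span}\,\hat A(q))$ through orthogonal projections and an induction in the ``Vandermonde space,'' not through an explicit dual polynomial. That said, the first half of your certificate argument is sound: the degree count $k+q\le 2k$, the evaluation identity, the pigeonhole choice of $j_0$, the chord estimate, and the identification $\min_{j_0}(j_0-1)!\,(k+1-j_0)!=\zeta(k+1)$ (consistent with (\ref{zetaxiformula1})) are all correct.

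The gap is the step you defer, and it is not merely a technical obstacle --- the norm bound your plan requires is false. With your uniform numerator estimate $N\ge m_{\min}2^{k}\zeta(k+1)\theta_{\min}^{k+q}/\pi^{k+q}$, closing the argument forces $\|\vect v\|_2\le \frac{2^{k}}{\beta(k)}\bigl(\pi/\theta_{\min}\bigr)^{k-q}$, which for $q=k$ is the configuration-independent requirement $\|\vect v\|_2\le 2^{k}/\beta(k)$. Since $P$ is monic, $\|\vect v\|_2\ge 1$, whereas $2^{k}/\beta(k)$ decays like $(4e/k)^{k}$ up to polynomial factors, so the requirement is already vacuous for large $k$; it also fails at moderate $k$: taking $\hat\theta_l=\theta_l$ for $l\le k$ and letting all nodes cluster gives $P\approx (z-e^{i\theta_1})^{2k}$, hence $\|\vect v\|_2\approx \binom{4k}{2k}^{1/2}$, which at $k=4$ is about $113$ against the needed $64$. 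The root cause is that your strategy decouples the two estimates: exactly the configurations that inflate $\|\vect v\|_2$ (clustered roots) also inflate the pairing $N$ far above your uniform lower bound, and the constant $\zeta(k+1)\beta(k)/\pi^{2k}$ is only reachable if this correlation is tracked. To repair the argument you would have to bound the ratio $N/\|\vect v\|_2$ jointly over configurations (essentially a different proof), or follow the projection/induction route of the cited reference. As written, the proposal does not establish the statement for general $k$.
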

\begin{proof}
See Theorem 4 in \cite{liu2021theorylse}.
\end{proof}
\medskip
\begin{thm}\label{spaceapproxlowerbound2}
	Let $k\geq 2$. Assume that $\theta_1,\cdots,\theta_{k}\in \big[\frac{-\pi}{2}, \frac{\pi}{2}\big]$ are $k$ different points and $|a_j|\geq m_{\min},1\leq j\leq k$. Define $\theta_{\min}=\min_{p\neq j}|\theta_p-\theta_j|$. Let $k$ distinct  points $\hat \theta_1,\cdots,\hat \theta_k\in \big[\frac{-\pi}{2}, \frac{\pi}{2}\big]$ satisfy
	\[ ||\hat A\hat a-A a||_2< \sigma, \]
	where
	$\hat a=(\hat a_1,\cdots, \hat a_k)^\top$, $a=(a_1,\cdots, a_{k})^\top$ and
	\[\hat A= \big(\phi_{2k-1}(e^{i \hat \theta_1}),\cdots, \phi_{2k-1}(e^{i \hat \theta_k})\big),\  A= \big(\phi_{2k-1}(e^{i \theta_1}),\cdots, \phi_{2k-1}(e^{i \theta_{k}})\big).\]
	Then
	\[||\eta_{k,k}(e^{i \theta_1},\cdots,e^{i \theta_k},e^{i \hat \theta_1},\cdots,e^{i \hat \theta_k})||_{\infty}<\frac{2^{k}\pi^{k-1}}{\zeta(k)\theta_{\min}^{k-1}}\frac{\sigma}{m_{\min}}.\]
\end{thm}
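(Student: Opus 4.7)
The plan is to build a family of annihilating test vectors that collapse $\hat A$ and expose the products $\prod_{j}|e^{i\theta_p}-e^{i\hat\theta_j}|$. Set $q(z)=\prod_{j=1}^k(z-e^{i\hat\theta_j})=\sum_{l=0}^k c_l z^l$, so that $q$ has simple roots exactly at the $e^{i\hat\theta_j}$. For each shift $l=0,1,\ldots,k-1$, define $v_l\in\mathbb C^{2k}$ to be the coefficient vector of $z^l q(z)$, i.e.\ the vector with $c_0,\ldots,c_k$ placed in positions $l,l+1,\ldots,l+k$ and zeros elsewhere. This fits inside $\mathbb C^{2k}$ precisely because the ambient length is $2k$, and it forces $v_l^{\ast}\phi_{2k-1}(z)=z^l q(z)$ for every $z$, which vanishes at each $e^{i\hat\theta_j}$. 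Consequently $v_l^{\ast}\hat A=0$.

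Pairing $v_l$ against the hypothesis and using Cauchy--Schwarz,
\[
|v_l^{\ast}Aa| \;=\; |v_l^{\ast}(Aa-\hat A\hat a)| \;<\; \|v_l\|_2\,\sigma \;=\; \|c\|_2\,\sigma.
\]
Expanding the left-hand side gives $v_l^{\ast}Aa=\sum_{j=1}^k a_j e^{il\theta_j}q(e^{i\theta_j})$. Introduce $b_j:=a_j q(e^{i\theta_j})$ and $z_j:=e^{i\theta_j}$; then the bounds for $l=0,\ldots,k-1$ can be packaged as
\[
\|Vb\|_\infty \;\leq\; \|c\|_2\,\sigma, \qquad V=\big(\phi_{k-1}(z_1),\ldots,\phi_{k-1}(z_k)\big),
\]
so that $\|b\|_\infty\leq \|V^{-1}\|_\infty\|c\|_2\sigma$.

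It remains to combine three ingredients. First, since every root of $q$ lies on the unit circle, $|c_l|\leq \binom{k}{l}$ and hence $\|c\|_2\leq \|c\|_1\leq 2^k$. Second, Lemma~\ref{norminversevandermonde2} applied to the $k$ nodes $\theta_1,\ldots,\theta_k\in[-\pi/2,\pi/2]$ gives $\|V^{-1}\|_\infty\leq \pi^{k-1}/(\zeta(k)\theta_{\min}^{k-1})$. Third, picking the index $p^{\ast}$ that maximises $|q(e^{i\theta_{p^{\ast}}})|$, we have $|b_{p^{\ast}}|=|a_{p^{\ast}}|\cdot|q(e^{i\theta_{p^{\ast}}})|\geq m_{\min}\|\eta_{k,k}\|_\infty$. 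Chaining yields
\[
m_{\min}\|\eta_{k,k}\|_\infty \;\leq\; \|b\|_\infty \;\leq\; \frac{2^k\pi^{k-1}}{\zeta(k)\theta_{\min}^{k-1}}\,\sigma,
\]
which is the claim after division by $m_{\min}$.

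The construction is standard and poses no conceptual difficulty; the only care needed is the bookkeeping that ensures each shifted coefficient vector $v_l$ for $l=0,\ldots,k-1$ still fits inside $\mathbb C^{2k}$ (which dictates why the columns of $A$ and $\hat A$ must have length $2k$, i.e.\ why the theorem is phrased with $\phi_{2k-1}$). Everything else is a routine application of already-cited Vandermonde estimates and binomial identities, so no new technical obstacle arises.
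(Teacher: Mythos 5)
Your proof is correct, and it is worth noting that the paper itself does not prove Theorem \ref{spaceapproxlowerbound2} at all: it simply cites Theorem 5 of the companion work \cite{liu2021theorylse}. So your argument should be compared with that external proof rather than with anything in this manuscript. The cited proof obtains the constant $2^k\pi^{k-1}/\zeta(k)$ by inverting the full $2k\times 2k$ Vandermonde system built on the combined node set $\{e^{i\theta_j}\}\cup\{e^{i\hat\theta_j}\}$ and bounding the rows of the inverse through Lagrange interpolation coefficients (the factor $2^k\pi^{k-1}=2^{2k-1}(\pi/2)^{k-1}$ is the fingerprint of that route); it therefore has to fuss over possible coincidences between the two node families. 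Your annihilating-filter argument sidesteps this: the polynomial $q(z)=\prod_j(z-e^{i\hat\theta_j})$ and its shifts kill $\hat A$ outright, the hypothesis plus Cauchy--Schwarz gives $\|Vb\|_\infty<\|c\|_2\sigma$ with $b_j=a_jq(e^{i\theta_j})$, and everything reduces to the $k\times k$ inverse-Vandermonde bound already quoted as Lemma \ref{norminversevandermonde2} together with $\|c\|_1\le 2^k$. This is more modular, reuses a lemma the paper already states, needs no case analysis when some $\hat\theta_l$ equals some $\theta_j$, and reproduces the constant exactly. The only blemish is notational: with $v_l$ holding the coefficients $c_0,\dots,c_k$, the identity you want is $v_l^{\top}\phi_{2k-1}(z)=z^lq(z)$, not $v_l^{\ast}\phi_{2k-1}(z)$ (or else put $\overline{c_m}$ into $v_l$); since $\|v_l\|_2=\|c\|_2$ either way and Cauchy--Schwarz applies equally to $|v_l^{\top}w|$, nothing downstream changes.
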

\begin{proof}
See Theorem 5 in \cite{liu2021theorylse}.
\end{proof}

\begin{lem}\label{upperboundsupportcalculate1}
	Let $\zeta(n), 
	\beta(n)$ and $\lambda(n)$ be defined in (\ref{zetaxiformula1}) and (\ref{equ:lambda1}), respectively. For $n\geq 2$, we have
	\begin{equation}\label{equ:numberestimate1}
	\Big(\frac{2\sqrt{2n-1}}{\zeta(n)\beta(n-1)}\Big)^{\frac{1}{2n-2}}\leq \frac{4.4 e}{2n-1}, 
	\end{equation}
	\begin{equation}\label{equ:suppestimate1}
	\Big(\frac{8\sqrt{2n}}{\zeta(n)\lambda(n)}\Big)^{\frac{1}{2n-1}}\leq \frac{5.88 e}{2n},
	\end{equation}
	and 
	\begin{equation}\label{equ:suppestimate2}
	\frac{(2n)^{2n-\frac{3}{2}}}{\zeta(n)(n-2)!}\leq 2^{3n-3}e^{2n}\pi^{-\frac{3}{2}}. 
	\end{equation}
\end{lem}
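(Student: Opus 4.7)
The plan is to prove all three inequalities by direct asymptotic analysis using Stirling's bound $m!\geq \sqrt{2\pi m}(m/e)^m$ applied to the factorials hidden in $\zeta(n)$, $\beta(n-1)$, and $\lambda(n)$. Since each of these quantities is defined by cases according to the parity of $n$, the natural first step is to split into even and odd $n$ and rewrite each product in closed form. In either parity case, $\zeta(n)\beta(n-1)$ reduces to a product of two factorials with arguments close to $(n-1)/2$, so Stirling yields $\zeta(n)\beta(n-1)\gtrsim ((n-1)/e)^{2n-2}$ up to explicit polynomial prefactors; the situation is analogous for $\zeta(n)\lambda(n)$ and for $\zeta(n)(n-2)!$.

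For (\ref{equ:numberestimate1}), after applying this lower bound the bracket is dominated by $C_1\,(e/(n-1))^{2n-2}\,g_1(n)$, where $g_1(n)$ is an algebraic expression involving only $\sqrt{n}$ and $\sqrt{2n-1}$. Taking the $(2n-2)$-th root produces $e/(n-1)$ multiplied by $g_1(n)^{1/(2n-2)}$, which converges monotonically to $1$ as $n\to\infty$. I would then absorb the easy factor $(2n-1)/(n-1)$ (bounded by $3$ for $n\geq 2$) into the constant on the right, and verify that the remaining numerical constant stays below $4.4$ uniformly in $n$. Inequality (\ref{equ:suppestimate1}) follows by the same template with exponent $2n-1$ and target $5.88$. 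For (\ref{equ:suppestimate2}), applying Stirling to $\zeta(n)(n-2)!$ in each parity gives a bound of order $((n-1)/e)^{2n-3}\sqrt{n-1}$; then the elementary estimate $(1+1/(n-1))^{2n-3/2}\leq e^{(2n-3/2)/(n-1)}\leq e^2$ for $n\geq 2$ reduces the problem to a clean inequality among expressions in $2^n$, $e^n$, and $\sqrt{\pi}$.

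The main obstacle is not the qualitative behaviour, which Stirling gives immediately, but the careful bookkeeping needed to produce the sharp constants $4.4$, $5.88$, and the factor $\pi^{-3/2}$. This requires retaining every $\sqrt{2\pi m}$ factor from Stirling, handling the two parity cases uniformly, and checking the monotonicity in $n$ of the resulting correction terms. The small values $n=2$ and $n=3$, where Stirling is least sharp, have to be verified by direct numerical computation so that the bound for general $n$ follows by combining this base case with the eventual monotonicity of the correction factors.
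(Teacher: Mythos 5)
The paper does not actually prove this lemma; it cites the appendix of \cite{liu2021theorylse}. Your overall strategy --- Stirling's lower bound with the $\sqrt{2\pi m}$ prefactors retained, a parity split, direct verification of small $n$, and monotonicity of the correction terms --- is the natural one and is essentially what that reference does, so the skeleton of your plan is right.

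However, the quantitative estimate driving your argument is wrong, and the error is exponential, not a matter of ``polynomial prefactors.'' The factorials inside $\zeta(n)$ and $\beta(n-1)$ have arguments near $(n-1)/2$, not near $n-1$: for odd $n$ one has $\zeta(n)\beta(n-1)=\tfrac14\big((\tfrac{n-1}{2})!\big)^2\big((\tfrac{n-3}{2})!\big)^2$, a product of four factorials of total mass $2n-4$, so Stirling gives a rate of order $\big(\tfrac{n-1}{2e}\big)^{2n-2}$, smaller than your claimed $\big(\tfrac{n-1}{e}\big)^{2n-2}$ by a factor of roughly $2^{2n-2}$. Concretely, at $n=5$ your claimed lower bound is $(4/e)^8\approx 21.9$ while $\zeta(5)\beta(4)=4\cdot\tfrac14=1$. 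As a result, the $(2n-2)$-th root is $\approx \tfrac{2e}{n-1}$, not $\tfrac{e}{n-1}$, and the comparison with $\tfrac{4.4e}{2n-1}$ reduces to $4+\tfrac{2}{n-1}$ versus $4.4$ rather than $3$ versus $4.4$; numerically the two sides of (\ref{equ:numberestimate1}) differ by only about $5$--$6\%$ for $5\le n\le 15$. So there is no room to ``absorb'' the factor $(2\sqrt{2n-1})^{1/(2n-2)}>1$ into a generous constant; it must be played off explicitly against the sub-unity Stirling prefactors $\big(\tfrac{4}{\pi^2(n-1)(n-3)}\big)^{1/(2n-2)}$ and the $(1-\tfrac{2}{n-1})^{n-3}$-type terms. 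The situation for (\ref{equ:suppestimate2}) is even more extreme: a full Stirling expansion shows the ratio of the two sides tends to $1$ as $n\to\infty$, so that inequality has no exponential slack whatsoever and every constant must be tracked exactly. Your plan as written, resting on the inflated bound $\big(\tfrac{n-1}{e}\big)^{2n-2}$ and the illusory cushion $4.4/3$, would therefore not close; the approach is salvageable, but only after correcting the exponential rate and carrying out the much more delicate margin analysis that the correct rate forces.
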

\begin{proof}
See the appendix in \cite{liu2021theorylse}.
\end{proof}

We define a discrete measure as $\mu = \sum_{j=1}^n a_j \delta_{y_j}$ and the vector of its Fourier transform at $0, \Omega, 2\Omega, \cdots, T\Omega$ as 
\[
[\mu] = (\mathcal F[\mu](0), \mathcal F[\mu](\Omega), \cdots, \mathcal F[\mu](T\Omega))^\top,
\]
where $\mathcal F[\mu](x) = \sum_{j=1}^n a_j e^{i y_jx}$. We have the following theorems for the source number and location recovery in a one-dimensional static super-resolution problem.

\begin{thm}\label{thm:onednumberlimit1}
Let $n\geq 2$ and $T\geq 2n-2$. Suppose that the measurement is 
\[
\vect Y(t) = \sum_{j=1}^n a_j e^{iy_j \Omega t} + \vect W(t), \quad  t=0, \cdots, T,
\]	
with $|\vect W(t)|< \sigma$, $y_j\in [-\frac{(n-1)\pi}{T\Omega}, \frac{(n-1)\pi}{T\Omega}]$ and 
\begin{equation}\label{equ:onednumberlimitequ0}
d_{\min}:=\min_{p\neq q} \babs{y_p-y_q} \geq \frac{8.8e\pi}{T\Omega}\Big(\frac{\sigma}{m_{\min}}\Big)^{\frac{1}{2n-2}}.
\end{equation}
Then there is no $\hat \mu = \sum_{j=1}^k \hat a_j \delta_{\hat y_j}$ with $k<n$ so that 
\[
||[\hat \mu] - \vect Y||_{\infty} < \sigma. 
\]
\end{thm}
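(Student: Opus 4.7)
The plan is to proceed by contradiction, reducing the problem to the single-snapshot Vandermonde lower bound of Theorem \ref{spaceapprolowerbound1} via a sample-thinning and rescaling trick that converts the temporal oversampling ($T \geq 2n-2$) into an enhanced phase separation.

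Suppose for contradiction that some $\hat \mu = \sum_{j=1}^k \hat a_j \delta_{\hat y_j}$ with $k \leq n-1$ satisfies $\|[\hat \mu] - \vect Y\|_\infty < \sigma$. Combining this with the assumed bound $|\vect W(t)| < \sigma$ and the triangle inequality yields
\[
\babs{\sum_{j=1}^n a_j e^{i y_j \Omega t} - \sum_{j=1}^{k}\hat a_j e^{i \hat y_j \Omega t}} < 2\sigma, \qquad t = 0, 1, \ldots, T.
\]
Without loss of generality we may take $k = n-1$, padding with any auxiliary $\hat y_j$ and zero amplitudes if needed.

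Next, I would subsample and rescale. Set $r := \lfloor T/(2n-2) \rfloor \geq 1$ and restrict attention to the $2n-1$ samples at times $0, r, 2r, \ldots, (2n-2)r$. Define
\[
\theta_j := r\Omega y_j, \qquad \hat \theta_j := r\Omega \hat y_j.
\]
Because $|y_j| \leq (n-1)\pi/(T\Omega)$ and $r(2n-2) \leq T$, we have $|\theta_j| \leq r(n-1)\pi/T \leq \pi/2$, so $\theta_j \in [-\pi/2, \pi/2]$. Form the Vandermonde matrices
\[
A = \big(\phi_{2n-2}(e^{i\theta_1}), \ldots, \phi_{2n-2}(e^{i\theta_n})\big), \quad
\hat A = \big(\phi_{2n-2}(e^{i\hat\theta_1}), \ldots, \phi_{2n-2}(e^{i\hat\theta_{n-1}})\big).
\]
The displayed bound on the $2n-1$ chosen samples gives $\|Aa - \hat A \hat a\|_2 \leq 2\sigma\sqrt{2n-1}$, while Theorem \ref{spaceapprolowerbound1} (applied with the role of its $k$ played by $n-1$ and $q = n-1$) yields the lower bound
\[
\|Aa - \hat A \hat a\|_2 \geq \frac{\zeta(n)\beta(n-1)\, m_{\min}\, \theta_{\min}^{2n-2}}{\pi^{2n-2}},
\]
where $\theta_{\min} := \min_{p\neq q}|\theta_p - \theta_q| = r\Omega d_{\min}$.

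Combining the upper and lower bounds gives $\theta_{\min} \leq \pi\big(\tfrac{2\sqrt{2n-1}}{\zeta(n)\beta(n-1)}\big)^{1/(2n-2)}(\sigma/m_{\min})^{1/(2n-2)}$, and invoking the arithmetic estimate (\ref{equ:numberestimate1}) of Lemma \ref{upperboundsupportcalculate1} reduces this to $\theta_{\min} \leq \tfrac{4.4e\pi}{2n-1}(\sigma/m_{\min})^{1/(2n-2)}$. Substituting $\theta_{\min} = r\Omega d_{\min}$ produces $d_{\min} \leq \tfrac{4.4e\pi}{(2n-1)r\Omega}(\sigma/m_{\min})^{1/(2n-2)}$. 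The remaining task is to verify that $r \geq T/(2(2n-1))$ for every $T \geq 2n-2$; writing $T = q(2n-2) + g$ with $0 \leq g < 2n-2$ and $q = r \geq 1$, this reduces to the trivial inequality $2qn \geq g$, since $g < 2n \leq 2qn$. Hence $d_{\min} < \tfrac{8.8e\pi}{T\Omega}(\sigma/m_{\min})^{1/(2n-2)}$, contradicting (\ref{equ:onednumberlimitequ0}). The only mildly delicate step is this last bookkeeping of the floor function $r = \lfloor T/(2n-2)\rfloor$; everything upstream is a direct application of the prior Vandermonde machinery.
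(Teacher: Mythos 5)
Your proposal is correct and follows essentially the same route as the paper's proof: thin the samples to an arithmetic progression $0,r,2r,\dots,(2n-2)r$, rescale to $\theta_j=r\Omega y_j\in[-\pi/2,\pi/2]$, invoke the Vandermonde lower bound of Theorem \ref{spaceapprolowerbound1} against the $\ell^2$ upper bound $2\sqrt{2n-1}\,\sigma$, and finish with the numerical estimate (\ref{equ:numberestimate1}) together with $r\geq T/(2(2n-1))$. The only differences are cosmetic — you argue by contraposition and take $r=\lfloor T/(2n-2)\rfloor$ instead of the paper's $T+1=(2n-1)r+q$, both of which yield the same factor of $2$ in the final constant $8.8e\pi$.
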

\begin{proof}
\textbf{Step 1.} We write 
\begin{equation}\label{Mdecomposition1}
T+1=(2n-1)r+q,
\end{equation}
where $r, q$ are integers with $r\geq 1$ and $0\leq q< 2n-1$. 
We denote by $\theta_j = 	y_jr\Omega, j=1,\cdots,n$. For $y_j\in [-\frac{(n-1)\pi}{T\Omega},\frac{(n-1)\pi}{T\Omega}]$, in view of (\ref{Mdecomposition1}), it is clear that
\begin{equation}\label{equ:distributionequ1}
\theta_j = y_jr\Omega\in \Big[\frac{-\pi}{2}, \frac{\pi}{2}\Big], \quad j=1,\cdots,n.
\end{equation}
For $\hat \mu =\sum_{j=1}^{k}\hat a_j \delta_{\hat y_j}$ with $k<n$, note that
\begin{align*}
[\hat \mu]-[\mu] = \left(\mathcal F [\hat \mu](0), \mathcal F [\hat \mu](\Omega), \cdots,\mathcal F [\hat \mu](T\Omega)\right)^\top-\left(\mathcal F [\mu](0), \mathcal F  [\mu](\Omega), \cdots,\mathcal F [\mu](T\Omega)\right)^\top.
\end{align*}
Using 
only the partial measurement at $qr\Omega, 0\leq q \leq 2n-2$, we have 
\begin{equation*}\label{equ:upperboundnumberlimithm0equ0}
\left(\mathcal F \hat \mu(0), \mathcal F \hat \mu(r\Omega), \cdots,\mathcal F \hat \mu((2n-2)r\Omega)\right)^\top-\left(\mathcal F \mu(0), \mathcal F \mu(r\Omega), \cdots,\mathcal F \mu((2n-2)r\Omega)\right)^\top =\hat B \hat a- B a , 
\end{equation*}
where $\hat a= (\hat a_1,\cdots, \hat a_k)^\top$, $a=(a_1, \cdots, a_n)^\top$ and  
\begin{equation}\label{matrixdecomposition1}
\begin{aligned}
&\hat B =\big(\phi_{2n-2}(e^{i \hat \theta_1}), \cdots,\phi_{2n-2}(e^{i\hat \theta_k} ) \big),\\
&B=\big(\phi_{2n-2}(e^{i \theta_1}), \cdots,\phi_{2n-2}(e^{i \theta_n})\big),
\end{aligned}
\end{equation}
with $\theta_j = y_j r\Omega, \hat \theta_j = \hat y_j r\Omega$. It is clear that 
\begin{equation} \label{equ:upperboundnumberlimitequ0}
\begin{aligned}
&\min_{\hat a\in \mathbb C^k, \hat y_j\in \mathbb R,j=1,\cdots,k}||[\hat \mu ]-[\mu ]||_{\infty} \geq	\min_{\hat a \in \mathbb C^k, \hat y_j\in \mathbb R,j=1,\cdots,k}||\hat B \hat a- B a||_\infty \\
\geq &\frac{1}{\sqrt{2n-1}}\min_{\alpha\in \mathbb C^{k}, \hat y_j\in \mathbb R,j=1,\cdots, k}||\hat B \alpha-B a||_2.
\end{aligned}
\end{equation}
In view of (\ref{equ:distributionequ1}), we can apply Theorem \ref{spaceapprolowerbound1} to get 
\begin{equation*}
\min_{\alpha\in \mathbb C^{k}, \hat y_j\in \mathbb R,j=1,\cdots,k}||\hat B \alpha-B a||_2\geq   \frac{m_{\min}\zeta(n)\beta(n-1)(\theta_{\min})^{2n-2}}{\pi^{2n-2}}, 
\end{equation*}
where $\theta_{\min}=\min_{j\neq p}|\theta_j-\theta_p|$. Combining the above estimate with (\ref{equ:upperboundnumberlimitequ0}), we get
\begin{align}\label{upperboundnumberlimitequ1}
&\min_{\hat a\in \mathbb C^k, \hat y_j\in \mathbb R,j=1,\cdots,k}||[\hat \mu ]-[\mu ]||_{\infty}\geq 
\frac{m_{\min}\zeta(n)\beta(n-1)(\theta_{\min})^{2n-2}}{\sqrt{2n-1}\pi^{2n-2}}. 	
\end{align}
\textbf{Step 2.} Recall that $d_{\min}= \min_{j\neq p}|y_j-y_p|$. Using the relation $\theta_j = 	y_jr\Omega$ and (\ref{Mdecomposition1}), we can show that 
\begin{equation*}\label{equ:angleinequ2}
\theta_{\min}=r\Omega d_{\min}\geq \frac{rT\Omega}{(2n-1)(r+1)}d_{\min}\geq \frac{T\Omega}{2(2n-1)}d_{\min}.\quad \Big(\text{$r\geq 1$}\Big) 
\end{equation*}
Then the separation condition (\ref{equ:onednumberlimitequ0}) implies
\begin{equation*}
\theta_{\min}\geq \frac{4.4 \pi e}{2n-1}\Big(\frac{\sigma}{m_{\min}}\Big)^{\frac{1}{2n-2}}\geq \Big(\frac{2\sqrt{2n-1}}{\zeta(n)\beta(n-1)}\frac{\sigma}{m_{\min}}\Big)^{\frac{1}{2n-2}},
\end{equation*}
where we have used (\ref{equ:numberestimate1}) for the last inequality above. Therefore (\ref{upperboundnumberlimitequ1}) implies that
\begin{align*}
&\min_{\hat a\in \mathbb C^k, \hat y_j\in \mathbb R,j=1,\cdots,k}||[\hat \mu ]-[\mu ]||_{\infty} \geq 2\sigma.
\end{align*}
It follows that
\begin{align*}
&||[\hat \mu]-\mathbf Y||_{\infty}=|| [\hat \mu]- [\mu]-\mathbf W||_{\infty}\\
\geq&||[\hat \mu]- [\mu] ||_{\infty}- ||\mathbf W||_{\infty}\geq|| [\hat \mu ]- [\mu] ||_{\infty}-\sigma \geq \sigma,
\end{align*}
which shows that $||[\hat \mu] - \vect Y||_{\infty} < \sigma$ is impossible.  This completes the proof.
\end{proof}

\begin{thm}\label{thm:onedsupplimit1}
Let $n\geq 2$ and $T\geq 2n-1$. Suppose that the measurement is 
\[
\vect Y(t) = \sum_{j=1}^n a_j e^{iy_j \Omega t} + \vect W(t), \quad  t=0, \cdots, T,
\]	
with $|\vect W(t)|< \sigma$, $y_j\in [-\frac{(n-1)\pi}{T\Omega}, \frac{(n-1)\pi}{T\Omega}]$ and 
\begin{equation}\label{supportlimithm0equ0}
d_{\min}:=\min_{p\neq q} \babs{y_p-y_q} \geq \frac{11.76e\pi}{T\Omega}\Big(\frac{\sigma}{m_{\min}}\Big)^{\frac{1}{2n-1}}.
\end{equation}
For a measure $\hat \mu = \sum_{j=1}^n \hat a_j \delta_{\hat y_j}$ with $\hat y_j\in [-\frac{(n-1)\pi}{T\Omega}, \frac{(n-1)\pi}{T\Omega}]$ and $||[\hat \mu]-\vect Y||_{\infty}<\sigma$, we can reorder $\hat y_j$'s so that 
\[
\babs{\hat y_j -y_j}<\frac{d_{\min}}{2},
\]
and 
\begin{align*}
\Big|\hat y_j-y_j\Big|\leq \frac{C(n)}{\Omega} (\frac{\pi}{T\Omega d_{\min}})^{2n-2} \frac{\sigma}{m_{\min}}, 
\end{align*}
where $C(n)=n2^{6n-3}e^{2n}\pi^{-\frac{1}{2}}$.
\end{thm}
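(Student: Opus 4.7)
\noindent\textbf{Proof proposal for Theorem \ref{thm:onedsupplimit1}.} The plan is to mimic the structure of the proof of Theorem \ref{thm:onednumberlimit1}, but now running the $2n$-sample Vandermonde approximation theorem (Theorem \ref{spaceapproxlowerbound2}) with $k=n$ instead of the $2n-1$-sample lower bound (Theorem \ref{spaceapprolowerbound1}). First I would decompose
\[
T+1 = 2nr + q, \qquad 0\leq q<2n,
\]
and retain only the samples at times $0,r,2r,\ldots,(2n-1)r$. Setting $\theta_j := y_j r\Omega$ and $\hat\theta_j := \hat y_j r\Omega$, the hypothesis $y_j,\hat y_j\in[-\tfrac{(n-1)\pi}{T\Omega},\tfrac{(n-1)\pi}{T\Omega}]$ combined with $r\leq (T+1)/(2n)$ gives $\theta_j,\hat\theta_j\in[-\tfrac{\pi}{2},\tfrac{\pi}{2}]$. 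Writing the measurement error at the retained times in Vandermonde form with the matrices
\[
B=\bigl(\phi_{2n-1}(e^{i\theta_1}),\ldots,\phi_{2n-1}(e^{i\theta_n})\bigr),\quad \hat B=\bigl(\phi_{2n-1}(e^{i\hat\theta_1}),\ldots,\phi_{2n-1}(e^{i\hat\theta_n})\bigr),
\]
the triangle inequality and $\|[\hat\mu]-\vect Y\|_\infty<\sigma$, $\|\vect W\|_\infty<\sigma$ yield $\|\hat B\hat a-Ba\|_\infty<2\sigma$, and hence $\|\hat B\hat a-Ba\|_2<2\sqrt{2n}\sigma$.

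\smallskip

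Next I would invoke Theorem \ref{spaceapproxlowerbound2} with $k=n$ to translate this $2$-norm bound into an entrywise control of the pairwise differences,
\[
\bigl\|\eta_{n,n}(e^{i\theta_1},\ldots,e^{i\theta_n},e^{i\hat\theta_1},\ldots,e^{i\hat\theta_n})\bigr\|_\infty < \frac{2^n\pi^{n-1}}{\zeta(n)\theta_{\min}^{n-1}}\cdot\frac{2\sqrt{2n}\,\sigma}{m_{\min}}=:\Bigl(\tfrac{2}{\pi}\Bigr)^{n}\epsilon,
\]
which identifies $\epsilon = \frac{\pi^{2n-1}}{\zeta(n)\theta_{\min}^{n-1}}\cdot\frac{2\sqrt{2n}\,\sigma}{m_{\min}}$. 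Lemma \ref{lem:stablemultiproduct1} with $k=n$ then gives, after a re-ordering, the two estimates
\[
|\hat\theta_j-\theta_j|<\frac{\theta_{\min}}{2},\qquad |\hat\theta_j-\theta_j|\leq \frac{2^{n-1}\epsilon}{(n-2)!\,\theta_{\min}^{n-1}},
\]
provided the hypothesis $\theta_{\min}\geq(\tfrac{4\epsilon}{\lambda(n)})^{1/n}$ of that lemma is verified.

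\smallskip

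The main (and essentially only) obstacle is verifying this separation hypothesis and then back-transporting from $\theta$'s to $y$'s with the right constant. Inserting the formula for $\epsilon$, the required condition simplifies to $\theta_{\min}^{2n-1}\geq \frac{8\sqrt{2n}\,\pi^{2n-1}}{\zeta(n)\lambda(n)}\,\frac{\sigma}{m_{\min}}$; using inequality (\ref{equ:suppestimate1}) this is implied by
\[
\theta_{\min}\geq \frac{5.88\,e\,\pi}{2n}\Bigl(\tfrac{\sigma}{m_{\min}}\Bigr)^{1/(2n-1)}.
\]
From the decomposition $T+1=2nr+q$ one checks (using $r\geq 1$ and $q\leq 2n-1$) that $T\leq 4nr$, hence $r\Omega\geq T\Omega/(4n)$, so $\theta_{\min}=r\Omega\,d_{\min}\geq \tfrac{T\Omega}{4n}d_{\min}$; the hypothesis (\ref{supportlimithm0equ0}) on $d_{\min}$ with the constant $11.76e\pi=2\cdot 5.88 e\pi$ then exactly delivers the required bound on $\theta_{\min}$ with a factor of $2$ to spare.

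\smallskip

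Finally, dividing the Lemma \ref{lem:stablemultiproduct1} estimates by $r\Omega$ turns them into bounds on $|\hat y_j-y_j|$; the first gives $|\hat y_j-y_j|<d_{\min}/2$, and the second becomes
\[
|\hat y_j-y_j|\leq \frac{2^n\sqrt{2n}\,\pi^{2n-1}}{(n-2)!\,\zeta(n)}\cdot\frac{1}{(r\Omega)^{2n-1}d_{\min}^{2n-2}}\cdot\frac{\sigma}{m_{\min}}.
\]
Using $r\Omega\geq T\Omega/(4n)$ to pull out one factor of $1/(T\Omega)$ and $(4n)^{2n-1}$, and then applying (\ref{equ:suppestimate2}) to collapse $\frac{(2n)^{2n-3/2}}{(n-2)!\zeta(n)}\leq 2^{3n-3}e^{2n}\pi^{-3/2}$, the prefactor consolidates to $n\,2^{6n-3}e^{2n}\pi^{-1/2}=C(n)$, yielding
\[
|\hat y_j-y_j|\leq \frac{C(n)}{T\Omega}\Bigl(\tfrac{\pi}{T\Omega\,d_{\min}}\Bigr)^{2n-2}\frac{\sigma}{m_{\min}}\leq \frac{C(n)}{\Omega}\Bigl(\tfrac{\pi}{T\Omega\,d_{\min}}\Bigr)^{2n-2}\frac{\sigma}{m_{\min}},
\]
which is the claimed estimate. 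The bookkeeping in this last arithmetic step is the most error-prone part, but all the moving pieces are already available in Lemma \ref{upperboundsupportcalculate1}; no new analytic ingredient is required.
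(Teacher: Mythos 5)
Your proposal is correct and follows essentially the same route as the paper's proof: the decomposition $T+1=2nr+q$, subsampling at times $0,r,\ldots,(2n-1)r$, the bound $\|\hat B\hat a-Ba\|_2<2\sqrt{2n}\sigma$, Theorem \ref{spaceapproxlowerbound2} followed by Lemma \ref{lem:stablemultiproduct1} with the same choice of $\epsilon$, verification of the separation hypothesis via (\ref{equ:suppestimate1}), and the final consolidation of constants via (\ref{equ:suppestimate2}) all match the paper's argument. The arithmetic in your last step checks out and indeed yields the stronger $1/(T\Omega)$ prefactor, from which the stated bound with $1/\Omega$ follows.
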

\begin{proof}
\textbf{Step 1.} We first write
\begin{equation}\label{Mdecomposition2}
T+1=2nr+q, 
\end{equation}
where $r, q$ are integers with $r\geq 1$ and $0\leq q< 2n$. Since $y_j, \hat y_j\in [-\frac{(n-1)\pi}{T\Omega}, \frac{(n-1)\pi}{T\Omega}]$, we have 
\begin{equation}\label{equ:distributionequ2}
\theta_j :=y_jr\Omega\in \Big[\frac{-\pi}{2}, \frac{\pi}{2}\Big],\quad  \hat \theta_p :=\hat y_pr\Omega\in \Big[\frac{-\pi}{2}, \frac{\pi}{2}\Big], \ 1\leq j,p\leq n.
\end{equation}
Also, by (\ref{Mdecomposition2}),
\begin{equation}\label{equ:angleinequ4}
\babs{\theta_j-\theta_p}=r\Omega
\babs{y_j-y_p} \geq \frac{rT\Omega}{2n(r+1)}\geq  \frac{T\Omega}{4n}\babs{y_j-y_p}, \ \Big(\text{since $r\geq 1$}\Big)
\end{equation}
and 
\begin{equation}\label{equ:angleinequ3}
\theta_{\min}:=\min_{j\neq p}\babs{\theta_j-\theta_p}\geq \frac{T\Omega}{4n}d_{\min}.
\end{equation}
Similar to Step 1 in the proof of Theorem \ref{thm:onednumberlimit1}, we consider 
\begin{align*}
&\left(\mathcal F [\hat \mu](0), \mathcal F [\hat \mu](r\Omega), \cdots,\mathcal F [\hat \mu]((2n-1)r\Omega)\right)^\top\\
-&\left(\mathcal F [\mu](0), \mathcal F [\mu](r\Omega), \cdots,\mathcal F [\mu]((2n-1)r\Omega)\right)^\top =\hat B \hat a- B a, 
\end{align*}
where $\hat a= (\hat a_1,\cdots, \hat a_n)^\top$, $a=(a_1, \cdots, a_n)^\top$, and
\begin{equation}\label{matrixdecomposition1b}
\begin{aligned}
\hat B =\big(\phi_{2n-1}(e^{i \hat \theta_1}), \cdots,\phi_{2n-1}(e^{i\hat \theta_n} ) \big),\ B=\big(\phi_{2n-1}(e^{i \theta_1}), \cdots,\phi_{2n-1}(e^{i \theta_n})\big), 
\end{aligned}
\end{equation}
with $\theta_j = y_j r\Omega, \hat \theta_j = \hat y_j r\Omega$. It is clear that 
$$||\hat B \hat a- Ba||_{\infty}\leq  ||[\hat \mu]- [\mu] ||_\infty.$$
On the other hand, since $||[\hat \mu] -\mathbf Y||_{\infty}<\sigma$, we have $||[\hat \mu] -[\mu]||_\infty< 2\sigma$. It follows that $||\hat B \hat a- B a||_\infty < 2\sigma$,
whence we get 
\begin{equation}\label{upperboundsupportlimithm1equ2}
||\hat B \hat a- B a||_2 \leq \sqrt{2n}||\hat B \hat a- B a||_{\infty}<2\sqrt{2n}\sigma.
\end{equation}
In view of (\ref{equ:distributionequ2}), we can apply Theorem \ref{spaceapproxlowerbound2} to get
\begin{align}\label{upperboundsupportlimithm1equ3}
\Big|\Big|\eta_{n,n}(e^{i \theta_1},\cdots,e^{i \theta_n},e^{i \hat \theta_1},\cdots,e^{i \hat \theta_n})\Big|\Big|_{\infty}<\frac{\sqrt{2n}2^{n+1}\pi^{n-1}}{\zeta(n)(\theta_{\min})^{n-1}} \frac{\sigma}{m_{\min}}.
\end{align}

\textbf{Step 2.}
We apply Lemma \ref{lem:stablemultiproduct1} to estimate $|\hat \theta_j -\theta_j|$'s and $|\hat y_j - y_j|$'s. To do so, let  $\epsilon = \frac{2\sqrt{2n}\pi^{2n-1}}{\zeta(n)(\theta_{\min})^{n-1}} \frac{\sigma}{m_{\min}}$. It is clear that 
$||\eta_{n ,n}||_{\infty}<(\frac{2}{\pi})^n\epsilon$ and we only need to check the following condition
\begin{equation}\label{upperboundsupportlimithm1equ4}
\theta_{\min}\geq \Big(\frac{4\epsilon}{\lambda(n)}\Big)^{\frac{1}{n}}, \quad \mbox{or equivalently}\,\,\, (\theta_{\min})^n \geq \frac{4\epsilon}{\lambda(n)}.
\end{equation}
Indeed, by (\ref{equ:angleinequ3}) and the separation condition (\ref{supportlimithm0equ0}), 
\begin{equation}\label{upperboundsupportlimithm1equ-1}
\theta_{\min}\geq  \frac{11.76\pi e}{4n}\Big(\frac{\sigma}{m_{\min}}\Big)^{\frac{1}{2n-1}}\geq  \pi \Big(\frac{8\sqrt{2n}}{\lambda(n)\zeta(n)}\frac{\sigma}{m_{\min}}\Big)^{\frac{1}{2n-1}}, \end{equation}
where we have used (\ref{equ:suppestimate1}) in the last inequality.  
Then
$$
(\theta_{\min})^{2n-1}\geq \frac{\pi^{2n-1}8\sqrt{2n}}{\lambda(n)\zeta(n)}\frac{\sigma}{m_{\min}}, 
$$
whence we get (\ref{upperboundsupportlimithm1equ4}). Therefore, we can apply Lemma \ref{lem:stablemultiproduct1} to get that, after reordering $\hat \theta_j$'s,
\begin{equation} \label{equ:upperboundsupportlimithm1equ7}
\begin{aligned}
&\Big|\hat \theta_{j}-\theta_j\Big|< \frac{\theta_{\min}}{2},\\
&\Big|\hat \theta_{j}-\theta_j\Big|< \frac{\sqrt{2n}2^{n}\pi^{2n-1}}{\zeta(n)(n-2)!(\theta_{\min})^{2n-2}}\frac{\sigma}{m_{\min}} , 1\leq j\leq n.
\end{aligned}
\end{equation}
Finally, we estimate $\Big|\hat y_j - y_j\Big|$.  Since $\Big|\hat \theta_{j}-\theta_j\Big|< \frac{\theta_{\min}}{2}$, it is clear that 
$\Big|\hat y_j-y_j\Big|< \frac{d_{\min}}{2}$.
On the other hand, by (\ref{equ:angleinequ4}) 
\[
\Big|\hat y_j-y_j\Big|\leq \frac{4n}{T\Omega}\Big|\hat \theta_j -\theta_j\Big|. 
\]	
Using (\ref{equ:upperboundsupportlimithm1equ7}), (\ref{equ:angleinequ3}), and (\ref{equ:suppestimate2}), a direct calculation shows that 
\begin{align*}
\Big|\hat y_j-y_j\Big|\leq \frac{C(n)}{T\Omega} (\frac{\pi}{T\Omega d_{\min}})^{2n-2} \frac{\sigma}{m_{\min}}, 
\end{align*}
where $C(n)=n2^{6n-3}e^{2n}\pi^{-\frac{1}{2}}$. This completes the proof.
\end{proof}

\subsection{Proof of Theorem \ref{thm:trackvelocitiesnumberlimit0}}
\begin{proof} Note that for the $n$ different $\vect v_j$'s, we have at most $\frac{n(n-1)}{2}$ different $\vect u_{pq}= \tau \vect v_p -\tau \vect v_q, p<q$. By Lemma \ref{lem:spaceangleestimate5}, we can find a one-dimensional subspace $S$ so that 
\[
\btwonorm{\mathcal P_S(\vect u_{pq})} \geq \frac{||\vect{u}_{pq}||_2}{(\pi/2)^{d-1}}\Big(\frac{\pi}{n(n-1)}\Big)^{\xi(d-1)}\geq \frac{8.8e\pi}{T\Omega}\Big(\frac{\sigma}{m_{\min}}\Big)^{\frac{1}{2n-2}},
\]
where the last inequality is by the separation condition in the theorem. Thus we can find a unit vector $\vect v\in \mathbb R^{d}$, so that 
\begin{equation}\label{equ:prooftrackvelocitynumberequ1}
\babs{\vect v^\top\vect u_{pq}}\geq \frac{8.8e\pi}{T\Omega}\Big(\frac{\sigma}{m_{\min}}\Big)^{\frac{1}{2n-2}}.
\end{equation}
We consider the measurements at $\vect \omega = \Omega \vect v$ that  
\[
\vect Y_t(\vect v\Omega) = \sum_{j=1}^n a_je^{i(\vect y_j^\top+t\tau \vect v_j^\top)\vect v \Omega} + \vect W_t(\vect v \Omega), \quad t=0, \cdots, T. 
\]
Let $b_j = a_j e^{i\vect y_j^\top \vect v \Omega}$ and $y_j =\tau  \vect v_j^\top \vect v$ and $\vect W(t) = \vect W_t(\vect v \Omega)$, the measurement can be written as 
\[
\vect Y(t) = \sum_{j=1}^n b_j e^{iy_j \Omega t} + \vect W(t) \quad  t=0, \cdots, T,
\]	
with $|\vect W(t)|< \sigma$. Note also that (\ref{equ:prooftrackvelocitynumberequ1}) implies $\min_{p\neq q} |y_p-y_q| \geq \frac{8.8e\pi}{T\Omega}\Big(\frac{\sigma}{m_{\min}}\Big)^{\frac{1}{2n-2}}$ and $\tau \vect v_j\in B_{\frac{(n-1)\pi}{T\Omega}}^d(\vect 0)$ implies $y_j\in [-\frac{(n-1)\pi}{T\Omega}, \frac{(n-1)\pi}{T\Omega}]$. Thus by Theorem \ref{thm:onednumberlimit1}, there is no $k<n$ $\hat y_j$'s so that 
\[
|\sum_{j=1}^n \hat b_j e^{i\hat y_j \Omega t}- \vect Y(t)|<\sigma, \quad t =0, \cdots, T. 
\]
This implies there does not exist any $\sigma$-admissible parameter set of \,$\mathbf Y_t$'s with less than $n$ elements.
\end{proof}

\subsection{Proof of Theorem \ref{thm:trackvelocitiessupplimit0}}
\begin{proof}
\textbf{Step 1.} Note that we have at most $\frac{n(n-1)}{2}$ different $\vect u_{pq} = \tau \vect v_p -\tau \vect v_q, p<q$. The separation condition in this theorem means
\[
\min_{p<q}\btwonorm{\vect u_{pq}} \geq \frac{11.76e \pi 4^{d-1}\Big((n+2)(n+1)/2\Big)^{\xi(d-1)}}{T\Omega} (\frac{\sigma}{m_{\min}})^{\frac{1}{2n-1}}.
\]
For a subspace $S$ of $\mathbb R^d$ and $\vect v\in S$, we denote $\vect v^{\perp}(S)$ the orthogonal complement space of $\vect v$ in $S$. Let 
\begin{align}\label{equ:proofvelosuppequ1}
&\theta_{k}=\frac{\pi}{4}\Big(\frac{2}{(n+1)(n+2)}\Big)^{\frac{1}{k}}, \ k=1, \cdots, d-1,\nonumber \\
&d_{\min, k} = \frac{11.76\pi e 4^{k-1} ((n+2)(n+1)/2)^{\xi(k-1)}}{T\Omega }\Big(\frac{\sigma}{m_{\min}}\Big)^{\frac{1}{2n-1}}, \ k=1, \cdots, d-1.
\end{align}
By Lemma \ref{lem:spaceangleestimate7}, in $\mathbb R^d$ we can construct $n+1$ $(d-1)$-dimensional spaces $S_{j_{d-1}}^{d-1}= \vect v_{j_{d-1}}^\perp(\mathbb R^d), j_{d-1}=1,\cdots,n+1$ with $0\leq \vect v_{p_{d-1}}^\top\vect v_{j_{d-1}}\leq \cos(\theta_{d-1}), p_{d-1}\neq j_{d-1}$, so that for all $\vect u_{pq}$'s we have 
\[
\btwonorm{\mathcal P_{S_{j_{d-1}}^{d-1}}(\vect u_{pq})} \geq \frac{||\vect{u}_{pq}||_2}{4}\Big(\frac{2}{(n+1)(n+2)}\Big)^{\frac{1}{d-1}}\geq \frac{11.76e \pi 4^{d-2}\Big((n+2)(n+1)/2\Big)^{\xi(d-2)}}{T\Omega} (\frac{\sigma}{m_{\min}})^{\frac{1}{2n-1}}.
\]
Since each $S_{j_{d-1}}^{d-1}$ is a $(d-1)$-dimensional space, applying Lemma \ref{lem:spaceangleestimate7} again for $S_{j_{d-1}}^{d-1}$ and $\mathcal P_{S_{j_{d-1}}^{d-1}}(\vect u_{pq})$'s, for each $S_{j_{d-1}}^{d-1}$ we can construct $n+1$ $(d-2)$-dimensional subspaces $S_{j_{d-2},j_{d-1}}^{d-2} = \vect v_{j_{d-2},j_{d-1}}^\perp(S_{j_{d-1}}^{d-1}),j_{d-2}=1, \cdots, n+1$ with $0\leq \vect v_{p_{d-2}, j_{d-1}}^\top\vect v_{j_{d-2}, j_{d-1}}\leq \cos(\theta_{d-2}), p_{d-2}\neq j_{d-2}$, so that for all $\vect u_{pq}$'s we have 
\begin{align*}
\btwonorm{\mathcal P_{S_{j_{d-2},j_{d-1}}^{d-2}}(\vect u_{pq})} = &\btwonorm{\mathcal P_{S_{j_{d-2},j_{d-1}}^{d-2}}\Big(\mathcal P_{S_{j_{d-1}}^{d-1}}(\vect u_{pq})\Big)} \geq  \frac{||\vect{u}_{pq}||_2}{4^2}\Big(\frac{2}{(n+1)(n+2)}\Big)^{\frac{1}{d-1}+\frac{1}{d-2}}\\
\geq & \frac{11.76e \pi 4^{d-3}\Big((n+2)(n+1)/2\Big)^{\xi(d-3)}}{T\Omega} (\frac{\sigma}{m_{\min}})^{\frac{1}{2n-1}},
\end{align*}
where the first equality is because $S_{j_{d-2},j_{d-1}}^{d-2}$ is a subspace of $S_{j_{d-1}}^{d-1}$. We can continue the process and construct $n+1$ subspaces of $S_{j_{d-2},j_{d-1}}^{d-2}, S_{j_{d-3}, j_{d-2},j_{d-1}}^{d-3}, \cdots, S_{j_{2}, j_{3},\cdots, j_{d-1}}^{2}$, respectively. For each two dimensional space $S_{j_{2}, j_{3}, \cdots, j_{d-1}}^2$, we can construct $n+1$ one-dimensional subspaces $S_{j_1, j_{2}, j_{3}, \cdots, j_{d-1}}^1= \vect v_{j_1, j_{2}, j_{3}, \cdots, j_{d-1}}^{\perp}(S_{j_{2}, j_{3}, \cdots, j_{d-1}}^2), j_1=1, \cdots, n+1$ with $0\leq \vect v_{p_{1},j_2, \cdots, j_{d-1}}^\top\vect v_{j_1, j_2, \cdots, j_{d-1}}\leq \cos(\theta_{1}), p_{1}\neq j_{1}$, so that for all $\vect u_{pq}$'s we have 
\[
\btwonorm{\mathcal P_{S_{j_1, j_{2}, j_{3}, \cdots, j_{d-1}}^{1}}(\vect u_{pq})} \geq \frac{||\vect{u}_{pq}||_2}{4^{d-1}}\Big(\frac{2}{(n+1)(n+2)}\Big)^{\xi(d-1)}\geq \frac{11.76e\pi}{T\Omega}\Big(\frac{\sigma}{m_{\min}}\Big)^{\frac{1}{2n-1}}. 
\]
Thus for each $\{j_2, \cdots, j_{d-1}\}$, we can find $n+1$ unit vectors $\vect q_j$'s with $0\leq |\vect q_{p}^\top\vect q_{j}|\leq \cos(\theta_{1}), p\neq j$, so that 
\begin{equation}\label{equ:Proofvelovalueequ1}
\mathcal P_{S_{j_1, j_{2}, j_{3}, \cdots, j_{d-1}}^{1}}(\vect u_{pq}) = \vect q_j^\top \vect u_{pq}, \text{\ and \ }
\babs{\vect q_j^\top \vect u_{pq}}\geq \frac{11.76e\pi}{T\Omega}\Big(\frac{\sigma}{m_{\min}}\Big)^{\frac{1}{2n-1}}.
\end{equation}
We consider these $\vect q_j$'s in Step 2. 

\textbf{Step 2.}
Without loss of generality, we first consider $\vect q_1$. We consider the measurements at $\vect \omega = \Omega \vect q_1$ that  
\[
\vect Y_t(\vect q_1 \Omega) = \sum_{j=1}^n a_je^{i(\vect y_j^\top+t\tau \vect v_j^\top)\vect q_1 \Omega} + \vect W_t(\vect q_1 \Omega), \quad t=0, \cdots, T. 
\]
Let $b_j = a_j e^{i\vect y_j^\top \vect q_1 \Omega}$, $y_j = \tau \vect v_j^\top \vect q_1$ and $\vect W(t) = \vect W_t(\vect q_1 \Omega)$, the measurement can be written as 
\[
\vect Y(t) = \sum_{j=1}^n b_j e^{iy_j \Omega t} + \vect W(t), \quad  t=0, \cdots, T,
\]	
with $|\vect W(t)|< \sigma$. By (\ref{equ:Proofvelovalueequ1}) we have 
\[
\min_{p\neq q} \babs{y_p-y_q} \geq \frac{11.76e\pi}{T\Omega}\Big(\frac{\sigma}{m_{\min}}\Big)^{\frac{1}{2n-1}}.
\]
On the other hand, the measurement constraint  
\[
\babs{\sum_{j=1}^n \hat a_je^{i(\vect {\hat y}_j^\top + t\tau \vect {\hat v}_j^\top)\vect q_1\Omega}-\vect Y_t(\vect q_1\Omega)}<\sigma, 
\]
can be written as 
\[
\babs{\sum_{j=1}^n \hat b_j e^{i \hat y_j t}-\vect Y(t)}<\sigma, 
\]
where $\hat b_j = \hat a_j e^{i\vect {\hat y}_j^\top \vect q_1 \Omega}$ and $\hat y_j = \tau \vect {\hat v}_j^\top \vect q_1$. Note that $|\tau \vect v_j^\top\vect q_1| \leq \frac{(n-1)\pi}{T\Omega}$ and $|\tau \vect {\hat v}_j^\top\vect q_1| \leq \frac{(n-1)\pi}{T\Omega}$ since $\tau \vect {\hat v}_j, \tau \vect v_j \in B_{\frac{(n-1)\pi}{T\Omega}}^d(\vect 0)$. By Theorem \ref{thm:onedsupplimit1}, we have after reordering $\hat y_j$'s, 
\[
\Big|\hat y_j - y_j\Big|< \frac{1}{2}d_{\min,1},
\]
where $d_{\min, 1}$ is defined as in (\ref{equ:proofvelosuppequ1}), and 
\[
\Big|\hat y_j - y_j\Big|\leq  \frac{C(1,n)}{T\Omega}\Big(\frac{\pi}{d_{\min,1}T\Omega}\Big)^{2n-2}\frac{\sigma}{m_{\min}},
\]
where $C(1,n)=n2^{6n-3}e^{2n}\pi^{-\frac{1}{2}}$. Because we have $n+1$ $\vect q_j$'s, we have $n+1$ permutations $\tau_j$'s of $\{1, \cdots, n\}$ so that 
\[
\babs{\tau \vect {\hat v}_{\tau_{j}(p)}^\top\vect q_j- \tau \vect v_p^\top\vect q_j}< \frac{1}{2}d_{\min,1}, \quad j =1, \cdots, n+1,
\]
and 
\[
\babs{\tau \vect {\hat v}_{\tau_{j}(p)}^\top\vect q_j- \tau \vect v_p^\top\vect q_j}\leq \frac{C(1,n)}{T\Omega}\Big(\frac{\pi}{d_{\min,1}T\Omega}\Big)^{2n-2}\frac{\sigma}{m_{\min}}, \quad j =1, \cdots, n+1.
\]
Since we have $n+1$ $\vect q_j$'s but only $n$ $\vect {\hat v}_p$'s, by the pigeonhole principle, for each $\vect v_p$, there exist $\vect {\hat v}_{\tau_{j_1}(p)}=\vect {\hat v}_{\tau_{j_2}(p)} = \vect {\hat v}_{p'}$ so that 
\[
\babs{\tau \vect {\hat v}_{p'}^\top\vect q_{j_t}- \tau \vect v_p^\top\vect q_{j_t}}< \frac{1}{2}d_{\min,1}, \ t=1,2,
\]
and 
\[
\babs{\tau \vect {\hat v}_{p'}^\top\vect q_{j_t}- \tau \vect v_p^\top\vect q_{j_t}}\leq  \frac{C(1,n)}{T\Omega}\Big(\frac{\pi}{d_{\min,1}T\Omega}\Big)^{2n-2}\frac{\sigma}{m_{\min}}, \ t=1,2. 
\]
Thus we have two $S_{{j_1}_t,j_2, \cdots, j_{d-1}}^1, t=1,2$, so that 
\[
\babs{\mathcal P_{S_{{j_1}_t,j_2, \cdots, j_{d-1}}^1}(\tau \vect {\hat v}_{p'}- \tau \vect v_p)}< \frac{1}{2}d_{\min,1}, \ t=1,2,
\]
and 
\[
\babs{\mathcal P_{S_{{j_1}_t,j_2, \cdots, j_{d-1}}^1}(\tau \vect {\hat v}_{p'}- \tau \vect v_p)}\leq  \frac{C(1,n)}{T\Omega}\Big(\frac{\pi}{d_{\min,1}T\Omega}\Big)^{2n-2}\frac{\sigma}{m_{\min}}, \ t=1,2. 
\]
From the results obtained in Step 1, $S_{{j_1}_t,j_2, \cdots, j_{d-1}}^1=\vect v_{{j_1}_t, j_{2}, j_{3}, \cdots, j_{d-1}}^{\perp}(S_{j_{2}, j_{3}, \cdots, j_{d-1}}^2), t=1,2,$ are both one-dimensional subspaces of $S_{j_2, \cdots, j_{d-1}}^2$ and $0\leq \vect v_{{j_1}_1,j_2, \cdots, j_{d-1}}^\top\vect v_{{j_1}_2, j_2, \cdots, j_{d-1}}\leq \cos(\theta_{1})$. By Lemma \ref{lem:spaceangleestimate6}, we have 
\[
\btwonorm{\mathcal P_{S_{j_2, \cdots, j_{d-1}}^2}(\tau \vect {\hat v}_{p'} - \tau \vect v_p) } < \frac{1}{2} d_{\min, 2}, \quad  \btwonorm{\mathcal P_{S_{j_2, \cdots, j_{d-1}}^2}(\tau \vect {\hat v}_{p'} - \tau \vect v_p) }\leq \frac{C(2, n)}{T\Omega}\Big(\frac{\pi}{d_{\min,2}T\Omega}\Big)^{2n-2}\frac{\sigma}{m_{\min}},
\]
where 
\[
C(2, n)=\big(4((n+2)(n+1)/2)^{\xi(1)}\big)^{(2n-1)}n2^{6n-3}e^{2n}\pi^{-\frac{1}{2}}.
\]
Since we do not specify the $\{j_2, \cdots, j_{d-1}\}$, for fixed $\{j_3, \cdots, j_{d-1}\}$, the above results hold for all $\{j_2, j_3, \cdots, j_{d-1}\}$, $j_2=1, \cdots, n+1$ with that the $p'$ is related to $j_2$. Again from Step 1, for fixed $\{j_3, \cdots, j_{d-1}\}$ we have $S_{j_{2}, j_{3}, \cdots, j_{d-1}}^2= \vect v_{j_{2}, j_{3}, \cdots, j_{d-1}}^{\perp}(S_{j_{3}, \cdots, j_{d-1}}^3), j_2=1, \cdots, n+1$ with $0\leq \vect v_{p_2, j_3, \cdots, j_{d-1}}^\top\vect v_{j_2, j_3, \cdots, j_{d-1}}\leq \cos(\theta_{2}), p_{2}\neq j_{2}$. Similar to the above arguments, since we have $n+1$ $j_2$'s while only $n$ $\vect{\hat v}_q$'s, by the pigeonhole principle, for each $\vect v_p$ we can find $\vect {\hat v}_{p'}$ so that   
\begin{align*}
&\btwonorm{\mathcal P_{S_{{j_2}_t, \cdots, j_{d-1}}^2}(\tau \vect {\hat v}_{p'} - \tau \vect v_p) } < \frac{1}{2} d_{\min, 2}, \\ 
&\btwonorm{\mathcal P_{S_{{j_2}_t, \cdots, j_{d-1}}^2}(\tau \vect {\hat v}_{p'} - \tau \vect v_p) }\leq \frac{C(2, n)}{T\Omega}\Big(\frac{\pi}{d_{\min,2}T\Omega}\Big)^{2n-2}\frac{\sigma}{m_{\min}}, \ t=1,2,
\end{align*}
where 
\[
C(2, n)=\big(4((n+2)(n+1)/2)^{\xi(1)}\big)^{(2n-1)}n2^{6n-3}e^{2n}\pi^{-\frac{1}{2}}.
\]
By Lemma \ref{lem:spaceangleestimate6}, we have 
\[
\btwonorm{\mathcal P_{S_{j_3, \cdots, j_{d-1}}^3}(\tau \vect {\hat v}_{p'} - \tau  \vect v_p) } < \frac{1}{2} d_{\min, 3}, \quad  \btwonorm{\mathcal P_{S_{j_3, \cdots, j_{d-1}}^3}(\tau \vect {\hat v}_{p'} - \tau \vect v_p) }\leq \frac{C(3, n)}{T\Omega}\Big(\frac{\pi}{d_{\min,3}T\Omega}\Big)^{2n-2}\frac{\sigma}{m_{\min}},
\]
where 
\[
C(3, n)=\big(4^2((n+2)(n+1)/2)^{\xi(2)}\big)^{(2n-1)}n2^{6n-3}e^{2n}\pi^{-\frac{1}{2}}.
\]
Thus, by continuing the process, there exists $\vect {\hat v}_{p'}$ so that
\[
\btwonorm{\tau \vect {\hat v}_{p'} - \tau \vect v_p} < \frac{1}{2} d_{\min, d}, \quad  \btwonorm{\tau \vect {\hat v}_{p'} - \tau \vect v_p}\leq \frac{C(d, n)}{T\Omega}\Big(\frac{\pi}{d_{\min,d}T\Omega}\Big)^{2n-2}\frac{\sigma}{m_{\min}},
\]
where 
\[
C(d, n)=\big(4^{d-1}((n+2)(n+1)/2)^{\xi(d-1)}\big)^{(2n-1)}n2^{6n-3}e^{2n}\pi^{-\frac{1}{2}}.
\]
Since $\min_{p\neq q}||\vect v_p-\vect v_q||_2\geq d_{\min, d}$, for each $\vect v_p$ there exists one and only one $\vect {\hat v}_{p'}$ satisfying the above condition. Thus after reordering $\vect {\hat v}_{p}$'s we have
\[
\btwonorm{\tau \vect {\hat v}_{p} - \tau \vect v_p} < \frac{1}{2} d_{\min, d}, \  \btwonorm{\tau \vect {\hat v}_{p} - \tau \vect v_p}\leq \frac{C(d, n)}{T\Omega}\Big(\frac{\pi}{d_{\min,d}T\Omega}\Big)^{2n-2}\frac{\sigma}{m_{\min}}, \ 1\leq p\leq n.
\]
This completes the proof. 
\end{proof}

\bibliographystyle{plain}
\bibliography{references_final}	

\end{document}